%% file: arXiv.tex
\documentclass[a4paper,USenglish, cleveref,nameinlink,autoref,thm-restate,authorcolumns]{lipics-v2021}

\newcommand{\removedforshortening}[1]{\xspace}
\newcommand{\rfs}[1]{\xspace}

\graphicspath{{./figures/}}

\newtheorem{reductionrule}{Rule}
\Crefname{reductionrule}{Rule}{Rules}

\newcommand{\PDE}{{\sc PDE}\xspace}
\newcommand{\PDEinF}{{\sc PDEinF}\xspace}
\newcommand{\PDEoutF}{{\sc PDEoutF}\xspace}
\newcommand{\threesat}{{\sc 3-SAT}\xspace}

\newcommand{\monthreesat}{{\sc Monotone Planar 3-SAT}\xspace}
\newcommand{\pt}{P$_2$-pendant\xspace}

\newcommand{\true}{\texttt{true}\xspace}
\newcommand{\false}{\texttt{false}\xspace}

\DeclareMathOperator{\orient}{orient}
\DeclareMathOperator{\interior}{interior}
\DeclareMathOperator{\nocross}{NoCross}
\DeclareMathOperator{\cyclic}{cyclic}

\usepackage{microtype}
\usepackage{gensymb}
\usepackage{slantsc}
\usepackage{lineno}

\usepackage{paralist}

\definecolor{defblue}{rgb}{0.121,0.47,0.705}
\DeclareTextFontCommand{\emph}{\color{defblue}\em}

\usepackage[most,many,breakable]{tcolorbox}
\usepackage{xcolor}
\usepackage{varwidth}
\definecolor{lipicsblue}{rgb}{0.08235294118,0.3098039216,0.537254902}
\definecolor{lipicsyellow}{HTML}{e6b919}
\definecolor{linkblue}{rgb}{0.098,0.098,0.4392}
\definecolor{ourgreen}{rgb}{0.509,0.745,0.235}
\definecolor{indianred}{rgb}{0.804,0.361,0.361}
\definecolor{indianred1}{rgb}{1,0.416,0.416}
\definecolor{indianred3}{rgb}{0.804,0.333,0.333}
\definecolor{orangered}{rgb}{1,0.271,0}
\definecolor{coral1}{rgb}{1,0.447,0.337}
\definecolor{rosybrown2}{rgb}{0.933,0.231,0.231}
\definecolor{aquamarine4}{rgb}{0.271,0.545,0.455}
\definecolor{chartreuse3}{rgb}{0.4,0.804,0}
\definecolor{mediumpurple3}{rgb}{0.537,0.408,0.804}
\definecolor{mediumvioletred}{rgb}{0.78,0.082, 0.522}
\definecolor{realred}{rgb}{1,0,0}

\hypersetup{colorlinks=true,
	linkcolor=aquamarine4,
	anchorcolor=lipicsblue,
	citecolor=lipicsblue,
	filecolor=lipicsblue,
	menucolor=lipicsblue,
	urlcolor=lipicsblue,
	bookmarksopen=true,
	bookmarksopenlevel=2,
	bookmarksnumbered=true,
	plainpages=false,
}

\newtcolorbox{Definition}[2]{enhanced,
	before skip=2mm,after skip=2mm, colback=lipicsyellow!5!white,colframe=lipicsyellow!100!white,boxrule=0.5mm,
	attach boxed title to top left={xshift=1cm,yshift*=1mm-\tcboxedtitleheight}, varwidth boxed title*=-3cm,
	boxed title style={frame code={
			\path[fill=tcbcolback]
			([yshift=-1mm,xshift=-1mm]frame.north west)
			arc[start angle=0,end angle=180,radius=1mm]
			([yshift=-1mm,xshift=1mm]frame.north east)
			arc[start angle=180,end angle=0,radius=1mm];
			\path[left color=tcbcolback!60!white,right color=tcbcolback!60!white,
			middle color=tcbcolback!80!white]
			([xshift=-2mm]frame.north west) -- ([xshift=2mm]frame.north east)
			[rounded corners=1mm]-- ([xshift=1mm,yshift=-1mm]frame.north east)
			-- (frame.south east) -- (frame.south west)
			-- ([xshift=-1mm,yshift=-1mm]frame.north west)
			[sharp corners]-- cycle;
		},interior engine=empty,
	},
	fonttitle=\bfseries,
	colbacktitle=lipicsyellow!100!white, title={#1}}

\Crefname{section}{Sect.}{Sects.}
\Crefname{figure}{Fig.}{Figs.}
\Crefname{observation}{Observation}{Observations}

\definecolor{myg}{RGB}{56, 140, 70}
\tcbuselibrary{theorems,skins,hooks}
\newtcbtheorem[number within=section]{Theorem}{DP}
{%
	enhanced,
	breakable,
	colback = myg!10,
	frame hidden,
	boxrule = 0sp,
	borderline west = {2pt}{0pt}{myg},
	sharp corners,
	detach title,
	before upper = \tcbtitle\par\smallskip,
	coltitle = myg!85!black,
	fonttitle = \bfseries\sffamily,
	description font = \mdseries,
	separator sign none,
	segmentation style={solid, myg},
	title={#1}
}
{th}

\tcbuselibrary{theorems,skins,hooks}
\newtcbtheorem[number within=section]{Complexity}{Complexity}
{%
	enhanced,
	breakable,
	colback = lipicsyellow!10,
	frame hidden,
	boxrule = 0sp,
	borderline west = {2pt}{0pt}{lipicsyellow},
	sharp corners,
	detach title,
	before upper = \tcbtitle\par\smallskip,
	coltitle = lipicsyellow!85!black,
	fonttitle = \bfseries\sffamily,
	description font = \mdseries,
	separator sign none,
	segmentation style={solid, lipicsyellow},
	title={#1}
}
{th}

\usepackage{graphicx} 
\usepackage{hyperref}
\usepackage[textsize=footnotesize]{todonotes}

\usepackage{amsmath}
\usepackage{caption}
\usepackage{xspace}
\usepackage{complexity}
\usepackage{braket}
\usepackage{amsmath,amsfonts,amssymb}

\usepackage{algorithm}
\usepackage{algpseudocode}
\usepackage[linewidth=1pt]{mdframed}
\usepackage{breqn}

\DeclareMathOperator{\vc}{vc}

\hideLIPIcs

\usepackage{url} 
\usepackage{hyperref}
\usepackage{xcolor,soul}

\input{fancyProblem}

\title{Extending Biconnected Straight-Line Planar Drawings}

\author{Giordano Andreola}{Roma Tre University, Rome, Italy
\and
\url{https://compunet.ing.uniroma3.it/\#!/people/andreola}
}{giordano.andreola@uniroma3.it}{https://orcid.org/0009-0003-7406-3514}{}

\author{Susanna Caroppo}{Roma Tre University, Rome, Italy
\and
\url{https://compunet.ing.uniroma3.it/\#!/people/caroppo}
}{susanna.caroppo@uniroma3.it}{https://orcid.org/0009-0001-4538-8198}{}

\author{Giordano {Da Lozzo}}{Roma Tre University, Rome, Italy
\and
\url{https://uniroma3.gitlab.io/compunet/gdl/}
}{giordano.dalozzo@uniroma3.it}{https://orcid.org/0000-0003-2396-5174}{}

\author{Marco {D'Elia}}{Roma Tre University, Rome, Italy
\and
\url{https://compunet.ing.uniroma3.it/\#!/people/delia}
}{marco.delia@uniroma3.it}{https://orcid.org/0009-0008-6266-3324}{}

\author{Giuseppe {Di Battista}}{Roma Tre University, Rome, Italy
\and
\url{https://compunet.ing.uniroma3.it/\#!/people/gdb}
}{giuseppe.dibattista@uniroma3.it}{https://orcid.org/0000-0003-4224-1550}{}

\author{Fabrizio Frati}{Roma Tre University, Italy
\and
\url{https://compunet.ing.uniroma3.it/\#!/people/frati}
}{fabrizio.frati@uniroma3.it}{https://orcid.org/0000-0001-5987-8713}{}

\author{Fabrizio Grosso}{CeDiPa - University of Perugia, Perugia, Italy
\and
\url{https://compunet.ing.uniroma3.it/\#!/people/grosso}
}{fabrizio.grosso@unipg.it}{https://orcid.org/0000-0002-5766-4567}{}

\author{Maurizio Patrignani}{Roma Tre University, Rome, Italy
\and
\url{https://compunet.ing.uniroma3.it/\#!/people/titto}
}{maurizio.patrignani@uniroma3.it}{https://orcid.org/0000-0001-9806-7411}{}

\authorrunning{Andreola, Caroppo, {Da Lozzo}, {D'Elia}, {Di Battista}, Frati, Grosso, Patrignani}

\Copyright{Giordano Andreola, Susanna Caroppo, Giordano {Da Lozzo}, Marco {D'Elia}, Giuseppe {Di Battista}, Fabrizio Frati, Fabrizio Grosso, and Maurizio Patrignani}

\ccsdesc[500]{Theory of computation~Computational geometry}
\ccsdesc[500]{Theory of computation~Design and analysis of algorithms}
\ccsdesc[500]{Mathematics of computing~Graph algorithms}

\keywords{Planar Graphs, Straight-line Drawings, Extension Problems, FPT, Fixed Embedding}

\funding{The seventh author was supported by Ce.Di.Pa. - PNC Programma unitario di interventi per le aree  del terremoto del 2009-2016 - Linea di intervento 1 sub-misura B4 - "Centri di ricerca per l'innovazione" CUP J37G22000140001.}

\nolinenumbers

\begin{document}

\maketitle

\begin{abstract}
The \textsc{Partial Drawing Extensibility} problem, for short \PDE, takes as input a triple $\langle G,H,\Gamma_H\rangle$, where $G$ is a planar graph, $H$ is a subgraph of $G$, and $\Gamma_H$ is a straight-line planar drawing of $H$, and asks whether $\Gamma_H$ can be extended to a straight-line planar drawing of $G$. Patrignani [Int. J. Found. Comput. Sci. (2006)] proved that the \PDE problem is \NP-hard, exploiting instances in which $H$ is highly disconnected. In this paper, we study the \PDE problem under the requirement that the initial partial drawing $\Gamma_H$ is biconnected. We show that \PDE remains \NP-hard even for instances in which $H$ is a biconnected graph with faces of bounded size, $G$ is subcubic, and the part of $G$ that is not in $H$ consists of length-$2$ paths.
The complexity of \PDE remains however open when $H$ is connected (or even biconnected) if $G$ has a fixed embedding. 
In this setting both a polynomial-time algorithm or an \NP-hardness proof seem to be elusive targets. 
As a step towards tackling this problem, we study instances of \PDE in which $H$ is biconnected, $G$ has a fixed embedding, and the rest of the graph consists of $p$ length-2 paths, and present an $O(p^2 n)$-time algorithm, a result in sharp contrast with the \NP-hardness of the variable embedding setting.
Moreover, with an approach based on the Existential Theory of the Reals, we show that, if $H$ is biconnected, the problem is FPT parameterized by the vertex cover number of $G$, both in a fixed and in a variable embedding setting.

\end{abstract}


\section{Introduction}\label{se:intro}
We study a problem at the intersection of two classical topics in Graph Drawing: straight-line planar drawings and drawing extensibility.
It is a fundamental result that every planar graph admits a straight-line planar drawing~\cite{fary1948straight,stein1951convex,wagner1936bemerkungen}.
The construction of such drawings has been extensively studied under various constraints; for example, requiring faces to be convex polygons~\cite{chiba1984linear,Tutte-convex-60,Tutte-how-draw-graph-63}, or requiring vertex coordinates to be integral while minimizing the drawing area~\cite{DBLP:journals/combinatorica/FraysseixPP90,DBLP:conf/soda/Schnyder90}.
Graph drawing extension problems involve extending a prescribed partial drawing, with specific properties, of a subgraph $H$ of a graph $G$ to a complete drawing of $G$ while maintaining those properties. Some extension problems are purely topological, like the one of extending a given planar embedding of $H$ to a planar embedding of $G$~\cite{DBLP:journals/talg/AngeliniBFJKPR15,DBLP:journals/comgeo/JelinekKR13}.
%
%
Many extension problems with a more geometric flavor have been studied, including the extension of:
%
%
visibility representations, where vertices are horizontal segments and edges are vertical segments~\cite{DBLP:conf/gd/ChaplickGGKL16,DBLP:journals/algorithmica/ChaplickGGKL18}; orthogonal drawings, where vertices are represented by points and edges are represented by orthogonal chains~\cite{DBLP:conf/gd/AngeliniRP20,DBLP:journals/jgaa/AngeliniRS21, DBLP:journals/jocg/BhoreGKMN23};
level and ordered level drawings, where vertices lie on horizontal lines and edges are monotone curves~\cite{DBLP:journals/tcs/BrucknerR25,DBLP:journals/talg/KlemzR19};
%
%
interval representations, where vertices are intervals on a straight line and adjacencies correspond to interval intersections~\cite{DBLP:journals/algorithmica/KlavikKORSSV17,DBLP:journals/algorithmica/KlavikKOSV17}; 
circle representations, where vertices are chords of a circle and edges correspond to intersections of such chords~\cite{BrucknerRS24,DBLP:conf/gd/ChaplickFK13,ChaplickFK19}; rectangular duals, which are contact representations by axis-aligned rectangles, no four of which share a point and whose union is a rectangle~\cite{DBLP:journals/tcs/ChaplickFKKRW22};
upward drawings of digraphs, where edges are monotone curves~\cite{DBLP:journals/comgeo/LozzoBF20}; 
1-planar drawings, where edges are allowed to cross at most once~\cite{DBLP:conf/mfcs/EibenGHKN20, DBLP:conf/icalp/EibenGHKN20}; simple drawings, where every two edges share at most one point (either a crossing point or an endpoint) and no three edges may cross in a single point~\cite{DBLP:conf/gd/ArroyoDP19}; and stack layouts, where vertices are points along a line and edges are semicircles drawn on half-planes delimited by the line~\cite{DBLP:conf/gd/DepianFGN24}. For further results regarding extensions problems refer, e.g., to~\cite{DBLP:journals/dcg/ArroyoKPVSW23,DBLP:conf/icalp/GanianHKPV21,DBLP:journals/algorithmica/GutwengerMW05,DBLP:conf/compgeom/HammH22, DBLP:journals/corr/abs-2412-13092, DBLP:conf/gd/KathederKKPR24}.

In this paper we focus on the \textsc{Partial Drawing Extensibility} problem, in which $G$ is a planar graph and a straight-line planar drawing $\Gamma_H$ of a subgraph $H$ is given; the goal is to extend $\Gamma_H$ to a straight-line planar drawing of $G$. The combinatorial core of this problem has its roots into Tutte's celebrated result~\cite{Tutte-convex-60} proving that every $3$-connected planar graph $G$ has a straight-line planar convex drawing for an arbitrary convex polygon~$\Gamma_H$ representing the cycle $H$ bounding its outer face. Several extensions of this result have been proposed over the years, see, e.g.,~\cite{DBLP:conf/gd/ChambersEGL10,DBLP:journals/jgaa/ChambersEGL12,DBLP:journals/dam/HongN08,DBLP:journals/jocg/OpheldersRSV25}. From a computational complexity perspective, Patrignani~\cite{DBLP:conf/gd/Patrignani05a,p-epsld-06} proved that the \textsc{Partial Drawing Extensibility} problem is \NP-hard. This hardness was subsequently strengthened~\cite{DBLP:conf/gd/LubiwMM18,DBLP:journals/jgaa/LubiwMM22}, by proving that the problem is complete for the Existential Theory of the Reals; for this result, $\Gamma_H$ defines a (not necessarily simple) polygonal region in which a straight-line planar drawing of $G$ has to be constructed (with edges possibly overlapping the boundary of the region). Conversely, the problem is linear-time solvable if $G$ has a fixed planar embedding, $H$ is a biconnected graph, and $\Gamma_H$ is a convex drawing~\cite{DBLP:journals/algorithmica/MchedlidzeNR16}.

\subparagraph{Our Contributions.} 
We study the \textsc{Partial Drawing Extensibility} (\PDE) problem for instances $\langle G,H,\Gamma_H\rangle$ in which $H$ is biconnected. The \NP-hardness proof of Patrignani~\cite{DBLP:conf/gd/Patrignani05a,p-epsld-06} employs instances in which $H$ is highly disconnected and $G$ has a variable embedding. Whereas it is not difficult to modify the proof in~\cite{DBLP:conf/gd/Patrignani05a,p-epsld-06} so that $G$ has a prescribed embedding, the proof also strongly relies on occlusions of visibility lines in $\Gamma_H$ by distinct components of $H$, and a modification of the construction to make $H$ connected appears to be elusive. 

In this paper, we give a new \NP-hardness proof for \PDE on instances such that $H$ is biconnected, its faces have bounded size, and $G$ is subcubic, thus providing a significant strengthening of the 20-year-old proof by Patrignani. The proof exploits the fact that~$G$ has a variable embedding, but holds in the restricted setting in which the ``undrawn'' part of~$G$ only consists of a set of length-$2$ paths. We also prove that, in this setting, if we drop the freedom of choice for the embedding of~$G$ the problem becomes polynomial-time solvable. Namely, we show that, if $G$ has a fixed planar embedding, $H$ is biconnected, and the undrawn part of~$G$ only consists of $p$ length-$2$ paths, then \PDE can be solved in $O(p^2n)$ time.
Furthermore, we show that \PDE is FPT with respect to the vertex cover number of $G$, both at fixed and at variable embedding. Our proofs rely on the following techniques. In the fixed embedding setting, we show that the problem admits a kernel whose size is linear in the vertex cover number of $G$ and that it can be modeled as a first-order formula in the Existential Theory of the Reals (ETR), whose satisfiability can be tested in exponential time using Renegar's decision algorithm for ETR~\cite{Renegar92one,Renegar92two,Renegar92three}.
In the variable embedding setting, we show that the vertices of $G - H$ can be reduced to a subset having a size that is linear in the vertex cover number $k$ of $G$, while preserving equivalence with the original instance, and that there exists a subset of $O(k)$ ``interesting'' faces of $\Gamma_H$ that can be used to host the vertices in $G-H$. These facts, combined with an ETR formulation for the variable embedding setting, yield fixed-parameter tractability.
Full proofs of all the statements can be found in the full version of the paper.


\section{Preliminaries}\label{se:preliminaries}

In this section, we give some preliminaries and basic definitions. For standard concepts and terminology about graphs and their drawings, we refer the reader to~\cite{BattistaETT99,0030488}.


An \emph{embedded graph} is a planar graph with a fixed planar embedding, which determines the rotation scheme at each vertex.
For an embedded graph with planar embedding $\cal E$, an \emph{embedding-preserving drawing} is a drawing that \emph{respects} $\cal E$, i.e., a planar drawing that belongs to $\cal E$. For two points $u$ and $v$ in the plane, or two vertices $u$ and $v$ drawn in the plane, $\ell_{uv}$ denotes the straight line passing through $u$ and $v$, oriented so that $u$ is encountered before $v$.

\subparagraph*{Partial drawing extensibility.}
Let $G$ be a planar graph, $H$ be a subgraph of $G$, and $\Gamma_H$ be a straight-line planar drawing of $H$. A straight-line planar drawing $\Gamma$ of $G$ \emph{extends} $\Gamma_H$ if each vertex of $H$ is assigned to the same point in $\Gamma$ as in $\Gamma_H$. If a straight-line planar drawing of $G$ that extends $\Gamma_H$ exists, then the triple $\langle G,H,\Gamma_H\rangle$ is \emph{extensible}. We can assume that $H$ is an induced subgraph of $G$, as the drawing of the edges in $E(G)\setminus E(H)$ with both end-vertices in $H$ is determined by $\Gamma_H$. Next, we define the problems we study.

\medskip
\noindent{\sc Partial Drawing Extensibility} (\PDE): Given a planar graph $G$, a subgraph $H$ of $G$, and a straight-line planar drawing $\Gamma_H$ of $H$, is $\langle G,H,\Gamma_H\rangle$ extensible?

\medskip
\noindent{\sc Partial Drawing Extensibility in an Internal Face} (\PDEinF): Given a planar graph $G$, a cycle $C$ of $G$, and a simple polygon $\Gamma_C$ representing $C$, does a straight-line planar drawing of $G$ that extends $\Gamma_C$ in which the vertices of $G - C$ lie in the interior of $\Gamma_C$ exist?

\medskip
\noindent{\sc Partial Drawing Extensibility in the Outer Face} (\PDEoutF): Given a planar graph $G$, a cycle $C$ of $G$, and a simple polygon $\Gamma_C$ representing $C$, does a straight-line planar drawing of $G$ that extends $\Gamma_C$ in which the vertices of $G - C$ lie in the exterior of $\Gamma_C$ exist?





If $G$ has a prescribed planar embedding $\cal E$, then \PDE additionally requires that the sought drawing of $G$ extending $\Gamma_H$ respects~$\cal E$.

%

A function~$f:\mathbb{N}^+\rightarrow \mathbb{R}_{\geq 0}$ is \emph{super-additive} if~$f(\sum_i x_i)\geq \sum_i f(x_i)$.

\begin{restatable}{lemma}{lemmafixedembfromfacetograph} \label{le:fixed-emb-from-face-to-graph}
Let $\cal I$ be a family of instances for the \PDE problem such that $H$ is biconnected and $G$ has a fixed planar embedding. Let ${\cal I}_f$ be the family of instances of the
\PDEinF and \PDEoutF problems induced by restricting the instances $\langle G, H,\Gamma_H \rangle \in \cal I$ so that $H$ is a facial cycle of $\Gamma_H$. Also,  let $A_{in}$ and $A_{out}$ be algorithms that solve \PDEinF and \PDEoutF on any instance $I_f \in {\cal I}_f$ in $O(f_{in}(|I_f|))$ and $O(f_{out}(|I_f|))$ time, respectively, such that $f_{in}, f_{out}: \mathbb{N}\rightarrow \mathbb{N}$ are super-additive functions. Then, the \PDE problem can be solved in $O\big(\max\{f_{in}(n),f_{out}(n)\}\big)$ time for any $n$-vertex instance  $I \in {\cal I}$.
\end{restatable}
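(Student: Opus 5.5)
The plan is to decompose the instance along the faces of $\Gamma_H$. Since $H$ is biconnected, every face of $\Gamma_H$ is bounded by a simple cycle. Because $G$ has a fixed planar embedding $\cal E$, the embedding of $H$ inherited from $\cal E$ tells us, for every connected component $K$ of $G-H$, which face of $H$ must contain $K$. The first step checks that this inherited embedding agrees with the one realized by $\Gamma_H$ (same rotation system and same outer face). If it does not, we reject. This check takes linear time by comparing the rotations around each vertex of $H$ and the outer facial cycle.

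Next, for each face $f$ of $\Gamma_H$ with facial cycle $C_f$, we form the graph $G_f$. It consists of $C_f$ together with all components of $G-H$ that $\cal E$ assigns to $f$ and their edges to $C_f$. We also add the edges of $G$ that join two vertices of $C_f$ and are embedded inside $f$; these are chords whose drawing is forced. The embedding of $G_f$ is the restriction of $\cal E$, and the polygon $\Gamma_{C_f}$ is the boundary of $f$ in $\Gamma_H$. If $f$ is an internal face, $\langle G_f, C_f, \Gamma_{C_f}\rangle$ is an instance of \PDEinF; if $f$ is the outer face, it is an instance of \PDEoutF. In both cases it lies in ${\cal I}_f$, because it is the restriction of an instance of $\cal I$ to a facial cycle. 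We claim that $\langle G,H,\Gamma_H\rangle$ is extensible if and only if every such face instance is a yes-instance. For the forward direction, restricting an extension of $\Gamma_H$ to $G_f$ gives an embedding-preserving drawing in which the vertices of $G_f - C_f$ lie inside $f$. For the backward direction, we take the union of the drawings of the $G_f$. Each one is contained in the closure of its own face and agrees with $\Gamma_H$ on $C_f$. Distinct faces have disjoint interiors, so the union is planar, straight-line, and embedding-preserving. The rotation at a vertex $v$ of $H$ is obtained by merging the rotations inside each incident face, and these merge in the order prescribed by $\cal E$ because the faces around $v$ appear in the order given by $\Gamma_H$, which we already checked agrees with $\cal E$.

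The step that needs the most care is the size accounting, which is needed to apply super-additivity. The vertices of $G-H$ are partitioned among the faces. However, a vertex of $H$ lies on the cycles of all its incident faces, so $\sum_f |V(C_f)| = \sum_f \deg_f = O(|E(H)|) = O(n)$ by planarity. Edges are split in the same way: each edge of $H$ lies on exactly two facial cycles, and every other edge belongs to exactly one $G_f$. Hence $\sum_f |I_f| \le c\cdot n$ for a constant $c$. Using super-additivity (which also implies monotonicity, up to the $O$-constants), the total cost is $\sum_f O(\max\{f_{in},f_{out}\}(|I_f|)) = O(\max\{f_{in},f_{out}\}(\sum_f |I_f|))$. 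Replacing $cn$ by $n$ requires $f(cn)=O(f(n))$. I would secure this by measuring $|I_f|$ as the number of vertices plus edges of $G_f$, and absorbing the constant factor $c$ into the $O$-notation, as the statement tacitly does. Finally, the preprocessing (computing the faces of $\Gamma_H$, assigning the components of $G-H$ to faces via $\cal E$, and building all the $G_f$) takes $O(n)$ time. This is dominated by the algorithm's running time, since super-additivity gives $f(n)\ge n\cdot f(1)$, and we can assume $f(1)\ge 1$ (otherwise the algorithms could not even read their input).
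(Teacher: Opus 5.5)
Your decomposition is the right one, and it matches the paper's intended route: one \PDEinF instance per internal face and one \PDEoutF instance for the outer face, solved independently and combined through super-additivity. Your equivalence argument and the linear-time preprocessing are fine. The equivalence holds because the per-face drawings lie in the closures of pairwise interior-disjoint faces, and the rotations merge correctly once the embedding of $H$ inherited from $\mathcal E$ has been checked against $\Gamma_H$.

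The step that fails as written is the final accounting, which you flagged yourself. Super-additivity gives $\sum_f f(|I_f|)\le f(\sum_f |I_f|)\le f(cn)$, but not $f(cn)=O(f(n))$. For example, $f(x)=2^x$ is super-additive on $\mathbb{N}^+$, yet $f(2n)/f(n)=2^n$. Measuring $|I_f|$ by vertices plus edges does not remove $c$, since vertices and edges of $H$ are still repeated over incident faces. So ``absorbing $c$ into the $O$-notation'' is exactly the unjustified step. There is also a smaller slip: $\max\{f_{in},f_{out}\}$ need not be super-additive, so you cannot apply super-additivity to it directly. Either work with $f_{in}+f_{out}$, which is super-additive and at most twice the maximum, or note that $A_{out}$ is called only once, on an instance with at most $n$ vertices.

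The missing idea is to use that each face instance is individually small, not only that their total is $O(n)$. Measure $|I_f|$ by the number of vertices, consistently with the ``$n$-vertex'' in the statement. Then $|I_f|\le n$ for every face, while $\sum_f |I_f| = |V(G)\setminus V(H)|+2|E(H)|\le 7n$ by planarity. Scan the internal faces in any order and cut them greedily into groups, opening a new group whenever the next instance would push the current group's total above $n$. Every group then has total size at most $n$, and any two consecutive groups together exceed $n$, so there are at most $15$ groups. Apply super-additivity inside each group, then monotonicity, which follows from $f(a+b)\ge f(a)+f(b)\ge f(a)$. This gives $\sum_f f_{in}(|I_f|)\le 15\, f_{in}(n)$, and with the single outer-face call you obtain the claimed $O(\max\{f_{in}(n),f_{out}(n)\})$ bound.
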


\medskip
A \emph{$2$-path} is a path of length $2$. A \emph{\pt instance} $\langle G,H,\Gamma_H\rangle$ of \PDE is such that $H$ is biconnected and the subgraph of $G$ to be drawn consists of a set of $2$-paths, that are vertex-disjoint except, possibly, at their end-vertices. More precisely, the edges in $E(G)-E(H)$ induce a set $\mathcal P$ of $2$-paths such that: (i) the extremes of each path in $\mathcal P$ are in $H$ and might be shared by different paths; and (ii) the intermediate vertex of each path in $\mathcal P$ is not in $H$ and belongs to a single path in $\mathcal P$; refer to \cref{fig:p2-pendant}.
Similarly, in a \emph{\pt instance} of \PDEinF or \PDEoutF, the subgraph of $G$ to be drawn consists of a set of $2$-paths.

\begin{figure}
    \centering
    \includegraphics[width=0.5\linewidth]{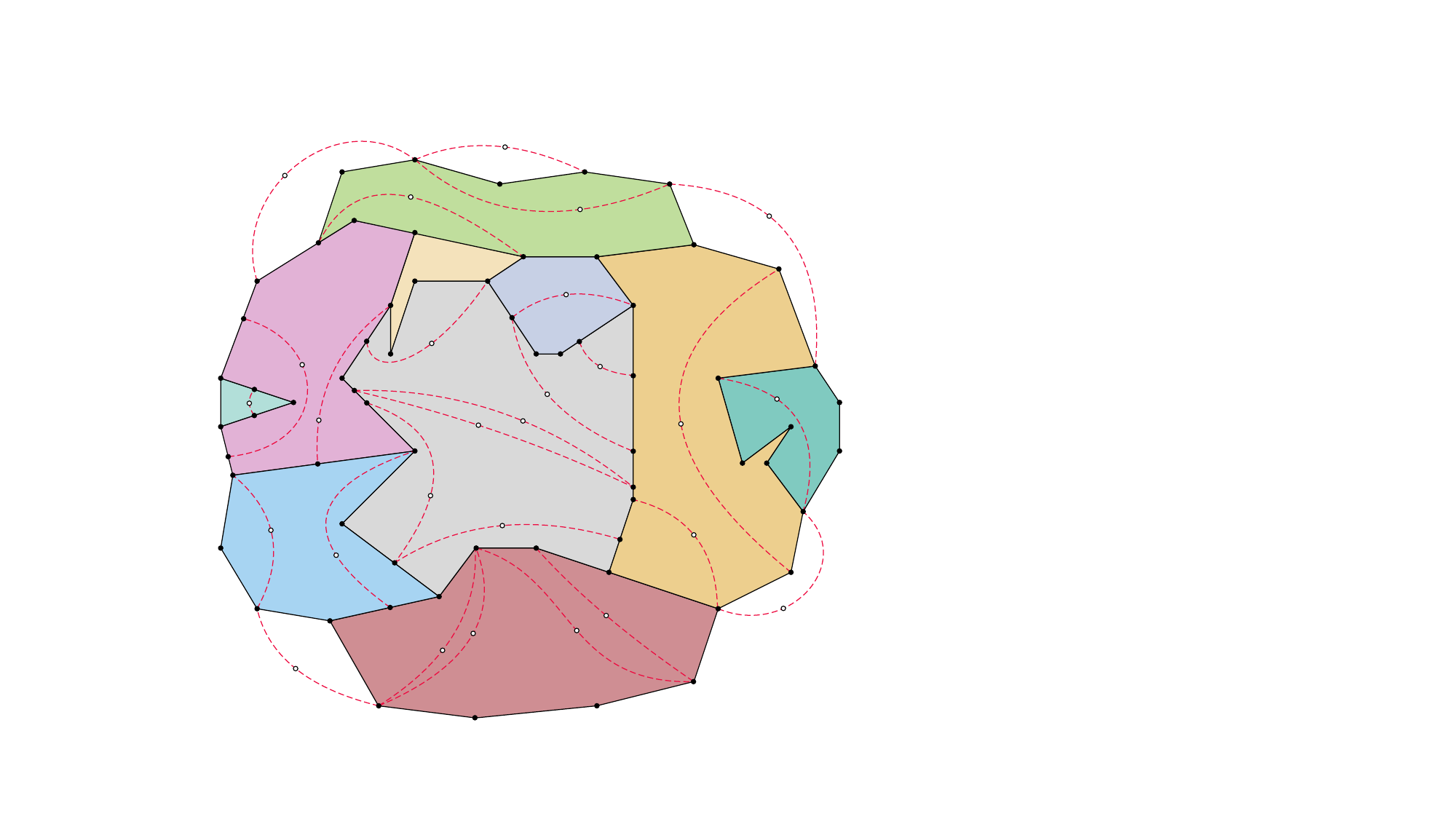}
    \caption{A \pt instance of \PDE. Edges in $E(G)-E(H)$ are represented by dashed lines.}
    \label{fig:p2-pendant}
\end{figure}

\subparagraph{Fixed-parameter tractability.} 
A problem~$\mathcal P$ taking as input an $n$-vertex graph $G$ is said to be \emph{fixed-parameter tractable} (FPT) with respect to a parameter $k$ if there exists an algorithm solving it in time $O(f(k)\cdot poly(n))$, where $f$ is a computable function depending only on $k$.
A \emph{kernelization algorithm} for~$\mathcal P$ is a polynomial-time (in~$n$) procedure that transforms an instance $\langle G,k\rangle$ into an equivalent instance $\langle G’,k’\rangle$, called \emph{kernel}, such that the \emph{size} of the kernel, i.e., the number of vertices of $G’$, and the parameter $k’$ are bounded by computable functions of $k$. The existence of such kernelization implies that $\mathcal P$ is FPT when parameterized by $k$, if the kernel can be solved by a brute-force procedure.
For problems involving geometric aspects, like ours, 
explicit exponential-time solutions for the kernel might be needed.

\subparagraph{Existential theory of the reals.} An \emph{ETR formula} is a first-order existential sentence about the reals, that is, a first-order formula of the form  $\exists x_1,\dots,\exists x_n \Phi(x_1,\dots,x_n)$ in prenex normal form where the prefix only consists of existential quantifiers and $\Phi$ is a quantifier‑free Boolean combination of polynomial equalities/inequalities with rational (or algebraic) coefficients. 
\removedforshortening{The Existential Theory of the Reals problem asks whether an ETR formula is true over $\mathbb{R}$. This decision problem is \NP-hard and lies in PSPACE.  The class $\exists\mathbb{R}$ consists of the decision problems that reduce in polynomial time to ETR. The PSPACE upper bound comes from Canny’s celebrated result~\cite{Canny88,Schaefer2013}.
We will make use of Renegar's decision algorithm for ETR.}

\begin{theorem}[Renegar~\cite{Renegar92one,Renegar92two,Renegar92three}]
  \label{thm:renegar}
  One can decide if an existential sentence $\Phi$ is true or false in time $(L\log L\log\log L) \cdot (P\cdot R)^{O(N)}$,
  where~$N$ is the number of variables in~$\Phi$,~$P$ is the
  number of polynomials in~$\Phi$, $R$ is the maximum total
  degree of the polynomials in~$\Phi$, and~$L$ is the maximum length
  of the binary representation of the coefficients of the polynomials
  in~$\Phi$.
\end{theorem}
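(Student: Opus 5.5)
This is a cited external result, and the paper invokes it as a black box. A self-contained argument would follow the critical-point method underlying Renegar's algorithm. The plan is to reduce truth of $\Phi=\exists x_1\dots\exists x_N\,\phi(x)$ to the following question about the $P$ polynomials $g_1,\dots,g_P\in\mathbb{Q}[x_1,\dots,x_N]$ of degree at most $R$: which \emph{sign conditions} $\sigma\in\{-1,0,+1\}^P$ are realized by some point of $\mathbb{R}^N$? Once the set of realizable sign conditions is known, we evaluate the Boolean skeleton of $\phi$ on each of them, and $\Phi$ is true iff some realizable $\sigma$ satisfies it. So the core task is to compute a finite set of \emph{sample points} meeting every connected component of every realizable sign-condition set.

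First, we reduce sign conditions to algebraic sets. We introduce infinitesimals $\varepsilon,\delta$ and replace each strict or nonstrict condition on $g_i$ by equations $g_i=0$ or $g_i=\pm\varepsilon$, together with a bounding sphere $\sum x_j^2=1/\delta$. This way every component of a sign-condition set contains a component of a bounded smooth real algebraic set defined by at most $N$ of the perturbed polynomials; the bound on the number of selected polynomials comes from general position after perturbation. Second, for each such small family $Q=\sum_{i\in S}q_i^2$ (or a deformation of it), we take the critical points of a generic linear projection on $\{Q=0\}$. These form a zero-dimensional system whose B\'ezout bound is $R^{O(N)}$, and every bounded component contains such a point. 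Third, we solve these zero-dimensional systems symbolically via the $u$-resultant, expressing each critical point by univariate representations with coefficients that are polynomials in $\varepsilon,\delta$. We then take limits as the infinitesimals tend to $0$, using Puiseux/limit arguments, to obtain real algebraic sample points. Summed over the $\binom{P}{\le N}\le P^{O(N)}$ choices of subfamily, this gives $(PR)^{O(N)}$ candidate points.

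Finally, we evaluate the sign of each $g_i$ at each sample point using Sturm/Thom-encoding sign determination on the univariate representations. This costs $(PR)^{O(N)}$ arithmetic operations on numbers whose bit size is polynomial in $L$ and $(PR)^{O(N)}$. For the $L\log L\log\log L$ factor, the plan is to keep all arithmetic operands at bit length $O(L)$ times a quantity independent of $L$, and to use Sch\"onhage--Strassen multiplication; this yields exactly the stated factor in $L$.

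I expect the main obstacle to be the bit-complexity bookkeeping combined with the infinitesimal limit step. Making the perturbations symbolic without blowing up degrees beyond $R^{O(N)}$, and bounding the coefficient growth so that the dependence on $L$ is quasi-linear rather than polynomial, is the delicate part of Renegar's papers. There I would follow his approach of computing generalized characteristic polynomials and doing the computations modulo a fixed precision.
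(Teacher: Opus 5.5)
The paper does not prove this statement. It quotes Renegar's result and uses it only as a black box, so your opening sentence already matches the paper's treatment and is all the paper needs.

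If the rest is meant as a self-contained proof sketch, however, one step fails as written. You let all infinitesimals tend to $0$ to obtain real sample points, and only then evaluate the signs of the $g_i$. Your candidate points lie on the perturbed hypersurfaces $g_i=\pm\varepsilon$ or on the sphere $\sum_j x_j^2=1/\delta$. Any limit of a point with $g_i=\pm\varepsilon$ satisfies $g_i=0$, so the strict sign it was meant to witness is lost. Points on the sphere have no limit in $\mathbb{R}^N$ at all.

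Concretely, take $\exists x\,(x^2-1<0)$. Your candidates are (limits of) $\pm1$, $\pm\sqrt{1\pm\varepsilon}$ and $\pm\delta^{-1/2}$. Every real limit is $\pm1$, where $g_1=0$, so the realizable condition $g_1<0$ is never observed and the sentence is declared false. Likewise $g_1>0$ is lost for $\exists x\,(x^2-1>0)$.

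The repair is to keep $\varepsilon$ and $\delta$ symbolic when evaluating signs. Compute the signs of the $g_i$ at the sample points in the real closed field $\mathbb{R}\langle\varepsilon,\delta\rangle$ of algebraic Puiseux series (e.g.\ $g_1(\sqrt{1-\varepsilon})=-\varepsilon<0$), and transfer realizability to $\mathbb{R}$ by the Tarski transfer principle. Limits may be taken only in auxiliary infinitesimals, such as the deformation parameter for $\sum_{i\in S}q_i^2$, whose removal keeps the point inside the extension of the sign-condition set. Arithmetic in $\mathbb{Q}[\varepsilon,\delta]$ at degree $(PR)^{O(N)}$ costs only another $(PR)^{O(N)}$ factor.

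Three further points need repair.
\begin{itemize}
\item Your claim that at most $N$ polynomials are active rests on general position, which neither a single shared $\varepsilon$ nor unperturbed equalities provide. If $g_2=-g_1$, the hypersurfaces $g_1=\varepsilon$ and $g_2=-\varepsilon$ coincide. In $\mathbb{R}^2$, the sign condition $(0,0,0)$ on $xy$, $x(x-y)$, $y(x-y)$ is realized only at the origin, which is a connected component of the common zero set of all three polynomials but of no two of them. The standard remedy uses independent perturbations and relaxes the equalities too, say $|g_i|\le\gamma$ with a further infinitesimal $\gamma\ll\varepsilon$; it is $\gamma$, not $\varepsilon$, that one sends to $0$.
\item The deformation of $\sum_{i\in S}q_i^2$ is not optional. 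Its gradient vanishes at every point of its zero set, so without deformation the critical-point system contains all of $\{Q=0\}$.
\item The quasi-linear dependence on $L$ is what distinguishes this statement from a generic $\mathrm{poly}(L)\cdot(PR)^{O(N)}$ bound, and your sketch asserts it rather than derives it. You have to check that every intermediate number, including those in the sign-determination step, has bit length $L\cdot(PR)^{O(N)}$, for instance because it is a determinant or subresultant of size $(PR)^{O(N)}$ in the input coefficients. This is the bookkeeping carried out in Renegar's papers.
\end{itemize}
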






\section{Complexity Results for Biconnected \PDE}\label{se:hardness}


\begin{figure}[tb]
    \begin{subfigure}[b]{0.25\textwidth}
      \centering
      \includegraphics[width=1\textwidth,page=1]{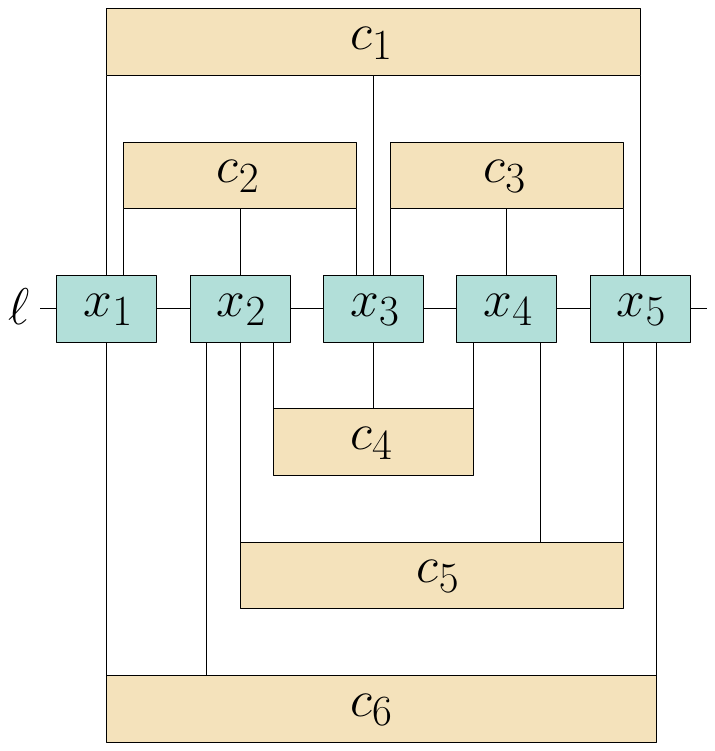}
      \subcaption{}
      \label{fig:variable-clause-graph-drawing-a}
    \end{subfigure}
    \hfill
    \begin{subfigure}[b]{0.70\textwidth}
      \centering
      \includegraphics[width=1\textwidth,page=1]{figures/hardness.pdf}
      \subcaption{}
      \label{fig:variable-clause-graph-drawing-b}
    \end{subfigure}
    \caption{(a) The monotone rectilinear representation $\Gamma_\varphi$ of graph $G_\varphi$ of the instance of \monthreesat $\varphi = c_1 \wedge c_2 \wedge c_3 \wedge c_4 \wedge c_5$ where $c_1 = (x_1 \vee x_3 \vee x_5)$, $c_2 = (x_1 \vee x_2 \vee x_3)$, $c_3 = (x_3 \vee x_4 \vee x_5)$, $c_4 = (\overline{x}_2 \vee \overline{x}_3 \vee \overline{x}_4)$, $c_5 = (\overline{x}_2 \vee \overline{x}_4 \vee \overline{x}_5)$, and $c_6 = (\overline{x}_1 \vee \overline{x}_2 \vee \overline{x}_5)$. (b) A boxed drawing $\overline{\Gamma}_\varphi$ of $G_\varphi$.}
    \label{fig:variable-clause-graph-drawing}
\end{figure}

\begin{figure}[tb]
    \hfill
        \begin{subfigure}[c]{0.22\textwidth}
      \centering
      \includegraphics[height=3\textwidth,page=9]{figures/hardness.pdf}
      \subcaption{}
      \label{fig:variable-gadget-00}
    \end{subfigure}
    \hfill
        \begin{subfigure}[c]{0.22\textwidth}
      \centering
      \includegraphics[height=3\textwidth,page=10]{figures/hardness.pdf}
      \subcaption{}
      \label{fig:variable-gadget-11}
    \end{subfigure}
        \hfill
        \begin{subfigure}[c]{0.22\textwidth}
      \centering
      \includegraphics[height=3\textwidth,page=11]{figures/hardness.pdf}
      \subcaption{}
      \label{fig:variable-transmission-a}
    \end{subfigure}
    \hfill
        \begin{subfigure}[c]{0.22\textwidth}
      \centering
      \includegraphics[height=3\textwidth,page=12]{figures/hardness.pdf}
      \subcaption{}
      \label{fig:variable-transmission-b}
    \end{subfigure}
    \hfill~
    \caption{(a)-(b) \false and \true configurations for a vertical-structure of a variable-gadget. (c)-(d) Two drawings of a transmission-box of a transmission-gadget.}
    \label{fig:variable-gadget-proof}
\end{figure}

\begin{figure}[tb]
    \begin{subfigure}[b]{0.45\textwidth}
      \centering
      \includegraphics[width=1\textwidth,page=2]{figures/hardness.pdf}
      \subcaption{}
      \label{fig:variable-gadget-a}
    \end{subfigure}
    \hfill
    \begin{subfigure}[b]{0.45\textwidth}
      \centering
      \includegraphics[width=1\textwidth,page=3]{figures/hardness.pdf}
      \subcaption{}
      \label{fig:variable-gadget-b}
    \end{subfigure}
    \hfill~
    \caption{A negative drawing (a) and a positive drawing (b) of the variable-gadget.}
    \label{fig:variable-gadget}
\end{figure}


In this section, we study the complexity of \PDE for instances such that $H$ is biconnected.

\begin{restatable}{theorem}{thHardness}\label{th:hardness}
\PDE is \NP-hard even for \pt instances $\langle G,H,\Gamma_H\rangle$ such that $G$ is subcubic, $H$ is biconnected, and $\Gamma_H$ has faces of bounded size.
\end{restatable}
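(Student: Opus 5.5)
We reduce from \monthreesat. Its instances come with a monotone rectilinear representation $\Gamma_\varphi$ of the variable-clause graph $G_\varphi$: variables are segments on a horizontal line, positive clauses lie above it, negative clauses below, and every clause is joined to its three variables by vertical legs, as in \cref{fig:variable-clause-graph-drawing-a}. We first turn $\Gamma_\varphi$ into a boxed drawing $\overline{\Gamma}_\varphi$ on a grid of polynomial size (\cref{fig:variable-clause-graph-drawing-b}). In this drawing each variable, each leg and each clause occupies a region bounded by axis-parallel polygons. We then fill every region with a gadget whose drawn part is a piece of $H$; the gadgets are glued along shared boundary paths, so that $H$ is biconnected and every face of $\Gamma_H$ has $O(1)$ vertices. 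The undrawn part $G-H$ consists only of $2$-paths whose end-vertices are vertices of $H$ of degree $2$. We choose these end-vertices so that every vertex of $G$ has degree at most $3$, which makes the instance a subcubic \pt instance.

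The basic device is a $2$-path $u\,x\,v$ whose end-vertices $u,v$ lie on a common bounded face of $\Gamma_H$ but cannot see each other directly; a reflex vertex blocks the segment $uv$. Because the embedding of $G$ is variable, $x$ may be placed in any face incident to both $u$ and $v$. Within such a face, $x$ must lie in the intersection of the visibility regions of $u$ and $v$, and we shape the face so that this intersection consists of a few small ``pockets''.

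With this device we build three gadgets.

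\begin{itemize}
\item \textbf{Variable-gadget.} It is a chain of vertical-structures (\cref{fig:variable-gadget-00,fig:variable-gadget-11}). Each structure has a $2$-path that can be drawn in exactly one of two pockets, the \false one or the \true one. Consecutive structures share a pocket region, so choosing one pocket in a structure forces the choice in the next. Hence the whole gadget is in either a negative drawing or a positive drawing (\cref{fig:variable-gadget}).
\item \textbf{Transmission-gadget.} It runs along each leg and is a sequence of transmission-boxes (\cref{fig:variable-transmission-a,fig:variable-transmission-b}). The $2$-path in each box either occupies a pocket shared with the previous box or leaves it free, so the state of the variable propagates along the leg.
\item \textbf{Clause-gadget.} It is a face reached by the three legs of the clause, together with one $2$-path. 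This $2$-path can be drawn if and only if at least one incoming leg leaves its pocket free, that is, if and only if at least one literal of the clause is satisfied.
\end{itemize}

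Both directions of the correctness argument then follow. Given a satisfying assignment, we choose the drawing of each variable-gadget accordingly, propagate the states along the legs, and place each clause vertex in a free pocket. Conversely, from an extension we read off the state of each variable-gadget to obtain a truth assignment. The forcing properties of the transmission-gadgets and the clause-gadgets then guarantee that this assignment satisfies every clause.

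The main obstacle is making the forcing rigorous. For each gadget we must prove that every admissible position of each middle vertex lies in one of the intended pockets, over all faces it could be embedded in; this is where variable embedding both helps and threatens us. In addition, the pockets must be exactly the regions seen from both end-vertices. We will attempt to control this by using polygons with reflex corners and explicit coordinates, and we expect to verify the claims through direct visibility-cone computations, keeping face sizes constant. We will also check that the two $2$-paths that compete for a shared pocket cannot both use it without crossing. Our plan is to argue this via a separating segment: each middle vertex must see its end-vertices, and the two resulting visibility segments would intersect.
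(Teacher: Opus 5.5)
Your overall architecture matches the paper's: a reduction from \monthreesat through the boxed drawing $\overline{\Gamma}_\varphi$, with variable-, transmission- and clause-gadgets in which the only freedom is where the middle vertex of each $2$-path goes. However, the variable-gadget as you describe it does not work. With one $2$-path per vertical-structure and consecutive structures sharing \emph{one} pocket region, you create a single conflict between a pocket of structure $i$ and a pocket of structure $i+1$. Such a conflict forbids exactly one of the four combinations, so it encodes one implication, e.g.\ ``structure $i$ is \true $\Rightarrow$ structure $i+1$ is \true''. If structure $i$ uses its non-shared pocket, structure $i+1$ is unconstrained. The chain can therefore sit in a mixed state \false,\dots,\false,\true,\dots,\true, and ``hence the whole gadget is in either a negative or a positive drawing'' does not follow. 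A vertical-structure may carry both an upward and a downward leg. In a mixed state, a positive clause can then be satisfied through $x_i$ at one structure and a negative clause through $\overline{x}_i$ at another, so the direction from an extension to a satisfying assignment fails. The paper addresses exactly this: between consecutive vertical-structures it adds further obstacles and \emph{two} extra $2$-paths, which forbid both mixed combinations and force all structures of a variable to encode the same value. Alternatively, you could keep a one-way chain if all positive legs attach to structures that precede, in the direction in which \true propagates, every structure carrying a negative leg. That would have to be built into the boxed drawing and argued.

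The rest of your plan is deferred rather than wrong, but note what it requires. Your clause-gadget differs from the paper's. The paper uses five $2$-paths in a cascade: $\pi_{i,j}$ is forced to enclose $P_{i,j}$ when $\pi_i$ and $\pi_j$ are both inside, and if $\pi_k$ is inside too, $\pi_{i,j,k}$ would have to enclose $P_{i,j,k}$ and cross the clause-box. This needs only two-input enclosure arguments. Your single $2$-path with three pockets is plausible, but you must show three things:
\begin{itemize}
\item the common visibility region of its end-vertices in that face is exactly the three pockets;
\item \emph{every} admissible drawing of a pushed-in leg path encloses its pocket;
\item with one leg free, the two pushed-in legs can be drawn without crossing the clause path.
\end{itemize}
Two claims about the instance are also unjustified. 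First, choosing $2$-path end-vertices of degree $2$ in $H$ does not bound the degree of the junction vertices of $\overline{\Gamma}_\varphi$ where several boxes meet. These have degree $4$ in the paper, which removes them by small local replacements. Second, the outer face of $\Gamma_H$ has unbounded size unless you pad it with extra vertices and edges, as the paper does.
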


We prove \cref{th:hardness} via a reduction from the \NP-complete problem \monthreesat~\cite{DBLP:journals/ijcga/BergK12}, a variant of \threesat where the Boolean formula $\varphi$ has clauses composed of three positive or three negated literals and the incidence graph $G_\varphi$ of the variables and clauses of $\varphi$ comes with a planar drawing $\Gamma_\varphi$ where all the variables lay along a straight line $\ell$, and clauses with directed (negated) literals lay above (resp.\ below) $\ell$; see, for example, \cref{fig:variable-clause-graph-drawing-a}.


Given an instance $\varphi$ of \monthreesat, we construct an instance $\langle G,H,\Gamma_H\rangle$ of \PDE that is extensible if and only if $\varphi$ is satisfiable.
First, starting from $\Gamma_\varphi$, we construct a \emph{boxed drawing} $\overline{\Gamma}_\varphi$ of $G_\varphi$ as follows (see \cref{fig:variable-clause-graph-drawing-b}). For each variable~$x_i$, $i=1, \dots, n$, let $m_i$ be the maximum between the number of directed literals $x_i$ and the number of negated literals $\overline{x}_i$ in $\varphi$. The variable vertex corresponding to $x_i$ in $\Gamma_\varphi$ is replaced in $\overline{\Gamma}_\varphi$ by a sequence of $m_i$ axis-aligned \emph{vertical-structures}, each composed by a rectangular \emph{variable-box} (green in \cref{fig:variable-clause-graph-drawing-b}) and two L-shaped polygons, called \emph{foot-boxes} (pink in \cref{fig:variable-clause-graph-drawing-b}). Each clause vertex is replaced by a rectangular box, called \emph{clause-box} (yellow in \cref{fig:variable-clause-graph-drawing-b}). Each edge between a variable vertex and a clause vertex is replaced by a vertical sequence of axis-aligned boxes, called \emph{transmission-boxes}  (gray in \cref{fig:variable-clause-graph-drawing-b}). 
%
Instance $\langle G,H,\Gamma_H\rangle$ is obtained from~$\overline{\Gamma}_\varphi$ via the following construction, where any object introduced in $\Gamma_H$ is implicitly\rfs{regarded as being} introduced also in $H$ and $G$, while any object introduced in $G$ is assumed to not belong to $H$ (nor to $\Gamma_H$).
First, for each point in common between two or more straight-line segments of $\overline{\Gamma}_\varphi$ we introduce a vertex in $\Gamma_H$ with the same coordinates. Second, for each segment of~$\overline{\Gamma}_\varphi$ we introduce a straight-line edge in $\Gamma_H$ between the endvertices corresponding to the endpoints of the segment.
Third, for each box of $\overline{\Gamma}_\varphi$ we introduce a set of objects (vertices and edges) described below, depending on the type of the box. This produces variable-gadgets (corresponding to several vertical-structures), transmission-gadgets (corresponding to one or more transmission-boxes), and clause-gadgets (each corresponding to a single clause-box). 

\subparagraph{Variable-gadget.} 

For each vertical-structure of variable $x_i$, we add to $\Gamma_H$ the rectangular obstacles filled white in \cref{fig:variable-gadget-00} and add to~$G$ the $2$-paths $\pi_1, \pi_2, \dots, \pi_6, \pi_\alpha,\pi_\beta$ depicted with red lines in the same figure. The size and position of the rectangular obstacles and the attachment points of the $2$-paths can be defined so that some pairs of $2$-paths cannot simultaneously lie inside the same rectangle without intersecting. 
In particular, 
in any straight-line planar drawing of the vertical-structure extending $\Gamma_H$, the faces into which the $2$-paths lay correspond to either the configuration in \cref{fig:variable-gadget-00} (\emph{negative} drawing of the vertical-structure encoding a \false value) or the configuration in  \cref{fig:variable-gadget-11} (\emph{positive} drawing encoding a \true value). 
%
%
When two vertical-structures of the same variable-gadget are consecutive (i.e., they share a vertical path, see \cref{fig:variable-gadget}), we introduce some obstacles and two $2$-paths so that all vertical-structures of the same variable-gadget encode the same truth value.

\subparagraph{Transmission-gadget.} 
The transmission-gadget is composed of a vertical sequence of transmission-boxes each sharing one horizontal side with the next (refer to \cref{fig:variable-transmission-a,fig:variable-transmission-b}). Each transmission-box contains two obstacles and two $2$-paths $\pi_\varepsilon$ and $\pi_\zeta$. The size and position of the rectangular obstacles and the attachment points of the $2$-paths can be defined in such a way that these paths are incompatible in the rectangle of the transmission-box. When two transmission-boxes share a horizontal side, the path $\pi_\varepsilon$ of the box above is identified with the path $\pi_\zeta$ of the other box below.
Since the transmission-gadget is attached to a variable-gadget we have that a \true value (\false value, resp.) of the variable-gadget is necessarily transmitted along the transmission-gadget to the clause-gadget corresponding to a negative clause below (to a positive clause above, resp.). 


\subparagraph{Clause-gadget.}
We describe the clause-gadget $C$ for a positive clause $c = (x_i \vee x_j \vee x_k)$; the one for a negative clause is similar. 
It is composed of a single clause-box inside of which we introduce three obstacles and five $2$-paths $\pi_{i}$, $\pi_{j}$, $\pi_{k}$, $\pi_{i,j}$, and $\pi_{i,j,k}$ depicted in \cref{fig:clause}. Each of $\pi_{i}$, $\pi_{j}$, and $\pi_{k}$ is identified with the path $\pi_\varepsilon$ of the transmission-gadget coming from the variable-gadget of the variables $x_i$, $x_j$, and $x_k$, respectively. 
Also, we consider the five notable points $P_i$, $P_j$, $P_k$, $P_{i,j}$, and $P_{i,j,k}$ in \cref{fig:clause}. 
The size and position of the rectangular obstacles and the attachment points of the $2$-paths can be defined so that when $\pi_{i}$ ($\pi_j$, $\pi_k$, resp.) is inside the clause-box, the closed region bounded by such a path and the bottom side of the clause-box encloses $P_i$ ($P_j$, $P_k$, resp.). This property implies that when both $\pi_i$ and $\pi_j$ are inside the clause-box, $\pi_{i,j}$ encloses $P_{i,j}$. It follows that if $\pi_i$, $\pi_j$, and $\pi_k$ are inside the clause-box, then $\pi_{i,j,k}$ encloses $P_{i,j,k}$ and thus intersects the clause-box. Conversely, if at least one between $\pi_i$, $\pi_j$ or $\pi_k$ is outside the clause-box, then it is possible to draw the remaining paths, $\pi_{i,j}$, and $\pi_{i,j,k}$ inside the clause-box without any intersection. 



\begin{figure}[tb!]
    \centering
    \includegraphics[width=\linewidth,page=6]{figures/hardness.pdf}
    \caption{Clause gadget. Distinct drawings of $\pi_{i,j}$ are dashed and have different shades of green. 
    }
    \label{fig:clause}
\end{figure}


\subparagraph{Sketch of the proof of \cref{th:hardness}.}
The instance $\langle G, H, \Gamma_H\rangle$ of \PDE can be constructed in polynomial time and is equivalent to the formula $\varphi$ of \monthreesat. The maximum degree of the vertices in $\langle G, H, \Gamma_H\rangle$ is four, however it can be reduced to three by suitably replacing degree-$4$ vertices with sufficiently small constructions. Finally, the internal faces of $\Gamma_H$ have bounded size, whereas the outer face does not. However, since the outer face plays no role in the reduction, additional vertices and edges can be inserted into it to ensure that all faces have bounded size.
\qed


\medskip\noindent
Two immediate consequences of  \cref{th:hardness} are the following.

\begin{restatable}{corollary}{coETH}\label{co:eth}
Unless the Exponential Time Hypothesis (ETH) fails, \PDE has no~$2^{o(\sqrt{n})}$-time and no $2^{o(tw)}$-time algorithm even for \pt instances, where $n$ and $tw$ are the number of vertices and the tree-width of the input graph, respectively. 
\end{restatable}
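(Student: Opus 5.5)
The plan is to track the size blow-up of the reduction behind \cref{th:hardness} and combine it with the known ETH lower bounds for planar satisfiability. Under ETH (with sparsification), \threesat on $N$ variables and $O(N)$ clauses has no $2^{o(N)}$-time algorithm. The standard planarization with crossover gadgets turns such a formula into a planar instance of size $O(N^2)$. The reduction to \monthreesat of~\cite{DBLP:journals/ijcga/BergK12} is local, so it keeps the size of a planar instance within a constant factor. Hence \monthreesat on formulas $\varphi$ of size $s$ has no $2^{o(\sqrt{s})}$-time algorithm; this is the usual planar-SAT bound, and I will take it as the starting point.

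Next I will bound the size of the \pt instance $\langle G,H,\Gamma_H\rangle$ built from $\varphi$. The key claim is $n=|V(G)|=O(s)$, or at most $O(s\cdot\mathrm{polylog}\, s)$, which would still suffice for a slightly weaker but standard statement. The boxed drawing $\overline{\Gamma}_\varphi$ lives on a grid. Every vertical-structure, clause-box and transmission-box contributes only $O(1)$ vertices and $2$-paths, and the degree-reduction and outer-face padding steps also act locally. So $n$ is proportional to the number of boxes. There are $\sum_i m_i=O(s)$ vertical-structures and $O(s)$ clause-boxes. The transmission-boxes are the delicate part: a transmission-gadget may need one box per unit of vertical distance, and naively the total length of all gadgets could be quadratic.

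This is the step I expect to be the main obstacle. I would try first to argue that the transmission-gadget never needs more than a constant number of boxes. Boxes can be stretched to arbitrary height, since the gadget only needs the incompatibility of $\pi_\varepsilon$ and $\pi_\zeta$ inside each rectangle, so a single box, or two, can span any vertical distance. This gives $n=O(s)$. Then a $2^{o(\sqrt{n})}$-time algorithm for \PDE on \pt instances would decide \monthreesat in $2^{o(\sqrt{s})}$ time, contradicting ETH.

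For treewidth, $G$ is planar with $n=O(s)$ vertices, so by the planar separator theorem $tw(G)=O(\sqrt{n})=O(\sqrt{s})$. A $2^{o(tw)}$-time algorithm would therefore run in $2^{o(\sqrt{s})}\cdot\mathrm{poly}(s)$ time, again contradicting ETH. The two bounds are thus both immediate once the linear size bound from the previous paragraph is established.
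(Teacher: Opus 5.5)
Your proposal is correct and takes the same route the paper indicates when it calls this an immediate consequence of \cref{th:hardness}: a reduction of size linear in the formula, the $2^{o(\sqrt{s})}$ ETH lower bound for (monotone) planar satisfiability, and $tw(G)=O(\sqrt{n})$ for the planar graph $G$. Your attention to the length of the transmission-gadgets is well placed. The cleanest justification for using $O(1)$ boxes is that a vertical affine stretch of one transmission-box, obstacles and attachment points included, preserves straight-line planarity and containment, and hence the incompatibility of $\pi_\varepsilon$ and $\pi_\zeta$. You may make this change freely, since the corollary needs neither subcubicity nor bounded face size.
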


\begin{restatable}{corollary}{coHardnessTrees}\label{cor:hardness-trees}
PDE is \NP-hard even for subcubic instances $\langle G,H,\Gamma_H\rangle$ such that $H$ is a tree.
\end{restatable}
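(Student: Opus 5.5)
The plan is to derive \cref{cor:hardness-trees} from \cref{th:hardness} by a local transformation. Take a \pt instance $\langle G,H,\Gamma_H\rangle$ produced there: $G$ is subcubic, $H$ is biconnected, and $\Gamma_H$ has faces of bounded size. We will turn $H$ into a spanning tree of itself, drawn the same way, and put the removed edges back into $G$ in a form that forces them to be drawn exactly as before.

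\textbf{Step 1: choose the tree.} Fix a spanning tree $T$ of $H$. Let $F=E(H)\setminus E(T)$ be the non-tree edges.

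\textbf{Step 2: subdivide the non-tree edges.} For each edge $uv\in F$, subdivide it in $\Gamma_H$ by a new vertex $w_{uv}$ placed at the midpoint of the segment $uv$. Also add a \emph{pendant} tree vertex on one side of the subdivision, or simply keep $w_{uv}$ undrawn. The undrawn path $u\,w_{uv}\,v$ then has only length $2$, so it could bend away from the segment, and we must prevent this. The fix is to put $w_{uv}$ into the tree. Concretely, subdivide $uv$ once more, into $u\,a\,w\,v$. Put $a$ and the tree edge $ua$ into $T'$; $a$ is drawn at a point on segment $uv$, and $w$ also lies in $T'$ as a leaf attached to $a$? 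That does not work, because $a$ already uses one edge to $u$, and making $aw$ a tree edge fixes only the segment $aw$. The remaining undrawn edge is $wv$, a single edge between two drawn points, so its drawing is the straight segment and is determined.

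\textbf{Step 3: conclude.} Since $H$ may be assumed induced, an edge of $G$ with both endpoints drawn is fixed. So the simplest version of the construction is this: let $H'=T$ with the same drawing, and $G'=G$. Every edge of $F$ has both endpoints in $T$, so its drawing is forced to be the original segment. Hence $\langle G,T,\Gamma_T\rangle$ is extensible if and only if $\langle G,H,\Gamma_H\rangle$ is. Indeed, any straight-line planar drawing of $G$ extending $\Gamma_T$ draws the edges of $F$ exactly as in $\Gamma_H$, and so extends $\Gamma_H$; the converse is trivial. $G$ is unchanged, so it remains subcubic, and the reduction is polynomial.

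The only subtlety, and the main obstacle, is the definitional one. The paper's remark that $H$ can be assumed induced, and the definition of \emph{extends}, require only vertex positions to be preserved. So the tree version loses no information, provided the problem is stated with $H$ not necessarily induced; this is exactly the case here. I expect no real difficulty beyond checking this.
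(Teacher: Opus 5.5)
Your final argument in Step 3 is correct and is essentially the paper's route: the corollary is stated there as an immediate consequence of the hardness theorem. Replacing $H$ by a spanning tree $T$ with the restricted drawing and keeping $G$ unchanged gives an equivalent instance, because every edge of $H$ not in $T$ joins two vertices whose positions are already fixed, an extension only has to preserve vertex positions, and $G$ stays subcubic. The subdivision detour in Step 2 is unnecessary and confused, and you should simply delete it.
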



\section{P$_2$-Pendant Instances}\label{se:pendant}

In this section we discuss the positive results when the instance is restricted to be \pt. We have the following two theorems.

\begin{theorem}\label{th:pde-in-f}
\PDEinF can be solved in $O(p^2 n)$ time for $n$-vertex \pt instances such that the set $\mathcal P$ of $2$-paths contains $p$ paths.
\end{theorem}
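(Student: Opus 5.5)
The plan is to exploit the fact that, in the fixed-embedding setting, the $2$-paths in $\mathcal P$ that lie inside $\Gamma_C$ behave like pairwise non-crossing ``chords'' of the cycle $C$. Each path $\pi=(u,w,v)$ splits the interior of $\Gamma_C$ into two sides. The embedding tells us which boundary chain $C_\pi$ of $C$ between $u$ and $v$ faces $\pi$ from its ``inner'' side, and which other paths are nested on that side. Hence the paths form a laminar family, whose containment tree $T$ (the weak dual of the embedded outerplanar-like graph $C\cup\bigcup\mathcal P$) can be computed in $O(n+p)$ time from the rotation system. Two paths can interact geometrically only if they bound a common face of the embedding; equivalently, they are parent/child or siblings in $T$. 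There are at most $O(p^2)$ such pairs, and in fact $O(p)$ parent/child pairs.

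\emph{Step 1 (single paths).} For each $\pi=(u,w,v)$, I would compute in $O(n)$ time the region $F(\pi)$ of points $w$ for which two conditions hold. First, both segments $uw$ and $wv$ lie in the interior of $\Gamma_C$, i.e., $w$ lies in the intersection of the visibility polygons of $u$ and $v$ inside $\Gamma_C$. Second, the closed region $R(w)$ bounded by $u,w,v$ and $C_\pi$ lies on the side prescribed by the embedding, which amounts to checking the rotations at $u$ and $v$ and that no vertex of $C$ outside $C_\pi$ falls in $R(w)$. Both visibility polygons can be computed in linear time, and $F(\pi)$ is then a polygon of complexity $O(n)$. If some $F(\pi)$ is empty, we reject.

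\emph{Step 2 (pairwise compatibility and reduction to pairs).} The core claim I would prove is a Helly-type statement: the instance is extensible if and only if every $F(\pi)$ is non-empty and, for every pair $\pi,\pi'$ bounding a common face, there exist $w\in F(\pi)$ and $w'\in F(\pi')$ realizing the two paths without crossings and with the prescribed nesting. To prove sufficiency, I would place each middle vertex in a canonical ``tightest'' position, namely the one that makes $R(w)$ as small as possible toward $C_\pi$. The intuition is that shrinking a path toward its own chain can only free space for every path outside it and cannot hurt the paths nested inside, as long as the nested ones are also tight. Making these positions canonical is where I expect the main obstacle: inclusion-minimal regions $R(w)$ need not be unique, since $w$ is a single point. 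I would first try to show that, for a path with endpoints $u,v$, the relevant part of the boundary of $F(\pi)$ is the chain of reflex vertices of $C_\pi$ visible from both $u$ and $v$. Then, among the points of $F(\pi)$ that avoid a given sibling or child, there is a choice that is simultaneously best for all constraints on the same face. This follows because those constraints are all halfplane-like cuts determined by the lines $\ell_{uw}$ and $\ell_{vw}$, which are monotone in the angular position of $w$ around $u$ and around $v$. If full canonicity fails, the fallback is to process $T$ bottom-up, replacing each $F(\pi)$ by its subset compatible with the already-fixed tight placements of its children.

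\emph{Step 3 (running time).} Each pairwise test in Step 2 involves two polygons of size $O(n)$ and reduces to a constant number of angular-interval intersection queries around the shared or distinct endpoints, so it takes $O(n)$ time. Testing all $O(p^2)$ sibling pairs (siblings sharing a face can be quadratically many in the worst case) gives $O(p^2 n)$ time in total. The cost of computing the regions $F(\pi)$ and the tree $T$ is dominated by this bound. Finally, the drawing itself is constructed by taking the canonical points from Step 2.
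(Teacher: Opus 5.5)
Your Step~2 is where the argument fails, and the problem goes beyond the canonicity issue you flag: the pairwise characterization itself is false. A $2$-path has one degree of freedom, and constraints propagate through it. Take three nested paths, with $\pi_1$ on the $C_{\pi_2}$-side of $\pi_2$ and $\pi_2$ on the $C_{\pi_3}$-side of $\pi_3$. Give $C_{\pi_1}$ two tiny bumps right next to the endpoints of $\pi_1$. The tangents from those endpoints are steep, so every drawing of $\pi_1$ encloses a tall triangle whose apex is far above the bumps. Symmetrically, give $C\setminus C_{\pi_3}$ two tiny dents next to the endpoints of $\pi_3$, so every drawing of $\pi_3$ leaves to its inner side only the region below a deep V. On its own, $\pi_2$ only has to clear the tiny bumps, so it can pass under the V; it can also go high where the outer chain is far. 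Hence both $(\pi_1,\pi_2)$ and $(\pi_2,\pi_3)$ pass your test. Yet no apex $w_2$ lets $\pi_2$ enclose $\pi_1$'s triangle and stay under $\pi_3$'s V, and $\pi_1,\pi_3$ share no face, so your test accepts a negative instance.

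A second failure: a path $\pi$ whose segment $\overline{uv}$ is free may have to bend into $C_\pi$ to make room for one sibling near $u$ and another near $v$. A spike of $C_\pi$ in the middle can allow each accommodation alone but not both. This also shows that ``$R(w)$ as small as possible toward $C_\pi$'' is neither unique nor the right target. For a path forced toward $C_\pi$, the useful drawing is the one hugging the obstacles on the \emph{other} side, and those come from siblings or ancestors. Whether a path must bend toward or away from a fixed root is not known in advance and differs from path to path. So your bottom-up fallback catches the first example but not the second.

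What is missing is a rule fixing the direction and the order of commitments. The paper uses two facts. A path whose segment $\overline{uv}$ is not inside $\Gamma_C$ can be drawn on only one side of $\ell_{uv}$ (\cref{le:donut}). On that side, every proper drawing lies beyond a unique extremal one, the leaning representation with apex $t_u\cap t_v$ (\cref{le:leaning-necessary}). The algorithm orients the dual tree by these forced sides and contracts the unoriented edges. It commits the leaning representation only for a path incident to a source of the contracted tree, where this is without loss of generality (\cref{le:path-optimal-wlog}). It then recurses on the $Q_1$-pockets and on the $Q_2$-instance, recomputing the status of every remaining path against the new polygon boundaries. This recomputation is exactly what captures the propagation your pairwise tests miss. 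The $O(p^2n)$ bound comes from $O(pn)$ work per split, with each split removing a path, not from counting pairs.
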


\begin{restatable}{theorem}{pdeoutf}\label{th:pde-out-f}
\PDEoutF can be solved in $O(p^2 n)$ time for $n$-vertex \pt instances such that the set $\mathcal P$ of $2$-paths contains $p$ paths.
\end{restatable}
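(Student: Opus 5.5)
We plan to reduce \PDEoutF to a sequence of visibility computations, processing the $2$-paths in the order given by the fixed embedding of $G$. We work in the fixed-embedding setting of this section, which is the one in which \cref{le:fixed-emb-from-face-to-graph} uses $A_{out}$. Every $2$-path $\pi=(u,w,v)\in\mathcal P$ has $w$ in the exterior of $\Gamma_C$. Together with one of the two $u$–$v$ chains of $C$, it bounds a closed curve $D_\pi$. The embedding prescribes which chain this is, and whether $D_\pi$ encloses the interior of $\Gamma_C$ (a \emph{wrapping} path) or not. Since the paths are internally disjoint and their ends lie on $C$, the regions $D_\pi$ form a laminar family. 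We compute its nesting forest $T$ from the rotation system in $O(n+p)$ time. The chain side is fixed by the embedding, so at most the outermost paths can be wrapping, and they form a single nested chain.

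\textbf{Step 1 (non-wrapping paths, bottom-up on $T$).} We process $T$ from the leaves upward and maintain, for every processed path $\pi$, a polygon $Q_\pi$: the chain of $C$ under $\pi$ together with the already-drawn segments of $\pi$. For a leaf $\pi=(u,w,v)$, the admissible positions for $w$ are the points $x$ outside $\Gamma_C$ for which the triangle-like region bounded by $ux$, $xv$ and the chain is a simple polygon avoiding the rest of $C$. Equivalently, $x$ must see both $u$ and $v$ in the exterior of $\Gamma_C$, and the rotation at $u$ and $v$ must be the prescribed one. This set is the intersection of two visibility polygons, clipped by two wedges, and it is computable in $O(n)$ time. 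For an inner node, we replace each child's chain by the child's two segments and repeat the computation on the modified boundary. A node has $O(p)$ children, each contributing two vertices, so one computation costs $O(n+p)$. Siblings interact only through their parent, so their relative drawings are independent once each is placed.

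\textbf{Step 2 (canonical placement).} This is the step I expect to be the main obstacle. The bottom-up procedure is correct only if every child can be committed to one placement of $w$ that is never worse for any ancestor. Feasible placements are $2$-dimensional, and the regions $D_\pi$ they produce are in general incomparable under inclusion. My first attempt will be to show the following. For an ancestor $\sigma$ with ends $a,b$, the only thing a child $\pi$ affects is whether the segments $aw_\sigma$ and $w_\sigma b$ are blocked by $ux$ or $xv$. Blocking is monotone as $x$ moves toward the chord $uv$ inside the admissible region. So it suffices to place $w$ at a point of the admissible region extremal with respect to a single direction, e.g.\ arbitrarily close to the segment $uv$, or, when $uv$ is not visible, to the ``shortcut'' geodesic between $u$ and $v$ around $\Gamma_C$.

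If this monotonicity fails, the fallback is not to fix the child at all. Instead, we carry each child's admissible region as a constraint and check the parent's feasibility against the $O(p)$ children pairwise. This still costs $O(n)$ per pair, so $O(p^2 n)$ in total, which matches the claimed bound. This pairwise accounting is my guess for where the $p^2$ factor comes from.

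\textbf{Step 3 (wrapping paths).} The wrapping paths are handled last, from innermost to outermost. Such a path must enclose $\Gamma_C$ together with all previously drawn paths. Its vertex $w$ must therefore see $u$ and $v$ around the convex hull of the current drawing on the prescribed sides. This is a visibility test on the convex hull, performed in $O(n+p)$ time per path. It succeeds if and only if the two tangent wedges from $u$ and $v$ intersect outside the hull. In the unbounded exterior, the points far away in the appropriate direction always qualify unless the hull blocks the line of sight on the prescribed side.

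Summing over the $p$ paths, each requiring $O(p)$ tests of cost $O(n)$, gives the $O(p^2 n)$ bound. Correctness follows by induction on $T$. The canonical choices of Step 2 preserve the existence of an extension, and every accepted configuration yields a planar, embedding-preserving straight-line drawing by construction.
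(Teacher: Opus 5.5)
Your proposal has two genuine gaps. The first is in Step~2, which you leave open; the placements you suggest there do not work, and the fallback is unsound. If $\overline{uv}$ is not in the exterior of $\Gamma_C$, no admissible $w$ lies near $\overline{uv}$. A $2$-path has a single bend, so it also cannot run close to a geodesic with several bends.

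The missing idea is this. The pockets produced by admissible placements are indeed pairwise incomparable, but they all contain a common inclusion-minimal one. On the forced side, $w$ must lie beyond both extreme tangent rays $t_u$ and $t_v$ from $u$ and $v$ to the portions of the boundary the path has to enclose. The apex $p_w$ of the resulting wedge therefore gives the tightest drawing, which is the paper's leaning representation $\Gamma_e$. This drawing touches $C$ at vertices $v_1,\dots,v_k$, so it is not itself admissible. You must cut the instance at those vertices into the $Q_1$-instances and the $Q_2$-instance, and prove that the original instance is positive if and only if all of these are.

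Your fallback does not replace this. The parent's bend point must avoid all children simultaneously, and compatibility with each child separately does not yield a common placement. Siblings also interact directly: two sibling $2$-paths such as $(a,w_1,b)$ and $(b,w_2,c)$ can cross each other while each avoids the parent. So the accounting of ``$O(p)$ tests of cost $O(n)$ per path'' has no correctness argument behind it. In the paper, $O(p^2n)$ comes from $O(pn)$ work per recursive step (testing both sides of every path in $O(n)$ time each), with at most $p$ steps since each step commits one path.

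The second gap is that you assume a prescribed outer face. The statement is about \PDEoutF, where $G$ has no prescribed embedding. Only the rotation system is forced; which of the $p+1$ faces other than the interior of $C$ becomes unbounded is free. This choice decides, for each path, which of its two $u$--$v$ chains of $C$ it encloses. Unlike in \PDEinF, a path can admit both a proper left and a proper right drawing even though $\overline{uv}$ does not lie in the exterior of $\Gamma_C$ (take a square with $u,v$ at opposite corners). The rotations at $u$ and $v$ do not distinguish these two drawings, so the characterization in Step~1 is incomplete. Your nesting forest and your wrapping/non-wrapping classification are therefore not given by the input, and enumerating the candidate outer faces costs an extra factor $p$.

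The paper avoids any global choice. It orients an edge of the weak dual tree only when the geometry forces the side of the corresponding path, leaves the other edges undirected and contracts them, and always commits the path of an edge leaving a source.

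Two further points about Step~3. Once an outer face is fixed, every exterior $2$-path $\pi$ has exactly one bounded face of $\pi\cup C$ besides the interior of $C$, so no separate wrapping case is needed. And the convex-hull tangent test is meaningless when $u$ or $v$ is not a hull vertex, for example when it lies in a pocket of the non-convex polygon $\Gamma_C$.
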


Note that, in the instances of \PDEinF and \PDEoutF, the input graph $G$ does not have a fixed planar embedding. However, as we will show soon, a planar embedding for $G$ can be fixed without loss of generality. In light of this, combining \cref{th:pde-in-f,th:pde-out-f} with \cref{le:fixed-emb-from-face-to-graph}, we obtain the following main result.

\begin{theorem}\label{th:p2-fixed-emb}
\PDE can be solved at fixed embedding in $O(p^2 n)$ time for $n$-vertex \pt instances such that the set $\mathcal P$ of $2$-paths contains $p$ paths.
\end{theorem}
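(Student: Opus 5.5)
The plan is to combine \cref{le:fixed-emb-from-face-to-graph} with \cref{th:pde-in-f,th:pde-out-f}. Let $\langle G,H,\Gamma_H\rangle$ be an $n$-vertex \pt instance with fixed embedding $\cal E$ of $G$, where $\mathcal P$ has $p$ paths. The family $\cal I$ of \pt instances with fixed embedding satisfies the hypotheses of \cref{le:fixed-emb-from-face-to-graph}, since $H$ is biconnected. We only need to check two things. First, restricting to a facial cycle of $\Gamma_H$ gives \pt instances of \PDEinF and \PDEoutF. Second, the running times can be written as super-additive functions of the instance size.

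\textbf{Step 1: reduction to single faces.} The embedding $\cal E$ fixes, for each $2$-path $\pi\in\mathcal P$, the face $f$ of $\Gamma_H$ that hosts its middle vertex. This face is determined by the rotation at the end-vertices of $\pi$, which lie on the boundary of $f$. So the paths are partitioned among the faces. For each face $f$, with facial cycle $C_f$ (a simple cycle, since $H$ is biconnected), let $G_f$ be $C_f$ together with the paths assigned to $f$. This is a \pt instance of \PDEinF, or of \PDEoutF if $f$ is the outer face. Paths in different faces never interact. Hence the instance is extensible if and only if every $G_f$ is extensible within its face and respects the embedding induced by $\cal E$. This is exactly the decomposition that \cref{le:fixed-emb-from-face-to-graph} encapsulates.

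\textbf{Step 2: handling the induced embedding.} The per-face instances of \PDEinF and \PDEoutF have no prescribed embedding, while the subinstances inherit one from $\cal E$. Here I use the remark before this theorem: for \pt instances, an embedding of $G_f$ can be fixed without loss of generality. The reason is that every $2$-path has both ends on the convex or nonconvex polygon $C_f$ and lies inside the prescribed face. Its rotation at the shared end-vertices is therefore forced by planarity up to the order of parallel paths with the same ends. Such paths can be reordered freely, since swapping two paths with the same end-vertices does not change realizability. So a solution of the unconstrained face problem can be turned into one respecting $\cal E$ restricted to $f$, and the converse is trivial. I expect this to be the main obstacle. The delicate point is paths that share end-vertices and whose cyclic order is prescribed by $\cal E$. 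If the algorithms of \cref{th:pde-in-f,th:pde-out-f} actually work at fixed embedding, which the remark suggests, this step disappears.

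\textbf{Step 3: running time.} For face $f$ with $n_f$ vertices and $p_f$ paths, \cref{th:pde-in-f,th:pde-out-f} give time $O(p_f^2 n_f)$. Summing over faces gives $\sum_f p_f^2 n_f \le p^2 \sum_f n_f$. Since $\sum_f |C_f| = 2|E(H)| = O(n)$ by planarity and each path belongs to exactly one face, $\sum_f n_f = O(n)$. The total is therefore $O(p^2 n)$. In the language of \cref{le:fixed-emb-from-face-to-graph}, we take $f(x)=p^2x$ with $p$ fixed as the global path count, which is linear and hence super-additive. Computing the faces of $\Gamma_H$ and assigning paths to faces from $\cal E$ takes $O(n+p)$ time, which completes the bound.
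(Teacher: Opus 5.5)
Your proposal is correct and follows the paper's route. The paper likewise obtains \cref{th:p2-fixed-emb} by combining \cref{th:pde-in-f,th:pde-out-f} with \cref{le:fixed-emb-from-face-to-graph}, after observing that in a \pt instance of \PDEinF or \PDEoutF the rotation at each vertex is forced by the circular order along $C$ of the other end-vertices, once paths sharing both end-vertices are collapsed to one and later reinserted next to it. That is exactly your Step~2. Your explicit summation $\sum_f p_f^2 n_f \le p^2 \sum_f n_f = O(p^2 n)$ is a clean way to obtain the two-parameter bound, which the single-variable super-additivity hypothesis of the lemma does not literally express.
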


The rest of the section is mainly devoted to a proof of \cref{th:pde-in-f}. At the end of the section we will briefly discuss the main differences in the proof of \cref{th:pde-out-f} with respect to the one of \cref{th:pde-in-f}. The general strategy of the algorithms for such proofs is the same, and is described below. 



\subparagraph{General strategy.} Let $\langle G,C,\Gamma_C \rangle$ be a \pt instance of \PDEinF or \PDEoutF, see \cref{fig:oriented-dual}. Recall that $C$ is a cycle and $\Gamma_C$ is a polygon representing $C$. Let $\mathcal P$ be the set of $2$-paths induced by $E(G)-E(H)$. We can assume that no two paths in $\mathcal P$ share both their end-vertices, as otherwise the instance can be reduced in $O(p^2)$ time to an equivalent instance that contains only one of the multiple paths sharing their end-vertices. Indeed, in a solution for the reduced instance, the removed paths can be inserted sufficiently close to the path with the same end-vertices so that no crossing is introduced.
We construct an embedding of $G$ by inserting every path in $\mathcal P$ inside $C$ for \PDEinF or outside $C$ for \PDEoutF; the rotation system at each vertex $v$ is determined by the circular order in $C$ of the end-vertices of such paths different from~$v$. Note that this order is well-defined because of the assumption that no two paths in $\mathcal P$ share both their end-vertices.
We henceforth treat $G$ 
as an embedded graph. 
Every solution to $\langle G,C,\Gamma_C \rangle$ is a straight-line planar drawing of the embedded graph $G$ in which the outer face, for \PDEinF, or an internal face, for \PDEoutF, is delimited by $C$. Note that $G$ is a subdivision of an outerplanar graph~$O$, whose outer face is bounded by $C$ and whose internal edges correspond to the paths in $\cal P$. Whether the embedding of $G$ is planar can be tested in $O(n)$ time by verifying whether~$O$ is outerplanar \cite{DBLP:journals/ipl/Mitchell79,syslo-outerplanarity-testing}. We arbitrarily orient each internal edge of $O$. 
We construct the weak dual $T$ of $O$. See \cref{fig:oriented-dual-internal,fig:oriented-dual-external,fig:oriented-dual-external-2}. By looking at the structure of $T$ and at the geometry of $\Gamma_C$, we will find a path $P_e$ in $\mathcal P$ that can be used to split the instance $\langle G,C,\Gamma_C \rangle$ into smaller instances on which recursion can be applied. Then $\langle G,C,\Gamma_C \rangle$ is a positive instance if and only if all the smaller instances are. Next, we formalize this argument for \PDEinF and thus provide a proof of \cref{th:pde-in-f}. 

\subparagraph{Orientation of $\mathbf{T}$.} We consider and possibly orient each edge $e$ of $T$ as follows. Let $(u,v)$ be the dual edge of $e$, which is an internal edge of $O$. Assume w.l.o.g.\ that $(u,v)$ is oriented from $u$ to $v$, and let $P_e=(u,w,v)$ be the path in $\mathcal P$ corresponding to $(u,v)$. Slightly overloading the term, we also say that $P_e$ corresponds to $e$. If the straight-line segment $\overline{uv}$ lies inside, then $e$ is not oriented. This condition can be tested efficiently.

\begin{figure}
    \centering
    \begin{subfigure}[t]{0.25\textwidth}
       \includegraphics[width=\textwidth, page=1]{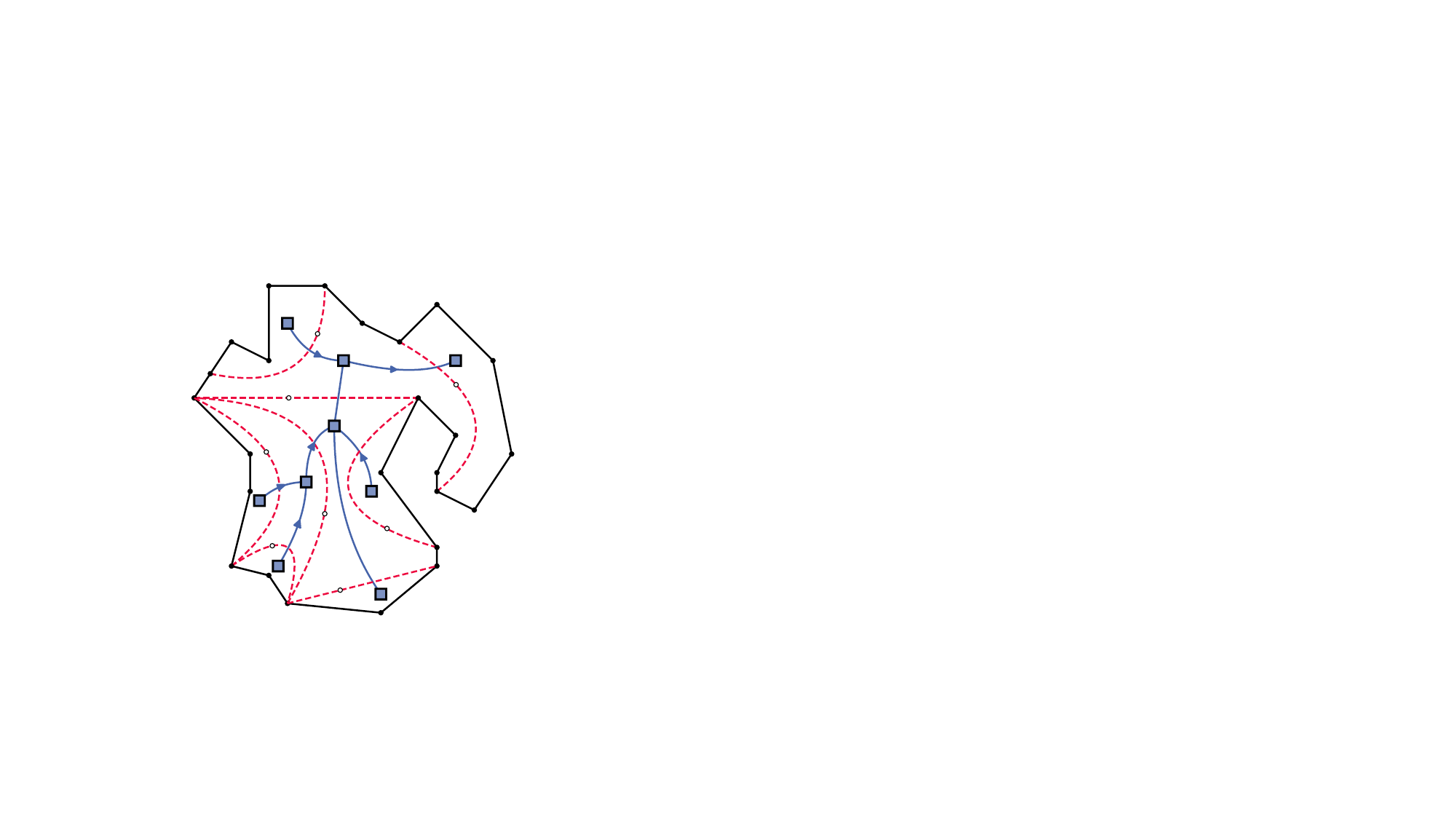}
        \subcaption{}
        \label{fig:oriented-dual-internal} 
    \end{subfigure}
    \hfill
    \begin{subfigure}[t]{0.3\textwidth}
        \includegraphics[width=\textwidth, page=4]{figures/dual.pdf}
        \subcaption{}
        \label{fig:oriented-dual-external} 
    \end{subfigure}
    \hfill
    \begin{subfigure}[t]{0.3\textwidth}
        \includegraphics[width=\textwidth, page=5]{figures/dual.pdf}
        \subcaption{}
        \label{fig:oriented-dual-external-2} 
    \end{subfigure}
    \caption{Instances of \PDEinF (a) and \PDEoutF (b) and (c), together with their oriented duals. In (b) and (c) the same instance is depicted with a different choice for the outer face. }
    \label{fig:oriented-dual} 
    
\end{figure}

\begin{restatable}{lemma}{lemmainsidedetection} \label{le:inside-detection}
It can be tested in $O(n)$ time whether $\overline{uv}$ lies inside $\Gamma_C$.
\end{restatable}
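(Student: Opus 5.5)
We must test whether the segment $\overline{uv}$, where $u$ and $v$ are vertices of the simple polygon $\Gamma_C$, lies inside $\Gamma_C$. We take ``inside'' to mean that the relative interior of $\overline{uv}$ is contained in the interior of $\Gamma_C$. This is the condition under which $P_e$ could be drawn as a slightly bent copy of the chord. The test splits into a global crossing check and a local direction check, each done by one linear scan of the $n$ polygon vertices.

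\textbf{Local step.} Let $u^-$ and $u^+$ be the neighbours of $u$ along $C$, with the interior of $\Gamma_C$ lying to the left when $C$ is traversed counterclockwise. The orientation can be fixed in $O(n)$ time by computing the signed area of $\Gamma_C$. We check with $O(1)$ orientation predicates that the direction of $\ell_{uv}$ at $u$ falls strictly inside the interior angle of $\Gamma_C$ at $u$, that is, inside the angular wedge from $\overrightarrow{uu^+}$ to $\overrightarrow{uu^-}$ that contains the interior. This works for both convex and reflex corners. We do the same at $v$ for the direction $\overrightarrow{vu}$. If either check fails, then points of $\overline{uv}$ close to that endpoint are exterior or lie on the boundary, so we answer no.

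\textbf{Global step.} For every edge $\overline{ab}$ of $\Gamma_C$, we test in $O(1)$ time whether it meets the relative interior of $\overline{uv}$. The edges incident to $u$ or $v$ are handled by the local step, except for the degenerate case in which such an edge is collinear with and overlaps $\overline{uv}$; that case is also caught by the local step. For every other edge we check two things: whether $\overline{ab}$ properly crosses $\overline{uv}$, and whether a vertex of $C$ lies in the relative interior of $\overline{uv}$. If either happens, we answer no. This scan takes $O(n)$ time overall.

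\textbf{Correctness.} Suppose both steps pass. Then the relative interior of $\overline{uv}$ is disjoint from $\Gamma_C$: it is a connected set that touches no edge and no vertex of $\Gamma_C$. Hence it lies entirely in one face of the plane minus $\Gamma_C$, and by the local step the points near $u$ are interior, so the whole open segment is interior. Conversely, if the relative interior of $\overline{uv}$ is in the interior of $\Gamma_C$, then no boundary point lies on it and the direction at each endpoint enters the interior, so both steps pass.

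The main obstacle I expect is the bookkeeping for degenerate touches. These include a polygon vertex lying exactly on $\overline{uv}$ with both of its incident edges on one side, where the segment merely grazes the boundary, and collinear overlaps. I would resolve these by declaring any contact with the relative interior a failure, since a chord touching the boundary cannot serve as a guide for drawing $P_e$ strictly inside. The local wedge test then needs only strict inequalities, which avoids case explosion.
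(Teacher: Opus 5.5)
Your proposal is correct and matches the standard argument this lemma calls for. The full proof is not included in the text you were given, so I cannot quote it, but the expected proof uses the same decomposition: an $O(n)$ scan over the edges and vertices of $\Gamma_C$ ruling out any contact with the relative interior of $\overline{uv}$ (proper crossing, vertex on the segment, or collinear overlap), plus one constant-time test that $\overline{uv}$ leaves $u$ strictly inside the interior angle of $\Gamma_C$ (a point-in-polygon test on an interior point of $\overline{uv}$ would serve equally). The Jordan curve theorem and connectivity of the open segment then finish the argument; your wedge test at $v$ is redundant given the other checks and the simplicity of $\Gamma_C$, but harmless.
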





If $\overline{uv}$ does not lie inside $\Gamma_C$, we have that $P_e$ can be embedded in at most one of the two half-planes defined by $\ell_{uv}$. 

\begin{restatable}{lemma}{lemmadonut} \label{le:donut}
Suppose that $\overline{uv}$ does not lie inside $\Gamma_C$. Then, every straight-line drawing of $P_e$ inside $\Gamma_C$ lies in the same half-plane defined by $\ell_{uv}$.
\end{restatable}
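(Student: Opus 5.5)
The idea is to argue by contradiction, using that the interior of $\Gamma_C$ is a simply connected region, so two drawings of $P_e$ on opposite sides of $\ell_{uv}$ would trap a part of $C$ inside a closed curve that avoids $C$.

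\textbf{Setup.} Let $\Delta$ be the closed region bounded by $\Gamma_C$, and write $\mathring\Delta$ for its interior. A straight-line drawing of $P_e=(u,w,v)$ \emph{inside} $\Gamma_C$ means that $w\in\mathring\Delta$ and that the open segments $\overline{uw}$ and $\overline{wv}$ lie in $\mathring\Delta$. Since $u$ and $v$ are distinct vertices of a simple polygon and $w\notin \ell_{uv}$ (otherwise $\overline{uv}\subseteq \overline{uw}\cup\overline{wv}$ would lie inside, contrary to the hypothesis), each drawing determines a side of $\ell_{uv}$, namely the open half-plane containing $w$.

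\textbf{Contradiction assumption.} Suppose there are two valid placements $w_1$ and $w_2$ on opposite sides of $\ell_{uv}$. Consider the closed polygon $Q=u\,w_1\,v\,w_2\,u$. It is simple: the two 2-paths lie in opposite open half-planes and meet only at $u$ and $v$. Moreover, $Q$ contains the open segment $\overline{uv}$ in its interior, since every point of $\overline{uv}$ is separated by $\ell_{uv}$-sidedness and convexity arguments. Concretely, the quadrilateral $u w_1 v w_2$ has diagonal $\overline{uv}$, and its two triangles $uw_1v$ and $uvw_2$ share only $\overline{uv}$.

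\textbf{Key step.} The hypothesis says that $\overline{uv}$ does not lie inside $\Gamma_C$, so some point $x$ of the open segment $\overline{uv}$ is not in $\mathring\Delta$: it lies on $\Gamma_C$ or outside it. In either case, points of $\Gamma_C$, or of the exterior of $\Gamma_C$, arbitrarily close to $x$ lie in the open region bounded by $Q$. Now $Q\setminus\{u,v\}\subseteq\mathring\Delta$, while $\Gamma_C$ is a connected closed curve that passes through $u$ and $v$ and contains points interior to $Q$.

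There are two cases.
\begin{itemize}
\item If $x$ lies outside $\Delta$, then the unbounded exterior region of $\Gamma_C$ is connected and contains $x$. It also contains points outside $Q$, and the only points of $Q$ that it can touch are $u$ and $v$.
\item If instead $x$ lies on $\Gamma_C$, follow $\Gamma_C$ from $x$ in both directions. Each of the two arcs must leave the interior of $Q$, which it can do only through $u$ or $v$.
\end{itemize}
In the first case I will argue that a simple closed curve minus two points is still a barrier, because the exterior would have to pass through $u$ or $v$, which are boundary points. The cleanest route is the second-case style argument: $C$ is split by $u$ and $v$ into two arcs $C_1$ and $C_2$, and each arc lies, apart from its endpoints, entirely inside $Q$ or entirely outside $Q$. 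Since $\mathring\Delta$ contains $Q\setminus\{u,v\}$ and is bounded by $C_1\cup C_2$, I expect to show that both arcs are outside $Q$. If one arc, say $C_1$, were inside $Q$, then $C_1$ together with one side of $Q$ would bound a region, and the side $u w_1 v$ of $Q$ would be separated from $u w_2 v$ within $\Delta$. It would then follow that $\mathring\Delta$ does not contain the whole region enclosed by $Q$. This, in turn, forces $\overline{uv}$ to lie in $\mathring\Delta$ (the interior of $Q$ is contained in $\Delta$ and avoids $C$), contradicting the hypothesis.

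\textbf{Main obstacle.} The delicate part is this Jordan-curve bookkeeping: showing that the interior of $Q$ is contained in $\mathring\Delta$. I would prove it via the Jordan curve theorem together with simple connectivity of $\mathring\Delta$: $Q\subseteq\Delta$, and the bounded component of the complement of $Q$ contains no point of $C$. The reason is that $C$ is connected minus $\{u,v\}$ only in two arcs, each meeting $Q$ at its endpoints alone, so each arc lies in a single component of the complement of $Q$; and neither arc can be in the bounded component, since each arc is adjacent to the exterior of $\Gamma_C$, which is unbounded and disjoint from $Q$.
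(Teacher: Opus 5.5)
Your proof is correct and follows essentially the same route as the paper. Two drawings on opposite sides of $\ell_{uv}$ form a simple quadrilateral $Q$ that lies inside $\Gamma_C$ except at $u$ and $v$ and encloses the open segment $\overline{uv}$. This is impossible because the exterior of $\Gamma_C$ is connected, unbounded, disjoint from $Q$, and has points arbitrarily close to every point of $\Gamma_C$; so the interior of $Q$ contains neither points of $\Gamma_C$ nor exterior points, hence lies inside $\Gamma_C$, i.e., the interior of the simple polygon has no ``hole''.

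That argument, from your last paragraph, handles both of your cases at once. You should drop the intermediate paragraph about $C_1$ ``separating'' $uw_1v$ from $uw_2v$ within $\Delta$: separation alone gives no contradiction, and the claim that it ``forces'' $\overline{uv}$ to lie inside $\Gamma_C$ does not follow as written.
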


In the following, we show how to decide the half-plane in which $P_e$ can be embedded, if any, assuming that $\overline{uv}$ does not lie inside $\Gamma_C$. A \emph{proper left} (\emph{right}) \emph{drawing} of $P_e$ is a straight-line drawing of $P_e$ that lies to the left (resp.\ right) of $\ell_{uv}$ (except at $u$ and $v$) and inside $\Gamma_C$ (except at $u$ and $v$), see \cref{fig:proper-left}. 
We show how to test whether $P_e$ admits a proper left drawing; the test for a proper right drawing is symmetric. Let $\mathcal L$ be the open half-plane to the left of $\ell_{uv}$.
Let $Q_1$ and $Q_2$ be the paths composing $C$ that are traversed in counter-clockwise and clockwise direction from $u$ to $v$, respectively. 
%
%
%
Let $\mathcal L^+$ be the region $\mathcal L\cup \overline{uv}$; that is, the half-plane delimited by $\ell_{uv}$, closed at $\overline{uv}$ but not at $\ell_{uv}-\overline{uv}$. The intersection $Q_1(\mathcal L)$ between the polygonal line representing $Q_1$ in $\Gamma_C$ and $\mathcal L^+$ consists of a set of polygonal chains (or just \emph{chains}). We say that a chain is \emph{bad} if it has one point in the interior of the segment $\overline{uv}$ and one endpoint in $\ell_{uv}-\overline{uv}$, see \cref{fig:bad-chains}. If $Q_1(\mathcal L)$ contains a bad chain, then $P_e$ does not admit a proper left drawing.
\begin{figure}[htb]
    \centering
     \begin{subfigure}[t]{0.48\textwidth}
        \includegraphics[width=0.99\linewidth,page=1]{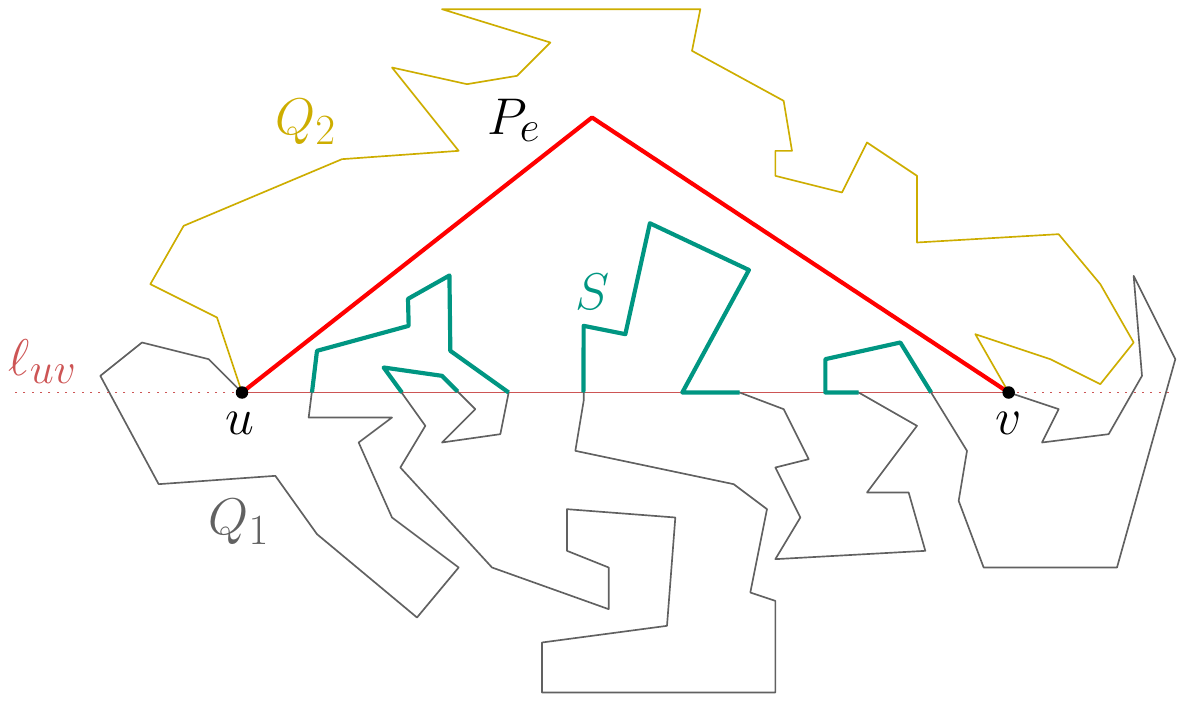}
        \subcaption{}
        \label{fig:proper-left} 
    \end{subfigure}
    \hfil
     \begin{subfigure}[t]{0.48\textwidth}
        \includegraphics[width=0.99\linewidth,page=2]{figures/chains.pdf}
        \subcaption{}
        \label{fig:leaning-representation} 
    \end{subfigure}
    \caption{(a) A proper left drawing of $P_e$, thick and red. Chains in $S$ are thick and green. (b) The leaning representation $\Gamma_e$ of $P_e$, thick and orange. Different colors in the regions defined by $\Gamma_e$ show the interior of the polygons in different instances on which recursion is applied.}
    \label{fig:proper-leaning}
\end{figure}
%
%
\begin{figure}[htb]
    \centering
        \includegraphics[width=0.4\linewidth,page=4]{figures/chains.pdf}

    \caption{The polygonal chains in $Q_1(\mathcal L)$ are violet. The two chains that are dashed are bad.}
    \label{fig:bad-chains} 
\end{figure}
%
Otherwise, let $S$ be the set of chains that have both their endpoints on $\overline{uv}$, possibly at $u$ or $v$. If $S$ is empty, then we have that $P_e$ does not admit a proper left drawing. 
If $S$ is non-empty, let $t_u$ ($t_v$) be the half-line with the following properties: (i) it starts at $u$ (resp.\ at $v$); (ii) it passes through a vertex different from $u$ (resp.\ from~$v$) of a chain in $S$; and (iii) the angle defined by a counter-clockwise (resp.\ clockwise) rotation that brings $\ell_{uv}$ to overlap with $t_u$ (resp.\ $t_v$) is maximum, subject to being smaller than $\pi$.
If $t_u$ and $t_v$ do not intersect, as in \cref{fig:t_u-t_v no intersection}, then we have that $P_e$ does not admit a proper left drawing.
\begin{figure}[htb]
    \centering
        \includegraphics[width=0.4\linewidth,page=5]{figures/chains.pdf}

    \caption{An example in which $t_u$ and $t_v$ do not intersect.}
    \label{fig:t_u-t_v no intersection} 
\end{figure}
If $t_u$ and $t_v$ intersect (in $\mathcal L^+$), we compute a \emph{leaning representation} $\Gamma_e$ of $P_e$; this is a straight-line drawing of $P_e$ that overlaps with points and segments of $Q_1$, and that is arbitrarily close to the actual drawing of $P_e$ that appears in the solution of the instance $\langle G,C,\Gamma_C \rangle$ we compute, if any such a solution exists; see \cref{fig:leaning-representation}. We use the leaning representation of $P_e$ to split $\langle G,C,\Gamma_C \rangle$ into subproblems. Indeed, if $t_u$ and $t_v$ lie on $\ell_{uv}$, since $S$ only contains points and segments on $\overline{uv}$, we let $p_w$ be any internal point of $\overline{uv}$, otherwise we let $p_w$ be the unique intersection point of $t_u$ and $t_v$. Let $\Gamma_e$ be the union of the segments $\overline{up_w}$ and $\overline{p_wv}$. 

\begin{restatable}{lemma}{lemmaleaningnecessary} \label{le:leaning-necessary}
A proper left drawing of $P_e$ exists if and only if: (1) $Q_1(\mathcal L)$ contains no bad chain, $S$ is non-empty, and $t_u$ and $t_v$ intersect; and (2) the leaning representation $\Gamma_e$ of $P_e$ does not intersect $\Gamma_C$ other than overlapping with points and/or segments in chains in $S$.
\end{restatable}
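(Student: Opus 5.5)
We prove the two directions separately, working throughout with the geometry of $Q_1$ inside $\mathcal L^+$. Recall that a proper left drawing places $w$ at a point $p$ in $\mathcal L$ such that the open segments $\overline{up}$ and $\overline{pv}$ avoid $\Gamma_C$ and lie inside it. The triangle $\Delta(u,p,v)$ then lies in $\mathcal L^+$ and is bounded by $\overline{uv}$ together with the drawing of $P_e$. The whole argument is about which parts of $Q_1(\mathcal L)$ can lie inside $\Delta(u,p,v)$.

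\textbf{Necessity.} Fix a proper left drawing with $w$ at $p$.
\begin{itemize}
\item \emph{No bad chain.} A bad chain enters the closed triangle through the interior of $\overline{uv}$ and must leave $\mathcal L^+$ through $\ell_{uv}-\overline{uv}$, which lies outside the triangle. By continuity it would cross $\overline{up}\cup\overline{pv}$, a contradiction.
\item \emph{$S$ is non-empty.} Since $\overline{uv}$ does not lie inside $\Gamma_C$ and $P_e$ is inside $\Gamma_C$ near $u$ and $v$, the boundary $Q_1$ must meet $\overline{uv}$ at some point other than $u$ and $v$, or run along it. I would use \cref{le:donut} together with the orientation convention ($Q_1$ counter-clockwise, so it lies to the left side locally) to argue this. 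The chain through such a point stays in $\Delta(u,p,v)$ and has both endpoints on $\overline{uv}$, so it belongs to $S$.
\item \emph{$t_u$ and $t_v$ intersect.} Every chain of $S$ lies in the closed triangle, since it cannot cross the sides $\overline{up}$ and $\overline{pv}$. Hence every vertex of a chain in $S$ is seen from $u$ at an angle from $\ell_{uv}$ of at most $\angle(\ell_{uv},\overline{up})$. So $t_u$ is rotated no more than $\overline{up}$, and symmetrically $t_v$ is rotated no more than $\overline{vp}$. It follows that $t_u$ and $t_v$ meet at a point $p_w$ inside $\Delta(u,p,v)$, which gives condition (1).
\item \emph{Condition (2).} Suppose $\Gamma_e$ hit $\Gamma_C$ at a point not on a chain of $S$. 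That point lies in $\Delta(u,p,v)$, on a portion of $C$ that is either part of $Q_2$ or part of $Q_1$ outside $S$. Any such portion must enter the triangle from outside, crossing $\overline{up}\cup\overline{pv}$. Alternatively it would enter through $\overline{uv}$, in which case it is in $S$ or is a bad chain, both already handled.
\end{itemize}

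\textbf{Sufficiency.} The plan is to perturb $\Gamma_e$. Place $w$ at a point $p$ slightly beyond $p_w$ in $\mathcal L$, for instance $p=p_w+\varepsilon d$ with $d$ pointing away from $\overline{uv}$ into the wedge between $t_u$ and $t_v$. By maximality of the angles of $t_u$ and $t_v$, all vertices of chains in $S$ lie in the closed wedge region below $t_u$ and $t_v$. For small $\varepsilon$, the rotated segments $\overline{up}$ and $\overline{pv}$ strictly clear those vertices, and hence clear the chains themselves, apart from $u$ and $v$. Condition (2) plus compactness gives a positive distance between $\Gamma_e$ and the rest of $\Gamma_C$, since the non-$S$ parts of $\Gamma_C$ are closed and disjoint from $\Gamma_e$ except near $u$ and $v$. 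Choosing $\varepsilon$ below this distance keeps the perturbed drawing disjoint from them. Finally, I would check that the perturbed path is inside $\Gamma_C$ rather than outside, using the local side at $u$ determined by $Q_1$.

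\textbf{Main obstacle.} I expect the hardest part to be the degenerate cases. The first is when $t_u$ and $t_v$ lie on $\ell_{uv}$, so $S\subseteq\overline{uv}$ and $p_w$ is arbitrary. The second is the behaviour near $u$ and $v$ themselves, where the distance argument fails because $\Gamma_C$ is incident to both endpoints. I would handle these by a local angular argument at $u$ and $v$: the edges of $C$ at $u$ determine an interior angle, and the segment $\overline{up}$ must lie within it. That this holds should follow from $S$ being non-empty and the chain in $S$ incident to or nearest $u$.
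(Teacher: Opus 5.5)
\textbf{Sufficiency: the inside/outside step is missing.} Grant your perturbation, including the treatment near $u$ and $v$. It still only produces a path $P'=\overline{up}\cup\overline{pv}$ with $P'\cap\Gamma_C=\{u,v\}$. So $P'\setminus\{u,v\}$ lies entirely inside or entirely outside $\Gamma_C$. Conditions (1)--(2) only say that $\Gamma_e$ avoids $\Gamma_C$ away from the chains of $S$, so by themselves they cannot decide which. This is the real content of sufficiency, and you defer it to a check ``using the local side at $u$''.

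A local check cannot work. Near $u$, $\overline{up}$ is inside iff its direction lies in the counterclockwise wedge from the first edge of $Q_1$ to the first edge of $Q_2$. Nothing in (1)--(2) controls the first edge of $Q_2$ locally; in positive instances it can even enter the wedge between $\ell_{uv}$ and $t_u$.

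The missing argument is global and combines the orientation of $Q_1$ with the contact between $\Gamma_e$ and $Q_1$:
\begin{itemize}
\item $\Gamma_e$ contains a point $x\in Q_1\setminus\{u,v\}$, namely the vertex defining $t_u$.
\item The thin region between $\Gamma_e$ and $P'$ contains no point of $C$ and has $x$ on its boundary.
\item If $P'$ were outside $\Gamma_C$, then $Q_1,Q_2,P'$ would form a theta graph with counterclockwise order $Q_1,Q_2,P'$ at $u$.
\item The thin region lies on the clockwise (triangle) side of $P'$, so it would lie in the face bounded by $Q_2\cup P'$. That face's boundary cannot contain $x$, a contradiction.
\end{itemize}
This is where the choice of $Q_1$ as the counterclockwise path and the non-emptiness of $S$ genuinely enter.

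\textbf{Necessity needs the same fact, which you never establish.} For a proper left drawing with $w$ at $p$, the part of $\Delta(u,p,v)$ adjacent to $P_e$ lies in the region bounded by $Q_1\cup P_e$. Hence any segment from $p$ into the triangle first meets $C$ on $Q_1$; that is, $Q_1$ shields $Q_2$. Without this, three steps fail.
\begin{itemize}
\item \emph{$S\neq\emptyset$.} Your argument rests on the claim that $Q_1$ meets the open segment $\overline{uv}$ or runs along it. This is false. Take $Q_1=(u,r,v)$ with $r\in\mathcal L$ and $Q_2$ passing above $r$: then $\overline{uv}$ is outside $\Gamma_C$ and a proper left drawing exists, yet $Q_1$ meets $\overline{uv}$ only at $u$ and $v$ (here $S=\{Q_1\}$). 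The correct argument takes a non-interior point $q$ of the open segment $\overline{uv}$ and looks at the first point of $C$ on the segment from $p$ to $q$.
\item \emph{Condition (2).} Your case split omits portions of $Q_2$ that enter $\Delta(u,p,v)$ through $\overline{uv}$. These are neither chains of $S$ nor bad chains, since both notions refer to $Q_1$ only, and they do occur, nested below arches of $Q_1$. Shielding handles them. Take $c\in\Gamma_e\cap C$; the first point of $C$ on the segment from $p$ to $c$ is on $Q_1$. Hence it lies on a chain of $S$ and in $\Delta(u,p_w,v)$. Since that segment enters $\Delta(u,p_w,v)$ only at $c$, the first point is $c$ itself.
\item \emph{Chains of $S$ lie in the triangle.} ``A chain of $S$ cannot cross the sides'' does not exclude a chain from $u$ to $v$ running around the outside of $\Delta(u,p,v)$. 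Again, only the orientation of $Q_1$ rules this out.
\end{itemize}
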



\begin{restatable}{lemma}{lemmaleaningcompute} \label{le:leaning-compute}
We can test in $O(n)$ time whether a proper left drawing of $P_e$ exists and, in the positive case, we can construct a leaning representation of $P_e$ within the same time bound.    
\end{restatable}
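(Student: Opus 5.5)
The plan is to turn the characterization of \cref{le:leaning-necessary} directly into an algorithm. Each ingredient will be computed by a constant number of linear scans of the polygon $\Gamma_C$, which has $O(n)$ vertices and edges. We check conditions (1) and (2) in order and reject as soon as one fails. If all checks pass, the segments $\overline{up_w}$ and $\overline{p_wv}$ form the leaning representation $\Gamma_e$.

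\textbf{Step 1: the chains $Q_1(\mathcal L)$.} Traverse $Q_1$ counter-clockwise from $u$ to $v$ and classify each edge of $\Gamma_C$ against $\ell_{uv}$ with orientation tests. Each edge either lies in $\mathcal L$, lies in the closed right half-plane, crosses $\ell_{uv}$, or lies on $\ell_{uv}$.
\begin{itemize}
\item Clipping each edge to $\mathcal L^+$ takes constant time, and maximal runs of clipped pieces form the chains, so $Q_1(\mathcal L)$ is obtained in $O(n)$ time.
\item For every chain we record its endpoints. Using the parameter along $\ell_{uv}$, we decide in constant time whether an endpoint lies in the interior of $\overline{uv}$, at $u$ or $v$, or in $\ell_{uv}-\overline{uv}$.
\end{itemize}
This classification detects bad chains and builds $S$ in $O(n)$ total time.

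\textbf{Step 2: the half-lines $t_u$ and $t_v$.} Scan all vertices (and clipped endpoints) of the chains in $S$, other than $u$ itself. For each such point $q$, compute the counter-clockwise angle from $\ell_{uv}$ to the ray from $u$ through $q$. Since all these points lie in $\mathcal L^+$, this angle lies in $[0,\pi)$, so maxima can be compared with cross-product tests and no trigonometry is needed. Keep the maximizer; this defines $t_u$, and $t_v$ is obtained symmetrically with clockwise angles at $v$.

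Whether $t_u$ and $t_v$ intersect is a constant-time test. If both are collinear with $\ell_{uv}$, we take $p_w$ to be, say, the midpoint of $\overline{uv}$. Otherwise $p_w$ is computed as the intersection of two lines.

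\textbf{Step 3: checking condition (2).} Test each edge $f$ of $\Gamma_C$ against the two segments of $\Gamma_e$. The allowed contacts are overlaps with points or subsegments of chains in $S$. To check this, mark during Step 1 which edges, or clipped edge portions, belong to chains of $S$. An intersection between $\Gamma_e$ and $f$ is accepted only if it occurs at such a marked portion and is a touching or collinear overlap rather than a proper crossing. Any other intersection, including touching an edge of $Q_2$ or of a bad or non-$S$ chain, means rejection. This takes $O(1)$ per edge, so $O(n)$ overall.

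The main obstacle is Step 3: correctly separating allowed degenerate contacts (touching at a vertex of an $S$-chain, or overlapping collinearly along one) from forbidden ones, when a vertex or collinear edge lies exactly on $\Gamma_e$. I would handle this by first computing, per edge, the intersection with $\Gamma_e$ as a point or segment. Then I would check two things: that the edge is marked as part of an $S$-chain, and that locally the edge stays in the closed side of $\Gamma_e$ facing $\overline{uv}$. The second check uses orientation tests on the edge's endpoints relative to the two segments of $\Gamma_e$.

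Steps 1 to 3 together decide, via \cref{le:leaning-necessary}, whether a proper left drawing exists, and output $\Gamma_e$ in $O(n)$ time. The proper right case is symmetric.
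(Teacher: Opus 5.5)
Your proposal follows the same route as the paper: the lemma is just a linear-time implementation of the characterization in \cref{le:leaning-necessary}. That means constant-time clipping of the edges of $Q_1$ to $\mathcal L^+$, classifying the chains, computing the extremal rays $t_u$ and $t_v$ by cross-product comparisons, computing $p_w$, and testing every edge of $\Gamma_C$ against the two segments of $\Gamma_e$. Three details need tidying.

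\textbf{Bad chains.} A chain is bad if \emph{any} of its points, not only an endpoint, lies in the interior of $\overline{uv}$. A chain can touch $\overline{uv}$ at a middle vertex without splitting. So in Step~1 you should record such contacts during the traversal rather than classifying endpoints only. This costs nothing extra.

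\textbf{Contacts at $u$ and $v$.} In Step~3 you must explicitly ignore contacts at the common endpoints $u$ and $v$. As written, your rule rejects ``touching an edge of $Q_2$''. But the edges of $Q_2$ incident to $u$ and $v$ always meet $\Gamma_e$ there, so the test would reject every instance.

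\textbf{The ``main obstacle''.} It dissolves once you note what the extremality of $t_u$ and $t_v$ gives you. Every vertex of a chain in $S$, including clipped endpoints, lies in the closed triangle with corners $u$, $p_w$, $v$; in the degenerate case this triangle is $\overline{uv}$ itself. By convexity, every chain in $S$ lies in that triangle, so no chain of $S$ can properly cross $\Gamma_e$. Step~3 therefore reduces to checking that no point of $\Gamma_C$ other than $u$, $v$ and the points of the chains in $S$ lies on $\Gamma_e$. No local-side test is needed.
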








Having determined whether a proper left drawing and a proper right drawing of $P_e$ exist, we proceed as follows. If neither of such drawings exists, we  report that the instance is negative. If a proper left drawing exists and a proper right drawing does not, we orient $e$ so that it traverses $(u,v)$ from right to left. If a proper right drawing exists and a proper left drawing does not, we orient $e$ so that it traverses $(u,v)$ from left to right. If both proper left and right drawings of $P_e$ exist, then $\overline{uv}$ lies inside $\Gamma_C$, by \cref{le:donut}, and $e$ is not oriented.  

\subparagraph{Recursion.} Suppose that, while orienting $T$, we did not report the instance to be negative, as otherwise the algorithm is already concluded. If we did not orient any edge (which includes the case in which $T$ has no edge, that is, $\mathcal P=\emptyset$), we conclude and report that the instance is positive; this is the base case of our recursive algorithm. Indeed, every path in $\mathcal P$ can be drawn inside $\Gamma_C$ as a straight-line segment between its end-vertices; no two paths intersect since the embedding of $G$ is planar. If some edges of $T$ have been oriented, we contract the non-oriented edges in $T$ and obtain an oriented tree $T'$.\rfs{Each vertex of $T'$ corresponds to a non-empty set of internal faces of $G$ whose closures together bound a region of the plane homeomorphic to a closed disk.} Since $T'$ is a tree, it contains a source $s$. Let $e$ be any edge incident to $s$ in $T'$. Note that $e$ is also an edge of $T$. Let  $(u,v)$ be the internal edge of $O$ dual to $e$, and oriented from $u$ to $v$. Assume that $e$ is oriented so that it traverses $(u,v)$ from right to left, as the other case is analogous. Hence, the path $P_e=(u,w,v)$ in $\mathcal P$ corresponding to $e$ needs to have a proper left drawing.

Rather than directly constructing a proper left drawing of $P_e$, we use the leaning representation $\Gamma_e$ of $P_e$ in order to construct the smaller instances on which we apply recursion. Then a proper left drawing of $P_e$ inside $\Gamma_C$ is inserted, arbitrarily close to $\Gamma_e$, together with the recursively constructed solutions of the smaller instances. 
We observe the following.

\begin{restatable}{observation}{observationleaningverticesorder} \label{obs:leaning-vertices-order}
The order $u=v_1,v_2,\dots,v_k = v$ in $\Gamma_e$ (from $u$ to $v$) of the vertices of~$Q_1$ that lie on $\Gamma_e$ is the same order as the one in which such vertices appear \mbox{in $Q_1$ (from $u$ to $v$).}
\end{restatable}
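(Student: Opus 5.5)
We argue directly from the structure of $\Gamma_e$. By construction, $\Gamma_e$ is the union of the two segments $\overline{up_w}$ and $\overline{p_wv}$, with $p_w$ either an interior point of $\overline{uv}$ or the intersection of $t_u$ and $t_v$. By condition (2) of \cref{le:leaning-necessary}, the vertices of $Q_1$ on $\Gamma_e$ are exactly vertices of chains in $S$ that touch $\Gamma_e$. It therefore suffices to show two things: vertices belonging to the same chain of $S$ appear along $\Gamma_e$ in their $Q_1$-order, and distinct chains of $S$ occupy disjoint, consecutive stretches of $\Gamma_e$, again in their $Q_1$-order.

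The main tool is a closed Jordan curve argument. Let $D$ be the region bounded by $\overline{uv}$ and $\Gamma_e$, a (possibly degenerate) triangle. Every chain of $S$ lies in $\mathcal L^+$ and, by the choice of $t_u$ and $t_v$ (maximal rotation angles), it lies in $D$. The chains of $S$ are pairwise disjoint subpaths of the simple polygonal line $Q_1$, and each has both endpoints on $\overline{uv}$. So each chain $\sigma$, together with the subsegment of $\overline{uv}$ between its endpoints, bounds a closed region $D_\sigma\subseteq D$.

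\textbf{Ordering the chains.} By planarity of $\Gamma_C$ (and the absence of bad chains), two such regions $D_\sigma, D_{\sigma'}$ are either nested or have interiors on disjoint intervals of $\overline{uv}$. Only outermost chains can touch $\Gamma_e$, since a nested chain is shielded by the enclosing one and any point it shares with $\Gamma_e$ would also lie on the enclosing chain, contradicting simplicity of $Q_1$. The outermost chains have disjoint intervals on $\overline{uv}$, and along $Q_1$ (which goes from $u$ to $v$ and never crosses itself) they are visited in the order of their intervals from $u$ to $v$. Projecting radially from $p_w$ onto $\overline{uv}$ is a monotone map from $\Gamma_e$ to $\overline{uv}$, so the stretches of $\Gamma_e$ touched by outermost chains come in the same order.

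\textbf{Ordering within a chain.} Consider an outermost chain $\sigma$, traversed from its endpoint nearer $u$ to its endpoint nearer $v$. Suppose two of its vertices $a,b$ appear on $\Gamma_e$ with $a$ before $b$ in $Q_1$ but after $b$ along $\Gamma_e$. The subchain from the first endpoint to $a$, together with the relevant pieces of $\Gamma_e$ and $\overline{uv}$, forms a closed curve. The vertex $b$ lies on the boundary of $D$ on the wrong side of this curve, so the subchain from $a$ to $b$ would have to cross $\sigma$ itself or exit $D$. The first contradicts simplicity and the second contradicts the containment of $\sigma$ in $D$.

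\textbf{Main obstacle.} The hard step is the degenerate case where $t_u$ and $t_v$ lie on $\ell_{uv}$, so $\Gamma_e$ coincides with $\overline{uv}$ and $D$ is degenerate. Here the chains are collinear pieces of $\overline{uv}$, and the order claim must follow directly from the simplicity of $Q_1$: disjoint collinear subpaths of a simple path joining $u$ to $v$ are traversed monotonically, because there are no bad chains and no chains with endpoints outside $\overline{uv}$.
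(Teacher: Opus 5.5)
Your proof has a genuine gap. The decomposition rests on two facts that you assert but do not prove, and together they are essentially the whole content of the observation.
\begin{itemize}
\item[(i)] $Q_1$ visits the outermost chains of $S$ in the order of their base intervals on $\overline{uv}$.
\item[(ii)] $Q_1$ traverses every chain touching $\Gamma_e$ from its $u$-side endpoint to its $v$-side endpoint.
\end{itemize}
Your ``ordering within a chain'' step silently identifies this geometric direction with the $Q_1$-order. For a chain traversed the other way, the same Jordan argument shows that its contact vertices appear on $\Gamma_e$ in \emph{reverse} $Q_1$-order, and nothing in your proof excludes such chains.

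The reason you give for (i), that $Q_1$ never crosses itself, does not suffice. Take $u=0$, $v=10$ on $\ell_{uv}$ and the simple path crossing the line at $0,6,7,12,13,-1,1,3,4,5,11,10$, alternating sides and starting on the right. It visits the two non-nested chains of $S$ over $[6,7]$ and $[3,4]$ in reverse order. That path has bad chains, which the setting excludes, but this only shows that the specific hypotheses must be invoked, and you never say how.

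What actually rules out both failures is information your argument never uses. First, $\Gamma_e$ lies in the closed region bounded by $\Gamma_C$, being a limit of proper left drawings. Second, the interior of $\Gamma_C$ lies to the left of the counter-clockwise path $Q_1$. Hence the region between $\Gamma_e$ and the outermost chains is interior, which forces forward traversal and, via a Jordan argument, the right visiting order.

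There is also a separate error: the radial projection from $p_w$ is not a useful monotone map. It sends all of $\overline{up_w}$ to $u$ and all of $\overline{p_wv}$ to $v$. Moreover, chains may overhang one another, so no fixed geometric projection transfers base-interval order to contact-point order. Finally, the degenerate case is not the real obstacle. Your treatment of it also assumes ``no chains with endpoints outside $\overline{uv}$'', which is not a hypothesis; only bad chains are excluded.

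A short argument avoids chains altogether and handles the degenerate case uniformly. Since $P_e$ must have a proper left drawing, conditions (1)--(2) of \cref{le:leaning-necessary} hold. So $\Gamma_e$ is a simple arc from $u$ to $v$ contained in the closed disk bounded by $\Gamma_C$, and every point of $\Gamma_e\cap\Gamma_C$ other than $u,v$ lies on $Q_1$ (not on $Q_2$). Suppose $a$ precedes $b$ along $\Gamma_e$ while $b$ precedes $a$ along $Q_1$. Then $u,b,a,v$ occur in this cyclic order on $\Gamma_C$. The disjoint subarcs $\Gamma_e[u,a]$ and $\Gamma_e[b,v]$ would then join two alternating pairs of boundary points of a closed disk without meeting, contradicting the Jordan curve theorem. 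This holds even when the arcs touch the boundary elsewhere.
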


For each pair of consecutive vertices $v_i$ and $v_{i+1}$ in the above sequence (see \cref{obs:leaning-vertices-order}) such that $(v_i,v_{i+1})$ is not an edge of $G$ we define an instance $\langle G_i, C_i, \Gamma_{C_i}\rangle$ as follows. If the part of $\Gamma_e$ between $v_i$ and $v_{i+1}$ does not contain in its interior the point $p_w$, let $C_i$ be the cycle composed of a new edge $(v_i,v_{i+1})$ and of the part $Q_1(v_i,v_{i+1})$ of $Q_1$ between $v_i$ and $v_{i+1}$; let $\Gamma_{C_i}$ be the straight-line planar drawing of $C_i$ composed of the segment $\overline{v_i v_{i+1}}$ and of the restriction of $\Gamma_C$ to $Q_1(v_i,v_{i+1})$; let $\mathcal P_i$ be the subset of $\cal P$ comprising the paths whose endpoints both belong to $C_i$, except the path connecting $v_i$ and $v_{i+1}$, if such a path belongs to $\mathcal P_i$; finally, let $G_i$ be the graph composed of $C_i$ and of the paths in $\mathcal P_i$.  
If the part of $\Gamma_e$ between $v_i$ and $v_{i+1}$ contains in its interior $p_w$, we place  $w$ at $p_w$; the cycle $C_i$ is composed of the new path  $(v_i,w,v_{i+1})$ and of $Q_1(v_i,v_{i+1})$; the drawing $\Gamma_{C_i}$ is composed of the segments $\overline{v_i p_w}$, $\overline{p_w v_{i+1}}$, and of the restriction of $\Gamma_C$ to $Q_1(v_i,v_{i+1})$; $\mathcal P_i$ is the subset of $\cal P$ comprising all the paths whose endpoints both belong to $C_i$; and $G_i$ is defined as above.
We call the constructed instances \emph{$Q_1$-instances}. We remark that the $Q_1$-instances contain edges that are not in $G$, namely each instance $\langle G_i,C_i,\Gamma_{C_i}\rangle$ contains a new edge $(v_i,v_{i+1})$, except, possibly, for the instance $\langle G_i,C_i,\Gamma_{C_i}\rangle$ such that the part of $\Gamma_e$ between $v_i$ and $v_{i+1}$ contains in its interior $p_w$. If it exists, such an instance contains the new edges $(v_i,w)$ and $(w,v_{i+1})$; also, in such an instance, $w$ is a neighbor of $v_i$ and $v_{i+1}$, while in $G$ it is a neighbor of $u$ and $v$.

We further create a \emph{$Q_2$-instance} $\langle G', C', \Gamma_{C'}\rangle$ as follows. Let $C'$ be the cycle composed of the path $P_e$ and of $Q_2$. Let $\Gamma_{C'}$ be the straight-line planar drawing of $C'$ composed of $\Gamma_e$ and of the restriction of $\Gamma_C$ to $Q_2$. 
Let $\cal P'$ be the subset of $\cal P$ comprising all the paths whose endpoints both belong to $C'$, except for $P_e$.
Finally, let $G'$ be the graph composed of $C'$ and of the paths in $\cal P'$. 
We have that the sets $\mathcal P_i$ and $\mathcal P'$ ``almost partition'' $\mathcal P$.

\begin{restatable}{lemma}{lemmapathpartition} \label{le:path-partition}
Suppose that a path $P\neq P_e$ in $\mathcal P$ does not connect two vertices $v_i$ and $v_{i+1}$ or that, if it does, the edge $(v_i,v_{i+1})$ belongs to $G$. Then $P$ belongs to a set $\mathcal P_i$ or to the set $\mathcal P'$.
\end{restatable}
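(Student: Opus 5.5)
The argument is combinatorial and uses only the outerplanar structure of $O$ together with \cref{obs:leaning-vertices-order}. Let $P=(a,x,b)\neq P_e$ be a path of $\mathcal P$, with $a,b\in V(C)$. Since the embedding of $G$ is planar, the internal edges $(a,b)$ and $(u,v)$ of $O$ do not cross. Hence $a$ and $b$ do not separate $u$ from $v$ along $C$. This gives two cases. Either both $a$ and $b$ lie on $Q_2$ (endpoints $u,v$ allowed), or both lie on $Q_1$ and, as we traverse $Q_1$ from $u$ to $v$, the pair $\{a,b\}$ is nested inside the interval from $u$ to $v$.

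In the first case, both $a$ and $b$ belong to $C'$, which consists of $P_e$ and $Q_2$. Since $P\neq P_e$, the definition of $\mathcal P'$ gives $P\in\mathcal P'$. The same holds whenever $\{a,b\}\subseteq\{u,v\}\cup V(Q_2)$. So we may assume that at least one endpoint, say $a$, is an internal vertex of $Q_1$.

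In the second case, consider the vertices $u=v_1,\dots,v_k=v$ of $Q_1$ lying on $\Gamma_e$. By \cref{obs:leaning-vertices-order}, these appear along $Q_1$ in this order, so they cut $Q_1$ into consecutive subpaths $Q_1(v_i,v_{i+1})$. The key step is to show that $a$ and $b$ lie in a common subpath $Q_1(v_i,v_{i+1})$. For this we use the planarity of the embedding once more: $v_j$ is a vertex of $C$, and the chord $(a,b)$ of $O$ together with $Q_1(a,b)$ bounds a region, so I would argue as follows. Suppose some $v_j$ lies strictly between $a$ and $b$ on $Q_1$. A proper left drawing of $P_e$ drawn arbitrarily close to $\Gamma_e$ passes arbitrarily close to $v_j$, on the side facing the interior of $\Gamma_C$. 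In the embedding, the path $P$ must then separate $v_j$ from the face of $G$ incident to $P_e$ on the $Q_2$ side. This contradicts the fact that $P_e$ and $P$ are non-crossing chords whose regions are nested in the prescribed rotation order.

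I expect this to be the main obstacle, since it mixes geometry and embedding. A cleaner route would be purely combinatorial. Take the solution-independent formulation: $P_e$ must be drawn near $\Gamma_e$, touching the $v_j$'s. Then $P$, drawn inside $\Gamma_C$ and to the $Q_1$ side of $P_e$ (as forced by the non-crossing of $(a,b)$ with $(u,v)$), cannot reach across a touching point $v_j$ without crossing $\Gamma_e$ or $C$. I would rather avoid relying on a solution. Instead, I would simply define the subpaths by the $v_j$ positions and note that $P$ with $v_j$ strictly inside $Q_1(a,b)$ would be an interval containing a point of $\Gamma_e\cap Q_1$. Handling this cleanly is exactly where the care is needed.

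Once $a,b\in V(Q_1(v_i,v_{i+1}))$, both endpoints lie on $C_i$, so $P\in\mathcal P_i$ unless $P$ is the excluded path connecting $v_i$ and $v_{i+1}$. That exclusion applies only when $(v_i,v_{i+1})$ is not an edge of $G$, in which case $P$ violates the hypothesis. If instead the part of $\Gamma_e$ between $v_i$ and $v_{i+1}$ contains $p_w$, nothing is excluded from $\mathcal P_i$. If $(v_i,v_{i+1})\in E(G)$, no instance is created, but then $Q_1(v_i,v_{i+1})$ is a single edge. In that situation $P$ would be a path with both ends $v_i,v_{i+1}$, hence parallel to a cycle edge, and the hypothesis guarantees that this case is harmless. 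I would check this subcase against the convention that $H$ is induced, since $G$ then contains no such path.
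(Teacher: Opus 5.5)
Your case split is fine: non-crossing chords force either $\{a,b\}\subseteq \{u,v\}\cup V(Q_2)$, which gives $P\in\mathcal P'$, or $\{a,b\}\subseteq V(Q_1)$. The gap is the step you flag yourself: showing that no $v_j$ with $1<j<k$ lies strictly between $a$ and $b$ on $Q_1$.

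Neither justification you offer works. Nested, non-crossing chords are fully compatible with such a ``straddling'' path. Arguing through a drawing of $P_e$ that hugs $\Gamma_e$ is circular: that $P_e$ may be assumed to hug $\Gamma_e$ is what \cref{le:path-optimal-wlog} derives from this lemma. Moreover, the lemma is an unconditional statement and must hold for negative instances too.

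In fact the step is false if all you use is that $e$ is oriented from right to left. Let $\Gamma_C$ be the rectangle $[-20,10]\times[0,10]$ with a notch cut in from its right side, whose tip is a vertex $t=(-2,5)$ of $C$. Let $u=(0,0)$, $v=(0,10)$, $a=(5,0)$, $b=(5,10)$ also be vertices of $C$, and let $\mathcal P=\{(u,w,v),(a,x,b)\}$.
\begin{itemize}
\item $\overline{uv}$ crosses the notch, and $P_e$ has only proper left drawings.
\item Both $t_u$ and $t_v$ pass through $t$, so $\Gamma_e=\overline{ut}\cup\overline{tv}$ and $(v_1,v_2,v_3)=(u,t,v)$.
\item Since $a\in Q_1(u,t)$ and $b\in Q_1(t,v)$, the path $(a,x,b)$ lies in no $\mathcal P_i$ and not in $\mathcal P'$.
\item Yet the instance is positive: take $w=(-15,5)$ and $x=(-5,5)$.
\end{itemize}

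What you are missing is that $e$ was chosen incident to a \emph{source} $s$ of $T'$; your argument never uses this. In the example, $(a,x,b)$ can only bulge westwards, so its dual edge is oriented into the face of $O$ between $(u,v)$ and $(a,b)$. That face is therefore not in a source, and the algorithm never selects $e$.

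A proof along the following lines works.
\begin{enumerate}
\item Fix $v_j$ and suppose some path straddles it. Take an outermost such path $P=(a,x,b)$, with $a$ preceding $b$ on $Q_1$. Its chord then lies on the boundary of the face $f$ of $O$ incident to $(u,v)$ on the $Q_1$-side, and $f$ belongs to $s$.
\item Let $\Gamma_P$ be a drawing of $P$ inside $\Gamma_C$ that is either $\overline{ab}$ or a proper right drawing. Together with $Q_1(a,b)$, it bounds a region containing all interior points of $\Gamma_C$ close to $v_j$. Since $u$ and $v$ lie outside this region, every proper left drawing of $P_e$ close enough to $\Gamma_e$ must cross $\Gamma_P$.
\item A short case analysis of how two paths with at most two segments each can cross rules this out. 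The drawing of $P_e$ turns right at $w$, while $\Gamma_P$ is straight or turns left at $x$.
\item Hence $\overline{ab}$ is not inside $\Gamma_C$ and $P$ has no proper right drawing. Since the instance was not rejected, the dual edge of $(a,b)$ is oriented into $f$, i.e., into $s$ in $T'$. This contradicts $s$ being a source.
\end{enumerate}

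Separately, $H$ being induced does not exclude a $2$-path joining two vertices that are consecutive on $C$. Such a path lies in no $\mathcal P_i$ and not in $\mathcal P'$, so your final subcase needs an explicit assumption or argument rather than an appeal to inducedness.
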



The following lemma which allows us to apply recursion.

\begin{restatable}{lemma}{lemmapathoptimalwlog}\label{le:path-optimal-wlog}
We have that $\langle G,C,\Gamma_C\rangle$ is a positive instance of \PDEinF if and only if all the $Q_1$-instances and the $Q_2$-instance are positive instances of \PDEinF. Furthermore, all such instances can be defined in $O(n)$ time.
\end{restatable}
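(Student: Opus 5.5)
We prove the two directions separately and then bound the construction time.

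\emph{Only if.} Assume $\langle G,C,\Gamma_C\rangle$ has a solution $\Gamma$. Since $e$ is incident to the source $s$ of $T'$ and is oriented from right to left, $P_e$ must have a proper left drawing in $\Gamma$ (\cref{le:donut}). The plan is to transform $\Gamma$ into a solution $\Gamma^*$ in which $w$ lies arbitrarily close to $p_w$, so that $P_e$ is drawn arbitrarily close to $\Gamma_e$. The $Q_1$- and $Q_2$-instances can then be read off $\Gamma^*$.

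\begin{itemize}
\item The paths of $\mathcal P$ on the $Q_2$-side of $P_e$ are unaffected by moving $w$ towards $Q_1$. Their drawings lie in the region bounded by $P_e$ and $Q_2$, and this region only grows when $P_e$ is pushed further left.
\item The paths on the $Q_1$-side need to be redrawn. By \cref{le:path-partition}, each of them has both endpoints in some $Q_1(v_i,v_{i+1})$, or it connects $v_i$ and $v_{i+1}$ with $(v_i,v_{i+1})\in E(G)$. So it lies in the pocket bounded by $Q_1(v_i,v_{i+1})$ and the original drawing of $P_e$.
\item We intend to show that each pocket in $\Gamma$ is contained in the corresponding region bounded by $Q_1(v_i,v_{i+1})$ and the segment $\overline{v_iv_{i+1}}$ (or $\overline{v_ip_w}\cup\overline{p_wv_{i+1}}$). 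This uses the maximality of the angles defining $t_u$ and $t_v$: every proper left drawing of $P_e$ must pass to the left of all vertices of the chains in $S$ lying on $\Gamma_e$. Hence the region between $\Gamma_e$ and $Q_1(v_i,v_{i+1})$ contains everything that lay between $P_e$ and $Q_1(v_i,v_{i+1})$ in $\Gamma$.
\item It follows that the restriction of $\Gamma$ to the paths in $\mathcal P_i$ is a solution to the $Q_1$-instance.
\item The restriction of $\Gamma$ to $\mathcal P'$ lies on the other side of $P_e$, hence inside $\Gamma_{C'}$, and is a solution to the $Q_2$-instance.
\end{itemize}

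Embedding-preservation is immediate in both cases, because each subinstance's embedding is inherited from the rotation system of $G$.

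\emph{If.} Take solutions for all the $Q_1$-instances and for the $Q_2$-instance and glue them together.

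\begin{itemize}
\item The polygons $\Gamma_{C_i}$ and $\Gamma_{C'}$ have pairwise disjoint interiors and together tile the interior of $\Gamma_C$ cut along $\Gamma_e$. This follows from \cref{obs:leaning-vertices-order} and condition (2) of \cref{le:leaning-necessary}.
\item The union of the solutions is therefore planar. In it, $P_e$ is drawn as $\Gamma_e$, which overlaps $Q_1$ at the vertices $v_i$ and along segments.
\item We perturb $w$ slightly from $p_w$ into the interior of the $Q_2$-region, by some small $\varepsilon$. Then $P_e$ becomes a proper left drawing.
\item The $Q_2$-solution has finitely many vertices, all at positive distance from $\Gamma_e$ except at shared endpoints. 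So for small $\varepsilon$ no crossings arise, and by compactness the $Q_2$-solution still fits.
\item The $Q_1$-instances only contain the extra edges $(v_i,v_{i+1})$, or the edges to $w$. The interior of their polygon only grows when $P_e$ moves to the right of $\Gamma_e$, so their solutions remain valid.
\item The paths removed because they connect $v_i$ to $v_{i+1}$ while $(v_i,v_{i+1})$ is a new edge are handled separately. We draw each of them as a tiny 2-path hugging $\Gamma_e$ between $v_i$ and $v_{i+1}$, inside the gap created by the perturbation. Its rotation at $v_i$ and $v_{i+1}$ is consistent with the fixed embedding, since it separates $Q_1(v_i,v_{i+1})$ from $P_e$.
\end{itemize}

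\emph{Running time.} We obtain $\Gamma_e$ and the sequence $v_1,\dots,v_k$ in $O(n)$ time by \cref{le:leaning-compute}, scanning $Q_1$. Each subinstance is then obtained by cutting $C$ at consecutive $v_i$'s. Each path of $\mathcal P$ is assigned to its subinstance by looking up the positions of its endpoints along $C$. The total size is $O(n)$, since the cycles share only the $v_i$'s and $w$.

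The main obstacle is the containment claim in the ``only if'' direction, namely that the pocket of $\Gamma$ between $P_e$ and $Q_1(v_i,v_{i+1})$ fits inside $\Gamma_{C_i}$. I would try to prove that $w$ must lie in the closed wedge bounded by $t_u$ and $t_v$ on the side away from $\overline{uv}$. Then $\overline{uw}$ and $\overline{wv}$ separate every vertex $v_i$ from $Q_2$, which forces the needed nesting.
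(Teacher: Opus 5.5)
Your \emph{if} direction, the $Q_2$-half of the \emph{only if} direction and the running-time bound are essentially right, up to some left/right slips. Moving $w$ from its position in $\Gamma$ towards $p_w$ enlarges the $Q_2$-side region, and perturbing $w$ away from $p_w$ enlarges the polygons $\Gamma_{C_i}$. The gap is in the $Q_1$-half of the \emph{only if} direction, where your containment claim is reversed. Every proper left drawing of $P_e$ passes beyond the touching vertices $v_i$; equivalently, $w$ lies in the wedge beyond $p_w$, which is exactly what your last paragraph proposes to prove. Hence the region bounded by $Q_1$ and the drawing of $P_e$ in $\Gamma$ is the union of the interiors of the $\Gamma_{C_i}$ \emph{plus} the region between $\Gamma_e$ and $P_e$. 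In $\Gamma$ there are no separate ``pockets''. A path of $\mathcal P_i$ may have its middle vertex in that extra region, with both of its edges crossing the lid $\overline{v_iv_{i+1}}$. Restricting $\Gamma$ then does not give a solution of the $Q_1$-instance. Also, \cref{le:path-partition} only tells you which instance a path belongs to, not that it fits inside $\Gamma_{C_i}$.

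The missing idea is the one hypothesis that makes the lemma true: $s$ is a source of $T'$. You use $s$ only to conclude that $P_e$ has a proper left drawing, but that already follows from the orientation of $e$.

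Without the source property the statement is false. Take $\mathcal P=\{P_e,P\}$ with $P=(a,x,b)$. Let $Q_1(v_i,v_{i+1})$ lie below a straight lid on $t_u$ and contain two narrow spikes whose tips come within a tiny distance of the lid. Put $a$ and $b$ deep in the pocket, on the outer sides of the two spikes. No point of $\Gamma_{C_i}$ sees both $a$ and $b$, so the $Q_1$-instance is negative. Yet $G$ can be drawn with $x$ above the lid and $P_e$ beyond it. In this configuration the dual edge of $P$ is oriented into the face between $P$ and $P_e$, so the source is the face beyond $P$, and $e$ would never be chosen. Your argument uses only the orientation of $e$ and \cref{le:path-partition}, whose conclusion holds here. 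It would therefore equally ``prove'' this false instance, so it cannot be correct.

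What you need is a redrawing step driven by the source property. Every path other than $P_e$ incident to a face of $s$ falls into one of two cases:
\begin{itemize}
\item it is unoriented, so it can be drawn as (a perturbation of) the straight segment between its endpoints; such a segment cannot cross the lid of the $\Gamma_{C_i}$ containing them;
\item it is oriented away from $s$, so by \cref{le:donut} it is confined to the side of its chord away from $s$.
\end{itemize}
From these constraints you must build a solution in which no $Q_1$-side path enters the region between $\Gamma_e$ and $P_e$. Only after that can you slide $w$ to within $\varepsilon$ of $p_w$ and restrict the drawing to the $Q_1$-instances.
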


\subparagraph{Running time.} We conclude the proof of \cref{th:pde-in-f} by bounding the algorithm's runtime. For each path $P_e$ in $\mathcal P$ connecting two vertices $u$ and $v$ of $C$, we can test in $O(n)$ time whether~$P_e$ can be drawn inside $\Gamma_C$ as a segment $\overline{uv}$, by \cref{le:inside-detection}. In the negative case, by \cref{le:leaning-compute}, we can test in $O(n)$ time whether $P_e$ admits a proper left drawing or a proper right drawing. If all the tests fail, the algorithm terminates, otherwise we obtain an orientation for the edge $e$ of $T$ corresponding to $P_e$, or we have determined that such an edge is not oriented in $T$. Thus the orientation of $T$ can be found in $O(pn)$ time; note that $T$ has $O(p)$ vertices and edges. The tree $T'$ can be generated in $O(p)$ time and then a source $s$ in $T'$ can be found in $O(p)$ time. Constructing a leaning representation of a path $P_e$ corresponding to an edge $e$ incident to $s$ can be done in $O(n)$ time, by \cref{le:leaning-compute}. The instances on which recursion is applied can be constructed in $O(n)$ time, by \cref{le:path-optimal-wlog}. Thus, the process that leads to the recursion takes overall $O(pn)$ time. Since each instance on which recursion is applied has at most $n$ vertices and since the sum of the number of paths to be drawn in all the instances is at most $p-1$, the algorithm takes $O(p^2n)$ time. This concludes the proof of \cref{th:pde-in-f}.

\subparagraph{PDEoutF.} 
The proof of \cref{th:pde-out-f} is similar to the one of \cref{th:pde-in-f}. The main difference in \PDEoutF with respect to \PDEinF is the possible presence of paths in the form $P=(u,w,v)\in\mathcal{P}$ such that $P$ admits both a proper left and a proper right drawing but the straight-line segment $\overline{uv}$ does not lie outside $\Gamma_C$. Such paths are regarded as undirected in $T$.

\section{Bounded Vertex Cover Number}\label{se:vertex-cover}

In this section, we show that the \PDE problem is FPT parameterized by the vertex cover number $\vc(G)$ of the graph $G$, for biconnected $H$, both at fixed and at variable embedding. Recall that the \emph{vertex cover number} $\vc(G)$ of $G=(V,E)$ is the smallest size of a set $U\subseteq V$ such that every edge in $E$ is incident to at least one vertex in $U$.


\smallskip
We first show that the \PDEinF and \PDEoutF problems can be solved in FPT time parameterized by $\vc(G)$.  
%
%
For an instance $\langle G = (V,E),C,\Gamma_C \rangle$ of \PDEinF (or \PDEoutF), we use the following reduction rule to produce a kernel whose size is linear in $\vc(G)$. 


\begin{reductionrule} \label{rule:vc-deg2}
Let $\langle G=(V,E),C,\Gamma_C \rangle$ be an instance of \PDEinF (of \PDEoutF), and let $U$ be a vertex cover of $G$ of minimum size, i.e., $|U| = \vc(G)$.
For every pair of vertices $c,d\in U$, let $V_{cd}$ be the set of degree-$2$ vertices of $G$ in $V\setminus \{U \cup C\}$ with neighborhood~$\{c,d\}$. If~$|V_{cd}| \geq 2$, remove $|V_{cd}|-1$ vertices in $V_{cd}$ from $G$, as well as their incident edges.
\end{reductionrule}

Since $G$ is biconnected, it does not contain vertices of degree $1$. Also, the following lemma allows us to prove that, after applying \cref{rule:vc-deg2}, the number of vertices with degree at least $2$ left in the graph is a linear function of the vertex cover number.

\begin{lemma}[Lemma~11 of \cite{BekosLFGKLRT26,DBLP:conf/gd/BekosLF0KLRT24}] \label{le:degree-3-vc}
Let $X\subseteq V$ be a set of vertices in a planar graph $G=(V,E)$. The number of vertices in $V\setminus X$ that are connected to at least three vertices in $X$ is at most $2|X|$. Further, the number of pairs $(x,y)$ of vertices in $X$ such that $x$ and $y$ are the neighbors of a degree-$2$ vertex in $V\setminus X$ is at most $3|X|$. 
\end{lemma}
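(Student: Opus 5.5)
The plan is a charging argument driven by planarity and Euler's formula. I will prove the two claims separately. In both I work with the bipartite subgraph $B$ of $G$ that keeps the vertices of $X$, the relevant vertices of $V\setminus X$, and the edges between them. This graph is planar as a subgraph of $G$, and it has no edges inside $X$ or inside $V\setminus X$.

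For the first claim, let $Y\subseteq V\setminus X$ be the set of vertices with at least three neighbours in $X$. Let $B$ be the simple bipartite graph with parts $X$ and $Y$, in which each $y\in Y$ is joined to all its neighbours in $X$.
\begin{itemize}
\item Every edge of $B$ joins $X$ and $Y$, so $|E(B)|\ge 3|Y|$.
\item $B$ is a simple bipartite planar graph on $|X|+|Y|\ge 3$ vertices, so Euler's formula gives $|E(B)|\le 2(|X|+|Y|)-4$.
\end{itemize}
Combining the two bounds gives $3|Y|\le 2|X|+2|Y|-4$, hence $|Y|\le 2|X|-4\le 2|X|$. When $|X|+|Y|<3$ the bound is trivial. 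The only point needing care is that $B$ is simple and bipartite, which holds because $G$ is a simple graph and we dropped all edges not between $X$ and $Y$.

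For the second claim, consider the set $\Pi$ of pairs $\{x,y\}\subseteq X$ that are the two neighbours of some degree-$2$ vertex in $V\setminus X$. For each pair in $\Pi$ I pick one witness vertex $z_{xy}$. Contracting one of the two edges at each witness, or equivalently replacing the path $x\,z_{xy}\,y$ by a single edge $xy$, yields a graph $M$ on vertex set $X$. The witnesses are distinct and internally disjoint paths, so $M$ is a minor of $G$ and hence planar. Distinct pairs give distinct edges, so $M$ is simple. A simple planar graph on $|X|\ge 3$ vertices has at most $3|X|-6$ edges, so $|\Pi|\le 3|X|$; for $|X|\le 2$ the bound holds trivially.

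The main obstacle is checking that $M$ is genuinely simple and planar. Suppressing the degree-$2$ witnesses gives a topological minor of $G$, because the witnesses have degree exactly $2$ and lie outside $X$. Keeping only one witness per pair removes multi-edges, and loops cannot occur because the two neighbours of a witness are distinct vertices. With these checks in place, the result follows directly from the two Euler-formula bounds.
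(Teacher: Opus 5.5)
Your proof is correct. The paper gives no proof of this lemma and simply imports it from Bekos et al.; your argument is the standard one. The first part uses the $2n-4$ edge bound for simple bipartite planar graphs. The second part replaces one degree-$2$ witness per pair by an edge to obtain a simple planar minor on $X$, which is the same device the paper uses, with multi-edges and digon merging, in the proof of \cref{le:number-anchors}.

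Your argument even yields the sharper bounds $2|X|-4$ and $3|X|-6$. Your reading of the pairs as unordered $\{x,y\}$ is the right one, since counting ordered pairs would double the count and break the $3|X|$ bound.
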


\begin{restatable}{lemma}{lekernelbound}\label{le:kernel}
Let $\langle G,C,\Gamma_C \rangle$ be an instance of \PDEinF (or \PDEoutF), where $G$ is an $n$-vertex graph with vertex cover number $k$. The instance $\langle G',C,\Gamma_C \rangle$ obtained by  applying \cref{rule:vc-deg2} is equivalent to $\langle G,C,\Gamma_C \rangle$, can be computed in $O(n)$ time, and has size $O(k)$, with $|V(C)| \leq 2k$.
\end{restatable}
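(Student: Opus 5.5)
The proof has three parts: equivalence, running time, and the size bound (including $|V(C)|\le 2k$).

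\emph{Equivalence.} Rule~\ref{rule:vc-deg2} only deletes vertices not in $C$, so a solution for $\langle G,C,\Gamma_C\rangle$ restricts to a solution for $\langle G',C,\Gamma_C\rangle$. Conversely, take a straight-line planar drawing of $G'$ extending $\Gamma_C$. For each pair $c,d$, let $x\in V_{cd}$ be the kept vertex, drawn as the 2-path $c\,x\,d$. Each removed vertex of $V_{cd}$ is placed at a distinct point in a small disk around $x$ in which the segments $\overline{cy}$ and $\overline{yd}$ stay inside the faces incident to $x$. Concretely, I would put the copies on a short segment through $x$, close enough to $x$ that the new segments lie within the union of the two faces of the drawing incident to the path $c\,x\,d$. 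Then two copies $y,y'$ yield nested or fan-like 2-paths that cross neither each other nor anything else. The only subtle point is that the copies should not sit on the line $\ell_{cd}$, which would create overlaps. Placing them slightly off that line, and using general position in a small enough neighborhood, handles this. The copies stay on the same side of $\Gamma_C$ as $x$, so the interior/exterior constraint of \PDEinF/\PDEoutF is preserved. In the fixed-embedding variant, putting the copies next to $x$ in the rotation at $c$ and $d$ realizes the prescribed consecutive order, since parallel 2-paths between the same pair of vertices must be consecutive in any planar embedding.

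\emph{Running time.} Computing a minimum vertex cover in $O(n)$ time is not possible in general. I would therefore either assume $U$ is given, as is standard for this parameter, or use a greedy $2$-approximation (maximal matching), which still yields $O(k)$ bounds. With $U$ fixed, bucket each degree-$2$ vertex outside $U\cup C$ by its neighbor pair, using radix sort on the pairs, and keep one vertex per bucket. This is linear in the size of the graph, which is $O(n)$ since $G$ is planar.

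\emph{Size.} Vertices of $C$ are handled as follows. Since $C$ is a cycle in $G$, its vertices not in $U$ must alternate with vertices of $U$ along $C$, so at most half of $C$ lies outside $U$. Hence $|V(C)|\le 2|U\cap V(C)|\le 2k$. Now partition $V(G')$ into $U$ (size $k$), vertices of $C\setminus U$ (at most $k$), and vertices of $V\setminus(U\cup C)$. Each of the latter has all its neighbors in $U$, because $U$ is a vertex cover. Its degree is at least $2$, because $G$ is biconnected: it contains $C$ plus vertices attached to it, and I would note that a degree-$\le1$ vertex cannot occur, as remarked before the lemma. By \cref{le:degree-3-vc} with $X=U$, at most $2k$ such vertices have degree at least $3$. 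Degree-$2$ vertices correspond, after the rule, injectively to neighbor pairs, of which there are at most $3k$. The total is $O(k)$, roughly $7k$. I expect the equivalence direction (re-inserting the deleted copies without crossings and respecting the face/embedding constraints) to be the main step needing care, though it is essentially a local perturbation argument.
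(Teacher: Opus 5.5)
Your size bound ($|V(C)|\le 2|U\cap V(C)|\le 2k$, then \cref{le:degree-3-vc} with $X=U$ for the vertices outside $U\cup C$) and the bucketing for linear time are correct and follow the intended route. The gap is in the equivalence at fixed embedding. This case matters, because the kernel is used at fixed embedding in \cref{th:pde-in-and-out}.

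Your claim that parallel $2$-paths between the same pair of vertices must be consecutive in any planar embedding is false. The disk bounded by the cycle $c\,x_1\,d\,x_2$ formed by two copies may contain other parts of $G$, e.g., a further path $c\,w\,d$ where $w$ has additional neighbors inside that disk. Then the rotation at $c$ reads $x_1,w,x_2,\dots$. After \cref{rule:vc-deg2} deletes $x_2$, re-inserting it next to $x_1$ puts it on the wrong side of the subgraph containing $w$, so the drawing does not respect $\mathcal E$. Realizing $\mathcal E$ requires a one-bend path $c\,p\,d$ separating that subgraph from the rest. This is not a local perturbation at $x_1$, and such a path need not exist in the given drawing of $G'$.

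The missing idea is a redrawing step. Since $G$ is biconnected and the copies have degree $2$, whatever lies between two consecutive copies consists of components of $G-\{c,d\}$ attached only to $c$ and $d$, plus possibly the edge $(c,d)$. Proceed as follows:
\begin{enumerate}
\item delete these parts from the drawing of $G'$;
\item insert the copies as nested $2$-paths hugging $c\,x\,d$, or hugging $\overline{cd}$ on the appropriate side;
\item redraw each deleted part inside its thin region, which is a triangle or a (star-shaped) quadrilateral with corners $c$ and $d$, e.g., after triangulating, via Hong and Nagamochi's result, as in \cref{le:redraw}.
\end{enumerate}

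Even at variable embedding, your placement needs one more condition. Taking the copies on an arbitrary ``short segment through $x$'' close to $x$ does not avoid crossings with the path $c\,x\,d$ itself. For $c=(0,0)$, $x=(1,1)$, $d=(2,0)$ and $y=(1+\varepsilon,1)$, the segment $\overline{cy}$ crosses $\overline{xd}$ for every $\varepsilon>0$. You must put the copies strictly inside the angle $\angle cxd$ or the opposite angle at $x$, e.g., on the bisector. Avoiding $\ell_{cd}$ is not the relevant condition; the lines that matter are $\ell_{xc}$ and $\ell_{xd}$. With that choice, copies at distinct small distances from $x$ give nested $2$-paths. A compactness argument, together with the positive angular gaps at $c$ and $d$ between the edges $(c,x)$, $(d,x)$ and their neighbors in the rotation, then shows that no other edge or vertex is hit.
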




Next, we show that \PDEinF and  \PDEoutF belong to\rfs{the class} $\exists\mathbb{R}$ both at fixed and\rfs{at} variable embedding.
We start by considering an instance $\langle G = (V,E),C,\Gamma_C \rangle$ of \PDEinF. 
Let $C = (v_1,v_2,\dots, v_{|C|})$, where\rfs{the vertex} $v_i$ immediately precedes\rfs{the vertex} $v_{i+1}$ in the clockwise order of the vertices of $C$ in $\Gamma_C $ (with $|C|+1 = 1$). Also, 
let $p_i = (x_{p_i}, y_{p_i})$ be the point where $v_i$ lies in~$\Gamma_C$. 
We assume that $\Gamma_C$ is oriented as prescribed by the planar embedding of $G$, if we are at fixed embedding, as otherwise $\langle G,C,\Gamma_C \rangle$ is a negative instance of \PDEinF. Clearly, this condition can be easily tested in $O(|C|)$ time.
For $\langle G, C, \Gamma_C  \rangle$ to be a positive instance of \PDEinF we only require that a drawing $\Gamma$ of $G$ exists satisfying the following constraints:

\crefname{enumi}{Constraint}{Constraints}
\Crefname{enumi}{Constraint}{Constraints}

\begin{enumerate}[(C1)]
\item \label{co:ofc} {\bf Outer face constraints:} each vertex $v_i \in C$ lies at point $p_i$;
\item \label{co:ivcc} {\bf In-polygon constraints:} each vertex $v \in V \setminus C$ lies strictly inside the polygon $\Gamma_C $; 
\item \label{co:evov} {\bf Non-overlap constraints:} for each edge $(u,v) \in E$, the closed straight-line segment $\overline{uv}$ does not include any point representing a vertex different from $u$ and $v$; and
\item \label{co:pc} {\bf Planarity constraints:} for each two edges $e_1,e_2 \in E$, the straight-line segments representing $e_1$ and $e_2$ do not intersect. 
\end{enumerate}

\noindent
Additionally, if $G$ has a fixed embedding, we require that $\Gamma$ satisfies the following constraints:

\begin{enumerate}[(C5)]
\item \label{co:emb} {\bf Embedding constraints:} for each vertex $v$, the cyclic order of the neighbors of $v$ in~$\Gamma$ is the same as the cyclic order of the neighbors of $v$ in the planar embedding of $G$.
\end{enumerate}

\removedforshortening{
\noindent 
The next two observations will be useful in modeling the existence of $\Gamma$ as a first-order formula in the Existential Theory of the Reals that satisfies Constraints C1--C5.
\begin{itemize}
	\item Since $G$ is connected, Constraint C3 prevents any two vertices from being assigned to the same point in $\Gamma$,  as this would imply an overlap between a vertex $v$ and the endpoint of the closed straight-line segment representing some edge of $G$ not incident to $v$. 
	\item Assuming Constraint C3 holds, Constraint C4 is satisfied if each pair of non-adjacent edges does not share an interior point; in fact, if two adjacent edges cross, then an edge would contain an endpoint of the other edge, violating Constraint C3, and if two non-adjacent edges cross at one of their endpoints then Constraint C3 would be violated. 
\end{itemize}

\noindent
Our main algorithmic ingredient is given by the following.
}

We only sketch here how to model Constraint C\ref{co:ivcc} as an ETR formula. For each vertex $v \in V$, we use the real variables $x_v$ and $y_v$ to represent the $x$- and $y$-coordinates of $v$, respectively, in the sought extension $\Gamma$ of $\Gamma_C$. Let $\mathcal{T}$ be any internal triangulation of the polygon $\Gamma_C$. Let $\Delta$ be a triangle of $\cal T$ bounded by vertices $p_i$, $p_j$, and $p_k$ of $\Gamma_C$. Let $\orient(p_i,p_j, p_k)$ be a polynomial expression representing the signed area of the triangle $\Delta$, which is positive (resp.\ negative) if $p_i$, $p_j$, and $p_k$ are encountered in this order when traversing the boundary of $\Delta$ counter-clockwise (resp.\ clockwise). A vertex $v$ lies in a point $p_v$ placed inside or on the boundary of $\Delta$ if and only 
if its orientations with respect to the three segments forming the boundary of $\Delta$ are all non-negative. Such a requirement can be modeled by the expression $\interior(\Delta,v):= \orient(p_i,p_j, p_v) \geq 0 \wedge \orient(p_j,p_k, p_v) \geq 0 \wedge \orient(p_k,p_i, p_v) \geq 0$. Hence, the constraint that a vertex $v$ should be placed in $\Gamma$ at a point in the interior or on the boundary of $\Gamma_C$ can tn be expressed by the formula  $inside_C(v): = \bigvee_{\Delta \in {\cal T}} \interior(\Delta,v)$, and Constraint C\ref{co:ivcc} can be modeled by the formula
$\bigwedge_{v \in V \setminus C} inside_C(v)$, given that Constraint C\ref{co:evov} guarantees that no vertex lies on the boundary of $\Gamma_C$.

\removedforshortening{Next, we show that \PDEoutF belongs to the class $\exists\mathbb{R}$.} 
Let now $\langle G = (V,E),C,\Gamma_C \rangle$ be an instance of \PDEoutF. 
\removedforshortening{Assume that $\Gamma_C$ is oriented as prescribed by the planar embedding of $G$, as otherwise $\langle G,C,\Gamma_C \rangle$ is a negative instance of \PDEoutF.} 
Note that, for $\langle G, C, \Gamma_C  \rangle$ to be a positive instance of \PDEoutF, we only require that there exists a drawing $\Gamma$ of $G$ satisfying Constraint C1 and Constraints C3--C5, and the constraint: 

\begin{enumerate}[(C6)]
\item \label{co:ovcc} {\bf Out-polygon constraints:} each vertex $v \in V \setminus C$ lies strictly outside the polygon~$\Gamma_C$.
\end{enumerate}

Such a constraint is easily modeled by the formula $\neg inside_C(v)$ and by requiring that no point representing a vertex in $V \setminus C$ belongs to $\Gamma_C$. The latter is however already achieved by Constraint C3. 
Altogether, we can show the following.

\begin{restatable}{lemma}{masejedamounaltronome}\label{le:PDEinAndoutF}
The \PDEinF (\PDEoutF, resp) problem belongs to the class $\exists\mathbb{R}$, both at fixed and at variable embedding. Moreover, for an instance $\langle G,C,\Gamma_C \rangle$ of \PDEinF (of \PDEoutF, resp.) an ETR formula modeling the existence of an extension $\Gamma$ of $\Gamma_C$ can be constructed using $O(|V(C)|)$ variables and~$O(|E(G)|^2)$ polynomials having maximum degree $4$ whose coefficients can be represented using~$O(1)$~bits.
\end{restatable}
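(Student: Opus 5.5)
We model the existence of the extension $\Gamma$ directly by Constraints C\ref{co:ofc}--C\ref{co:emb} (resp.\ C\ref{co:ofc}, C\ref{co:evov}--C\ref{co:ovcc}) and check that every constraint is a Boolean combination of sign conditions on polynomials of degree at most $4$. For every vertex $v$ we introduce real variables $x_v,y_v$. Constraint C\ref{co:ofc} is then the set of equations $x_{v_i}=x_{p_i}$, $y_{v_i}=y_{p_i}$. Alternatively, the coordinates of the vertices of $C$ can be substituted directly as constants, so that only the vertices of $G-C$ carry genuine unknowns. In the setting where the lemma is applied, namely after \cref{le:kernel}, we have $|V(G)|=O(k)$ and $|V(C)|\le 2k$, so the number of variables is linear in the stated parameter. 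I will keep the coordinates $p_i$ out of the polynomial coefficients by using the equations of C\ref{co:ofc}. Then every remaining polynomial has coefficients in $\{0,\pm1,\pm2\}$, which gives the $O(1)$-bit claim, and the input bit-length enters only through the C\ref{co:ofc} equations.

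Next I encode the constraints one by one. Constraint C\ref{co:ivcc} uses the formula $inside_C(v)$ already sketched, built from a fixed triangulation $\mathcal T$ of $\Gamma_C$ computed once. This gives $O(|V(C)|)$ quadratic polynomials per vertex, hence $O(|V|\cdot|V(C)|)\subseteq O(|E|^2)$ in total, because $G$ is connected and so $|V|,|V(C)|\le|E|+1$. For \PDEoutF, C\ref{co:ovcc} is obtained as $\neg inside_C(v)$, where C\ref{co:evov} excludes the boundary.

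For C\ref{co:evov}, for every edge $(u,v)$ and vertex $z\notin\{u,v\}$, I write
\[
\orient(u,v,z)\neq 0 \;\vee\; (z-u)\cdot(v-u)<0 \;\vee\; (z-v)\cdot(u-v)<0 .
\]
This uses degree-$2$ polynomials and $O(|E||V|)$ of them. For C\ref{co:pc}, by the two observations above it suffices to forbid a common point for every pair of non-adjacent edges $(a,b),(c,d)$. Collinear overlaps would place an endpoint on the other segment and are already excluded by C\ref{co:evov}. So I use
\[
\neg\bigl(\orient(a,b,c)\cdot\orient(a,b,d)\le 0 \wedge \orient(c,d,a)\cdot\orient(c,d,b)\le 0\bigr),
\]
which has degree $4$ and gives $O(|E|^2)$ polynomials.

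The step needing the most care is C\ref{co:emb}, because I must keep the polynomial count at $O(|E|^2)$ and avoid a cubic number of triples. For a vertex $v$ with neighbors $w_1,\dots,w_d$ in the prescribed rotation, I require the following for each consecutive pair $(w_j,w_{j+1})$ and each other neighbor $x$: $x$ does not lie in the open counter-clockwise wedge at $v$ from $w_j$ to $w_{j+1}$. Membership of $x$ in this wedge is expressible with signs of $\orient(v,w_j,w_{j+1})$, $\orient(v,w_j,x)$ and $\orient(v,x,w_{j+1})$. There are two cases. If the wedge is convex, $x$ is in it iff $\orient(v,w_j,x)>0\wedge\orient(v,x,w_{j+1})>0$. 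If the wedge is reflex, $x$ is in it iff it is not in the complementary convex wedge. Collinear configurations are excluded by C\ref{co:evov} and C\ref{co:pc}. Wedges that are empty of the other neighbors for all consecutive pairs force the cyclic order to equal the prescribed one. This uses $O(d^2)$ quadratic polynomials at $v$ and $\sum_v d(v)^2\le(2|E|)^2$ overall. The case $d\le 2$ is trivial, since the rotation is then unique.

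Finally, the conjunction of all these subformulas, prefixed with the existential quantifiers, is an ETR sentence of polynomial size. It is true iff a drawing satisfying the constraints exists, i.e.\ iff the instance is positive. This gives membership in $\exists\mathbb R$ at variable embedding (omit C\ref{co:emb}) and at fixed embedding (include it), with the claimed parameters. I expect the main obstacle to be the correctness of the rotation encoding in degenerate cases. The key facts to check are that the wedge emptiness conditions really characterize the cyclic order, and that all degeneracies are excluded by C\ref{co:evov}, which requires edges to avoid other vertices, including adjacent endpoints.
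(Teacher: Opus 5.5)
Your plan follows the paper's approach: two coordinates per vertex, the triangulation-based $inside_C$ test (negated for \PDEoutF), and a conjunction encoding the constraints C1--C5 (resp.\ C1, C3--C6). However, the formula you give for C4 is wrong. If $a,b,c,d$ are collinear and $\overline{ab}$, $\overline{cd}$ are disjoint, then all four orientations vanish. Both products are then $0\le 0$, so your formula reports a crossing. Perturbing cannot remove this when both edges have fixed endpoints. Take $G=C$ and let $\Gamma_C$ be the U-shaped polygon $(0,0),(3,0),(3,1),(2,1),(2,\frac12),(1,\frac12),(1,1),(0,1)$. Its two sides on the line $y=1$ are non-adjacent, collinear and disjoint, so your sentence is false on a trivially positive instance. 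The repair is the argument you already gave in words: forbid only proper crossings, $\neg\bigl(\orient(a,b,c)\cdot\orient(a,b,d)<0\wedge\orient(c,d,a)\cdot\orient(c,d,b)<0\bigr)$. Any other common point of two closed segments places an endpoint of one on the other, and C3 already excludes that.

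A similar degeneracy is mishandled in C5. You claim collinear configurations are excluded by C3 and C4, but this is false when $w_j$ and $w_{j+1}$ lie on opposite rays from $v$, i.e.\ $\orient(v,w_j,w_{j+1})=0$; that is a legal drawing. Your case split constrains only $\orient>0$ (convex) and $\orient<0$ (reflex), so straight wedges go unchecked, and a wrong rotation can pass. Take prescribed order $w_1,\dots,w_6$ and directions $90^\circ,180^\circ,0^\circ,30^\circ,210^\circ,270^\circ$ from $v$. Three of the six wedges are straight and unchecked, and the three non-straight wedges are empty. Yet the actual order is $w_1,w_2,w_5,w_6,w_3,w_4$. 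The fix is to apply the convex test whenever $\orient(v,w_j,w_{j+1})\ge 0$; at a straight angle it describes exactly the open half-plane to the left of $\ell_{vw_j}$.

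Finally, your formula uses $2|V(G)|$ variables, or $2|V(G)\setminus V(C)|$ after substituting the fixed coordinates. The remark about the kernel does not turn this into $O(|V(C)|)$; for example, a triangle $C$ with many interior vertices needs many more variables. State plainly that the bound is $O(|V(G)|)$. That is what the paper's own sketch gives (variables $x_v,y_v$ for every $v\in V$), and it is all the FPT application needs. Isolating the input bit-length in the C1 equations is a sensible reading of the $O(1)$-bit claim.
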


\begin{restatable}{lemma}{fptvclemma}\label{th:pde-in-and-out}
\PDEinF and \PDEoutF are FPT with respect to the vertex cover number, both at fixed and at variable embedding.
\end{restatable}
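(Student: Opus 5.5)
The plan is to combine the kernel of \cref{le:kernel} with the ETR formulation of \cref{le:PDEinAndoutF} and Renegar's algorithm (\cref{thm:renegar}). Given an instance $\langle G,C,\Gamma_C\rangle$ of \PDEinF or \PDEoutF with $\vc(G)=k$, we first need a minimum vertex cover $U$ to apply \cref{rule:vc-deg2}. We compute it in $O(2^k\cdot n)$ time by the standard bounded search tree algorithm. Alternatively, we can note that $|V(C)|\le 2k$ is forced, since every edge of the cycle $C$ has an endpoint in $U$, so $k\ge |C|/2$. We then apply \cref{rule:vc-deg2} and obtain, by \cref{le:kernel}, an equivalent instance $\langle G',C,\Gamma_C\rangle$ with $|V(G')|=O(k)$ in $O(n)$ time. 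Being a planar graph, $G'$ has $|E(G')|=O(k)$.

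Next we build the ETR formula of \cref{le:PDEinAndoutF} for the kernel. In the fixed embedding setting we include Constraint C5 and retain the embedding that $G'$ inherits from $G$. This inheritance is valid: the removed vertices of $V_{cd}$ are degree-$2$ vertices with the same neighbourhood, so in any solution for $G'$ they can be reinserted next to the surviving representative of $V_{cd}$. This is the equivalence already asserted by \cref{le:kernel}, which we use as given.

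The formula uses $N=O(|V(G')|)=O(k)$ variables: two per vertex of $G'\setminus C$, since the vertices of $C$ are constants. It has $P=O(|E(G')|^2)=O(k^2)$ polynomials, up to the additional $O(|V(C)|)$ terms per vertex coming from the triangulation $\mathcal T$ in $inside_C(\cdot)$, which is $O(k^2)$ polynomials in total. The maximum degree is $R\le 4$. The coefficients are the coordinates of $\Gamma_C$, so their bit length $L$ is bounded by the input size. By \cref{thm:renegar}, satisfiability is decided in $(L\log L\log\log L)\cdot (O(k^2)\cdot 4)^{O(k)} = 2^{O(k\log k)}\cdot \mathrm{poly}(|I|)$ time. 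Together with the $O(2^k n)$ preprocessing, this gives an FPT running time.

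The step needing most care is the bookkeeping in Renegar's bound. \cref{le:PDEinAndoutF} states $O(|V(C)|)$ variables, which is $O(k)$ by the kernel bound $|V(C)|\le 2k$. In addition, the number of free variables must be $O(|V(G')|)=O(k)$ and not depend on $n$, and the polynomials and their count must be polynomial in $k$, so that $(PR)^{O(N)}$ is a function of $k$ alone. The coefficient length $L$ enters only polynomially and depends only on the input encoding. The variable embedding case needs no extra work here: C5 is simply dropped, which only shrinks the formula, because for a single face (\PDEinF or \PDEoutF) the formula does not need to choose faces.
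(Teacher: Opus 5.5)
Your proposal is correct and follows essentially the same route as the paper's proof: compute a minimum vertex cover, kernelize via \cref{rule:vc-deg2} and \cref{le:kernel}, encode the kernel with the ETR formula of \cref{le:PDEinAndoutF} ($O(k)$ variables and $O(k^2)$ polynomials of degree at most $4$), and apply \cref{thm:renegar}. The differences do not affect the FPT bound. You use an $O(2^k n)$ search tree instead of the $O(kn+1.274^k)$ algorithm, and you bound the coefficient length $L$ by the input size rather than by $O(1)$ bits, which either way contributes only a polynomial factor. Your aside that removed vertices can be reinserted ``next to the surviving representative'' is not self-evident at fixed embedding, where they need not be consecutive in the rotations at $c$ and $d$; but, like the paper, you rest on \cref{le:kernel} for that equivalence.
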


\begin{proof}[Sketch]
Let $\langle G,C,\Gamma_C \rangle$ be an instance of \PDEinF (of \PDEoutF, resp.) with $\vc(G)=k$. 
First, we compute a minimum-size vertex cover of $G$ in $O(kn+1.274^{k})$ time~\cite{chenKanjXia:tcs10}.
Second, by \cref{le:kernel}, we construct a kernel for such an instance, whose size is linear with respect to $k$. Third, by \cref{le:PDEinAndoutF}, we model the existence of an extension for such a kernel as an ETR formula using $O(k)$ variables and $O(k^2)$ polynomials of maximum degree $4$ whose coefficients use $O(1)$ bits. Finally, we use \cref{thm:renegar} to test the satisfiability of such a formula.
\end{proof}


For instances with a prescribed embedding, \cref{{th:pde-in-and-out},le:fixed-emb-from-face-to-graph}, and the property that having a vertex cover of a certain maximum size $k$ is \mbox{monotone under subgraphs yield the following.}

\begin{theorem}\label{th:pde-fpt-vc}
{\sc PDE} is FPT with respect to the vertex cover number for instances $\langle G,H,\Gamma_H\rangle$ such that $H$ is biconnected and $G$ has a fixed planar embedding. 
\end{theorem}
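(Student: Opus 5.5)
The plan is to reduce to the face-restricted problems via \cref{le:fixed-emb-from-face-to-graph} and plug in the FPT algorithms of \cref{th:pde-in-and-out}. Let $\langle G,H,\Gamma_H\rangle$ be an $n$-vertex instance with $H$ biconnected, $G$ embedded, and $\vc(G)=k$.

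First, I would recall the mechanism behind \cref{le:fixed-emb-from-face-to-graph}. Since $H$ is biconnected, every face $f$ of $\Gamma_H$ is bounded by a simple cycle $C_f$. The fixed embedding of $G$ determines, for each connected component of $G-V(H)$ together with its attachment edges, the face of $H$ into which it must be drawn. Edges of $G$ with both endpoints in $H$ are already determined by $\Gamma_H$ (we may assume $H$ is induced). For each face $f$ we then build the graph $G_f$, consisting of $C_f$ plus all components assigned to $f$ and their attachments. This gives an instance $\langle G_f,C_f,\Gamma_{C_f}\rangle$ of \PDEinF if $f$ is internal, or of \PDEoutF if $f$ is the outer face. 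Each such instance inherits the embedding induced by that of $G$.

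The original instance is positive if and only if all face instances are positive, because drawings inside distinct faces cannot interact. Each vertex of $G-V(H)$ appears in exactly one face instance, while vertices of $H$ may be shared, so the total size is $O(n)$ by Euler's formula. This is where super-additivity enters in the lemma.

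Next, I would bound the parameter of each subinstance. Every $G_f$ is a subgraph of $G$, and a vertex cover of $G$ restricted to $V(G_f)$ covers $G_f$, so $\vc(G_f)\le k$. By \cref{th:pde-in-and-out}, each face instance can be solved in time $g(k)\cdot |G_f|^{c}$ for a computable $g$ and a constant $c\ge 1$. Concretely, the algorithm computes a minimum vertex cover, applies the kernel of \cref{le:kernel}, builds the ETR formula of \cref{le:PDEinAndoutF} with $O(k)$ variables and $O(k^2)$ polynomials, and runs \cref{thm:renegar}.

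The function $x\mapsto g(k)x^c$ is super-additive in $x$ for $c\ge1$, so \cref{le:fixed-emb-from-face-to-graph} yields total time $O(g(k)\,n^c)$, which is FPT. If one prefers not to rely on the exact form of that lemma, the argument can be run directly: summing over faces gives $\sum_f g(k)|G_f|^c\le g(k)\,(\sum_f|G_f|)^c=O(g(k)n^c)$.

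I expect the main obstacle to be confirming that the lemma's hypothesis fits here. The time functions must be super-additive in instance size while also depending on $k$. One needs to treat $k$ as fixed across all subinstances, which is justified by the monotonicity $\vc(G_f)\le\vc(G)$. One must also check that the \PDEinF and \PDEoutF instances generated by the lemma are embedded instances to which \cref{th:pde-in-and-out} applies at fixed embedding, including the orientation check of $\Gamma_{C_f}$.
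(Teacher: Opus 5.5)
Your proposal is correct and follows the paper's argument: the paper obtains this theorem exactly by combining \cref{le:fixed-emb-from-face-to-graph} with \cref{th:pde-in-and-out} and the fact that the vertex cover number is monotone under taking subgraphs. You also write the running time of each face instance multiplicatively as $g(k)\cdot|G_f|^c$, so that it is super-additive in the instance size; a bound with an additive $1.274^k$ term would not be, and the paper leaves this detail implicit.
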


In the rest of the section, we turn our attention to instances with a variable embedding and show that, also in this case, the problem is FPT w.r.t. to the vertex cover number.

\begin{theorem}\label{th:pde-fpt-vc-variable}
{\sc PDE} is FPT with respect to the vertex cover number for instances $\langle G,H,\Gamma_H\rangle$ such that $H$ is biconnected.
\end{theorem}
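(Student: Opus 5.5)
The plan follows the outline given in the introduction: shrink $G-H$ to $O(k)$ vertices, restrict attention to $O(k)$ faces of $\Gamma_H$, and then guess an assignment of components to faces and solve each face with an ETR formula. Let $k=\vc(G)$ and fix a minimum vertex cover $U$, computed in $O(kn+1.274^k)$ time~\cite{chenKanjXia:tcs10}. Since $H$ is biconnected, every connected component $K$ of $G-H$ must be drawn entirely inside a single face $f$ of $\Gamma_H$, and $f$ must contain all attachment vertices of $K$ on its boundary. Once each component has been assigned a face, the problem splits into independent instances of \PDEinF (for internal faces) or \PDEoutF (for the outer face). The face boundary acts as $C$. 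Components assigned to the same face interact only inside that face, so each face is checked separately.

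\textbf{Step 1: reduce the vertices of $G-H$.} The vertices of $G-H$ outside $U$ form an independent set, and their neighbours lie in $U$. A component of $G-H$ containing no vertex of $U$ is a single vertex $x\notin U$ whose neighbours are all in $U\cap V(H)$. By \cref{le:degree-3-vc}, applied with $X=U$, there are at most $2k$ such vertices with three or more neighbours in $U$. The vertices with exactly two neighbours realise at most $3k$ distinct neighbour pairs.

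The reduction is a variant of \cref{rule:vc-deg2}: for each pair $\{c,d\}\subseteq U$, keep only one degree-$2$ vertex of $V\setminus(U\cup V(H))$ with neighbourhood $\{c,d\}$. Equivalence holds because the deleted copies can be drawn arbitrarily close to the kept path $c$--$x$--$d$ without creating crossings, the same argument used in the general strategy of \cref{se:pendant}. After this step $|V(G-H)|=O(k)$, and $G-H$ has $O(k)$ components.

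\textbf{Step 2: find the interesting faces.} Each component $K$ needs a face whose boundary contains all attachment vertices of $K$, and those attachments lie in $U$, except for attachments of $U$-vertices of $K$ to vertices of $H$ outside $U$. I would call a face \emph{interesting} if its boundary contains at least three vertices of $U\cap V(H)$, or if it is the face chosen as canonical for some pair or single attachment set.

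The faces sharing three or more vertices of $U$ can be bounded by $O(k)$ using \cref{le:degree-3-vc}. To do so, add a dummy vertex in each face, adjacent to the $U$-vertices on its boundary. The resulting graph is planar, so the lemma applies to the dummy vertices.

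For attachment sets of size at most two, many faces may qualify. I would argue that for each pair of $U$-vertices only $O(1)$ candidate faces matter. This is the \textbf{main obstacle}: geometry can make some faces unusable while others remain usable. The first approach I would try is to bound the choices per pair through the kernel structure. There are only $O(k)$ components, and each needs only a feasible face. Whether a given face is feasible for a given component is decided by an ETR test of size $O(k)$ plus the face size. The faces incident to a pair of vertices $c,d$ form a sequence around them, and at most $O(k)$ components are placed in them, so I would try to replace the search for a face by per-component, per-face feasibility tests. Faces that receive only a single component can be tested independently in polynomial time times $f(k)$.

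\textbf{Step 3: guess and solve.} Guess an assignment of the $O(k)$ components to the $O(k)$ interesting faces, giving $k^{O(k)}$ choices. Components whose only candidate faces are uninteresting are placed alone in such a face and tested independently. For each face $f$ that receives components, form the instance $\langle G_f,C_f,\Gamma_{C_f}\rangle$, where $C_f$ is the boundary of $f$ and $G_f$ consists of $C_f$ together with the assigned components. If the boundary is not a simple cycle, use the biconnectivity of $H$, which guarantees that every face boundary is a cycle.

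The boundary $C_f$ can be long. To keep the ETR formula small, note that only $O(k)$ boundary vertices are attachments. I would encode the constraint ``inside the polygon'' with the triangulation formula $inside_C$, which uses $O(|C_f|)$ polynomials but no extra variables. The number of variables then stays $O(k)$, so \cref{thm:renegar} runs in time $|C_f|^{O(k)}$.

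That bound is not yet FPT. To fix it, I would replace it by $(n)^{O(1)}\cdot 2^{O(k\log k)}$ via cell decomposition, or alternatively simplify the polygon by keeping only the relevant convex-hull and visibility structure. If neither works, the fallback is the fixed-embedding \cref{th:pde-fpt-vc}, applied after guessing rotation systems of the $O(k)$-vertex kernel part, of which there are $k^{O(k)}$.
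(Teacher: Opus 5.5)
\textbf{Gap 1: the number of candidate faces is never bounded.} Your Step~1 and your $O(k)$ bound on faces with at least three cover vertices match the paper. The argument breaks at the step you yourself flag as the main obstacle. You never show that $O(k)$ faces suffice, yet Step~3 guesses an assignment to ``the $O(k)$ interesting faces''. It then handles everything else by placing components ``alone'' in uninteresting faces and testing them independently, which is unjustified. Many components may have the same $\Omega(n)$ candidate faces, and they may need to share a face or compete for the geometrically usable ones. Per-component, per-face tests cannot capture how components drawn in the same face interact.

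The missing idea is the one behind \cref{le:redraw}. Every edge on a face boundary has an endpoint in $U$, so a face with only two cover vertices $u,v$ on its boundary is a $3$- or $4$-cycle. Call \emph{anchors} the non-cover vertices of $H$ that are adjacent to $G'-H$. Each anchor has at least three neighbours in $U$, so by \cref{le:degree-3-vc} there are $O(k)$ anchors and only $O(k)$ faces incident to them. A small face incident to no anchor can only contain components attached to $u$ and $v$ alone. For those components the geometry you worry about is irrelevant: triangles and quadrilaterals are star-shaped, so the contents of all anchor-free small faces on $\{u,v\}$ can be redrawn in one representative face (the paper uses Hong and Nagamochi~\cite{DBLP:journals/dam/HongN08}). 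By \cref{le:degree-3-vc} there are at most $3k$ such pairs, which is how the paper gets $27k$ interesting faces.

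\textbf{Gap 2: the face boundary is not long.} $C_f$ is a cycle of $G$, so no two consecutive vertices of $C_f$ lie outside $U$. Hence $|V(C_f)|\le 2|V(C_f)\cap U|\le 2k$, which is the bound $|V(C)|\le 2k$ of \cref{le:kernel}. Each per-face instance therefore has $O(k)$ vertices and $O(k^2)$ polynomials, and \cref{thm:renegar} decides it in $k^{O(k)}$ time times a polynomial in the input size. The $|C_f|^{O(k)}$ issue and your speculative fixes for it disappear.

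Your fallback would not have worked anyway. A rotation system on the $O(k)$ vertices of $G'-H$ does not fix the embedding of $G'$. You must also decide where the edges towards $G'-H$ enter the rotations of the vertices of $H$, i.e., which face hosts each component. A cover vertex of $H$ can have degree $\Omega(n)$, so this gives up to $n^{\Theta(k)}$ choices; guessing rotations is just the face-selection problem again.

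\textbf{A possible repair.} With $|V(C_f)|\le 2k$ in hand, your per-face testing idea can be completed without interesting faces. Run a dynamic program over the $O(n)$ faces whose state is the set of the $O(k)$ components already placed, and test every pair (face, subset of components) with an ETR formula. This takes $k^{O(k)}\cdot n^{O(1)}$ time. Testing one component at a time, as you proposed, is not enough.
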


Let $G=(V,E)$ and let $U$ be a vertex cover of $G$ of minimum size $k$. At a high level, the proof is as follows. First, we obtain in linear time an instance $\langle G', H, \Gamma_H\rangle$ equivalent to $\langle G, H, \Gamma_H\rangle$ such that the number of vertices in $G' - H$ is in $O(k)$. Second, we identify $O(k)$ faces of $\Gamma_H$ in which the vertices of $G' - H$ can be assumed to lie in a straight-line planar drawing of $G'$ that extends~$\Gamma_H$, if any such a drawing exists. Third, we consider all possible placements of the vertices of $G'-H$ inside such faces, and use \cref{th:pde-in-and-out} to test whether a placement exists for which the corresponding instances of \PDEinF and \PDEoutF are all positive, in which case $\langle G, H, \Gamma_H\rangle$ is a positive instance of \PDE, or not, in which case $\langle G, H, \Gamma_H\rangle$ is a negative instance of \PDE. We now formalize this algorithm.

%
First, for every pair of vertices $u,v$ of $U$, we define $Z_{uv}$ as the set of degree-$2$ vertices in $V \setminus \{U \cup V(H)\}$  that have $u$ and $v$ as neighbors, and remove $|Z_{uv}|-1$ vertices in $Z_{uv}$ from~$G$. This procedure runs in $O(n)$ time and produces an equivalent instance $\langle G', H, \Gamma_H\rangle$. We verify in $O(n)$ time whether, for each vertex $w$ of $G'-H$, all the vertices in $H$ adjacent to $w$ appear on the boundary of a face of $\Gamma_H$. In the negative case, we reject the instance. 

\subparagraph{\bf Vertices of $G'$.} The vertex set of $G'$ contains $V(H)$ and at most $6k$ vertices not in $V(H)$, namely $k$ vertices in $U$,  at most $3k$ degree-$2$ vertices in $V \setminus \{U \cup V(H)\}$, and at most $2k$ vertices in $V \setminus \{U \cup V(H)\}$ of degree at least three, by \cref{le:degree-3-vc} with $X=U$.

\begin{figure}
    \centering
    \includegraphics[width=.7\linewidth]{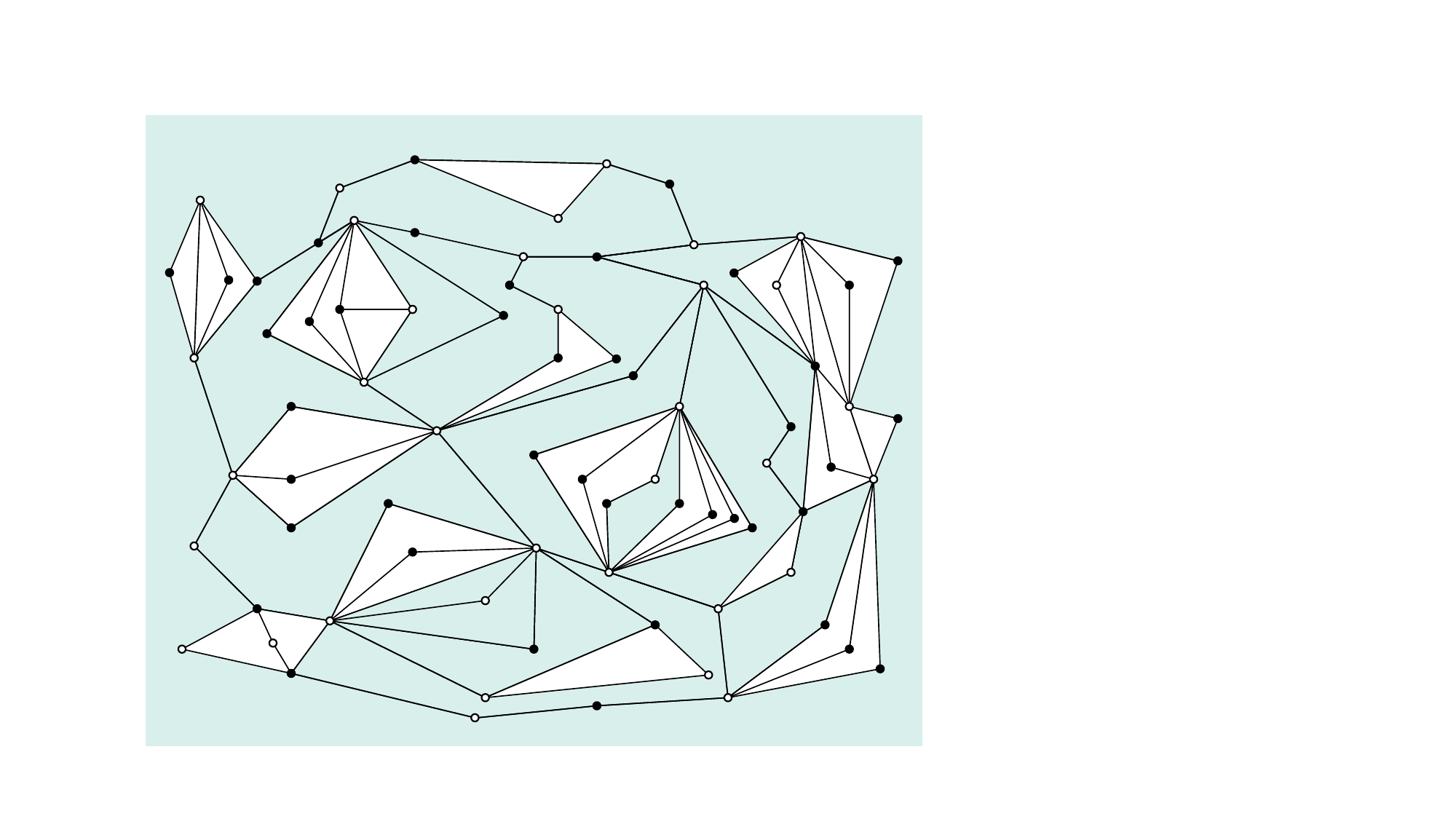}
    \caption{The drawing $\Gamma_H$, where the vertices of $H$ in $U$ are white, whereas those not in $U$ are black. Large faces are shaded blue, including the outer face. Small faces are white.}
    \label{fig:vc-faces}
\end{figure}


\subparagraph{\bf Vertices of $H$.} Let $V_{\geq 3}(H)$ be the set of vertices in $V(H) \setminus U$ adjacent to at least three vertices in $U$ and let $V_{=2}(H)$ be the set of vertices in $V(H) \setminus U$ adjacent to two vertices in $U$, which we call \emph{covered}. Since $U$ is a vertex cover of $G'$, every edge in $H$ has at least one endvertex in $U$. Since $H$ is biconnected, $\{V_{\geq 3}(H), V_{=2}(H)\}$ is a partition of $V(H) \setminus U$. By \cref{le:degree-3-vc} with $X = U$, we have $|V_{\geq 3}(H)|\leq 2k$, while $|V_{=2}(H)|$ might be in $\Omega(n)$, see \cref{fig:vc-faces}. 
%
The neighbors in $U$ of a covered vertex $w$ are in $H$; indeed, since $H$ is biconnected, $w$ has at least two neighbors in $H$ and since $w$ does not belong to $U$, all of its neighbors do. 



\subparagraph{Faces of $\Gamma_H$.} We aim to partition the faces of $\Gamma_H$ into two sets, see \cref{fig:vc-faces}. The first one contains $O(k)$ {\em interesting faces}, where we allow vertices of $G'-H$ to be inserted. The second one contains the {\em useless faces}, where we do not allow vertices of $G'-H$ to be inserted. We deem interesting every face of $\Gamma_H$ with at least $3$ vertices in $V(H)\cap U$ on its boundary; we call such faces \emph{large}. There are at most $2k$ large faces; this follows by \cref{le:degree-3-vc} with $X = U$ by inserting a star in each face of $\Gamma_H$. All other faces of $\Gamma_H$ have two vertices in $U$ on their boundaries, and thus they are delimited by $3$- or $4$-cycles; we call such faces \emph{small}. 
We call \emph{anchors} the covered vertices incident to small faces and adjacent to vertices in $G'-H$. We deem interesting the small faces that are incident to an anchor. We prove that the number of anchors is at most $11k$, hence the number of faces that we deem interesting because of them is at most $22k$; indeed, an anchor is a covered vertex, hence it has degree $2$ in $H$, thus it is incident to two faces of $\Gamma_H$. In order to prove that the number of anchors is at most $11k$, we consider three types of anchors. Those incident to large faces of $\Gamma_H$ with $4$ incident vertices are at most $2k$, as this is an upper bound on the number of large faces and $3$ of the $4$ vertices incident to each of such faces are in $U$. The number of anchors that are incident to large faces of $\Gamma_H$ with at least $5$ vertices and the number of anchors that are incident to small faces only are bounded by $7k$ and $2k$, respectively, by the following two lemmata. 

\begin{lemma}\label{le:number-anchors}
Let $G$ be a biconnected planar graph with vertex cover number $k$. The number of vertices incident to at least one face of length strictly larger than $4$ is at most $7k-12$.
\end{lemma}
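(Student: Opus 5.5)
Fix a planar embedding of $G$ and a minimum vertex cover $U$, with $|U|=k$. Every vertex incident to a face $f$ of length at least $5$ either lies in $U$ or is a non-$U$ vertex on $f$. There are at most $k$ vertices of the first kind, so the work is to bound, over all long faces $f$, the number of non-$U$ vertices on $\partial f$. Since $G$ is biconnected, every face is bounded by a simple cycle, and $V\setminus U$ is independent. Hence the vertices on a facial cycle alternate, with no two consecutive vertices outside $U$. A face of length $\ell\ge 5$ therefore has at least $\lceil \ell/2\rceil\ge 3$ vertices of $U$ on its boundary, and at most $\lfloor \ell/2\rfloor$ vertices outside $U$. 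Moreover, that number is at most the number of $U$-vertices on the face.

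The plan is to build an auxiliary planar graph $A$ on vertex set $U$. Inside each long face $f$, join consecutive $U$-vertices along $\partial f$ by curves drawn inside $f$; when two $U$-vertices are adjacent on $\partial f$, the edge itself is used. This adds a cycle $C_f$ on the $U$-vertices of $f$, drawn inside $f$. Each non-$U$ vertex on $f$ is the middle vertex of some $u\,w\,u'$ on $\partial f$ and is charged to the edge $uu'$ of $C_f$. Because $V\setminus U$ is independent, each non-$U$ vertex of $f$ receives exactly one such charge from $f$. The number of non-$U$ vertices counted on $f$ is thus at most $|E(C_f)|$, the number of edges of $A$ inside $f$.

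We then want to sum over faces. A non-$U$ vertex on several long faces is counted several times, which is fine since we only need an upper bound on its number. The different cycles $C_f$ lie in distinct faces, so $A$ (with possible multi-edges, and loops excluded because $C_f$ has at least $3$ vertices) is a plane multigraph. We will also add the original edges between $U$-vertices, or simply ignore them. The issue is parallel edges. Two $U$-vertices can be consecutive on several faces, so the bound $3k-6$ for simple planar graphs does not directly apply. However, each face of the plane multigraph formed by the union of the $C_f$ is non-degenerate: the region inside $f$ bounded by $C_f$ has length $|C_f|\ge 3$, and regions between parallel edges contain vertices of $G$ (such as the charged non-$U$ vertices, or parts of $G$) and are thus not empty lens faces. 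This is the key step. I expect it to be the main obstacle: showing that no face of $A$ is a $2$-gon without content, or else replacing parallel copies appropriately, so that Euler's formula yields $|E(A)|\le 3k-6$.

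Given that, the non-$U$ vertices on long faces number at most $\sum_f |E(C_f)|\le |E(A)|$, and the naive combination gives $k+3k-6$, which is less than $7k-12$. So the target bound has slack. This suggests a cruder route: allow parallel edges and bound them by showing that every face of the multigraph $A$ has length at least $2$ and contains a vertex of $G$ inside. Each such face would be charged by its own content, and counting with Euler's formula would give roughly $|E(A)|\le 6k-12$, hence $k+6k-12=7k-12$ in total. I will pursue this version: define $A$ as the union of all $C_f$ together with the original $U$–$U$ edges, argue that every face of $A$ of length $2$ contains either a non-$U$ vertex of $G$ or a cycle $C_f$ region, and bound the $2$-faces via the $3$-faces so that $|E(A)|\le 2(3k-6)$. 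Together with the $k$ vertices of $U$, this yields the claimed $7k-12$.
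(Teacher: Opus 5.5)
\textbf{Verdict: there is a genuine gap.} Your charging step is sound: each non-$U$ vertex on a long face $f$ is charged to a distinct curve of $C_f$, so it suffices to bound the number of charged curves. The problem is the counting step you single out as key, which uses the wrong criterion.

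Knowing that every digon face of $A$ contains something gives no control on $|E(A)|$. Euler's formula yields $3k-6$ only if every face has length at least $3$, whatever the faces contain. With digons allowed, even a $2$-vertex multigraph can have arbitrarily many edges, each digon containing a vertex. Concretely, take a triangulation $T$ on $U$, $|U|=k$. For every edge $uv$ of $T$, add two vertices adjacent to $u$ and $v$, one inside each of the two triangles of $T$ incident to $uv$, and let $G$ be the resulting graph without the edges of $T$.
\begin{itemize}
\item $U$ is a minimum vertex cover, since a matching saturating $U$ exists.
\item Every triangle of $T$ becomes a face of length $6$, and all $k+(6k-12)$ vertices lie on these faces. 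So the bound $7k-12$ is tight and there is no slack.
\item Hence $|E(A)|\le 3k-6$, and the resulting bound $4k-6$, is false. Here $A$ has $6k-12$ edges, and each of its digon faces contains two non-$U$ vertices.
\item If you also add the edges of $T$ to $G$, the long faces are unchanged and every digon face of $A$ still contains a non-$U$ vertex. But $A$ with all original $U$--$U$ edges has $9k-18>2(3k-6)$ edges.
\end{itemize}
So your final target is false for the $A$ you define, and the property you plan to prove cannot imply it. Moreover, the face of $A$ bounded by $C_f$ inside $f$ contains no vertex of $G$, so ``every face of $A$ contains a vertex of $G$'' also fails.

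The missing idea is a per-edge condition rather than a per-face one. Every edge of $C_f$ is incident to the face of $A$ bounded by $C_f$ inside $f$, which has length $|C_f|\ge 3$. It is a face of $A$ because the interior of $f$ meets no vertex or edge of $G$ and no curve of another $C_g$. Then use the fact that in a plane multigraph on $k$ vertices at most $2(3k-6)$ edges are incident to a face of length at least $3$. To see this, collapse each maximal sequence of consecutive digons with the same two end-vertices into one edge. Only the two outermost edges of such a sequence can be incident to a non-digon face. Then apply Euler's formula to the collapsed graph, all of whose faces have length at least $3$.

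For this to work, $A$ must consist of the $C_f$ only: an original $U$--$U$ edge lying on no long face can be sandwiched between two digons, as in the example above. This gives at most $6k-12$ non-$U$ vertices on long faces, hence at most $7k-12$ vertices in total. This is essentially the paper's argument. The paper draws one edge $e_x=(u_x,v_x)$ per non-$U$ vertex $x$ along the path $(u_x,x,v_x)$, shows that its side towards a chosen long face $f_x$ is not a digon, and applies the same $6k-12$ claim.
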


\begin{proof}
Let $U$ be a vertex cover of $G$ of size $k$ and let $X$ be the set of vertices incident to at least one face of length strictly larger than $4$. Let $I = V(G)\setminus U$ be the subset of vertices of $G$ not in $U$, which is an independent set of $G$ given that $U$ is a vertex cover of $G$.
Since $ |X| \le |U| + |X\setminus U| = k + |X\setminus U|$, 
we only need to prove that $|X\setminus U| \leq 6k-12$.

Fix a planar drawing $\Gamma$ of $G$. For each vertex $x \in X\setminus U$, choose one face $f_x$ incident to $x$ whose length is strictly larger than $4$. On the boundary of $f_x$, the vertex $x$ has two neighbours; denote them by $u_x$ and $v_x$ and note that $u_x,v_x\in U$ and that $u_x \neq v_x$ given that the boundary of $f_x$ is a simple cycle. We now define an auxiliary planar multigraph $M$. Its vertex set is $U$, and for every vertex $x\in X\setminus U$, the graph $M$ contains an edge $e_x = (u_x,v_x)$. 
Hence, $|E(M)| = |X\setminus U|$. 
We construct a planar drawing $\Gamma_M$ of $M$ by placing its vertices as in $\Gamma$ and by drawing each edge $e_x$ in the plane along the two-edge path $(u_x, x, v_x)$. 

We claim that every edge $e_x$ of $M$ is incident to at least one face of $\Gamma_M$ of length at least $3$.
Indeed, the edge $e_x$ was defined using the face $f_x$ of $G$, whose length is strictly larger than $4$. Consider the side of $e_x$ corresponding to the face $f_x$. Suppose, for a contradiction, that this side becomes a digon face in $\Gamma_M$. Then this digon is bounded by $e_x$ and some parallel edge $e_y$, where $e_y = (u_x,v_x)$, for some vertex $y\in X\setminus U$. In $G$, the two corresponding paths $(u_x,x,v_x)$ and $(u_x,y,v_x)$ bound $f_x$, which hence has length $4$, a contradiction. Therefore, the side of $e_x$ corresponding to $f_x$ is not a digon. Consequently, $e_x$ is incident to a face of $\Gamma_M$ of length at least $3$. 

It remains to bound the number of edges incident to faces with at least $3$ vertices in a planar multigraph on $k$ vertices. For this, we use the following observation.

\begin{claim}
Let $M$ be a planar multigraph with $k$ vertices. Then the number of edges of $M$ that are incident to a face of length at least $3$ is at most $6k-12$.
\end{claim}

\begin{proof}
For any two vertices $u_x$ and $v_x$ of $M$, consider every maximal sequence of digon faces $f_1,\dots,f_d$ such that: (i) each face $f_i$ is incident to $u_x$ and $v_x$, and (ii) each two consecutive faces $f_i$ and $f_{i+1}$ share an edge. We can replace all the edges incident to the faces $f_1,\dots,f_d$ with a single edge $(u_x,v_x)$. Denote by $M'$ the resulting embedded graph. Note that, in $M'$, every face has at least three incident edges. The number of edges of $M$ that are incident to a face of length at least $3$ is at most twice the number of edges of $M'$. 

Let $\mu$ and $\phi$ be the number of edges and faces of $M'$, respectively.  
The sum of the number of edges incident to each face of $M'$ is equal to $2\mu$. Since each face of $M'$ has at least three incident edges, we have that $2\mu \geq 3\phi$. By Euler's formula, we thus get that $\phi = 2 - k + \mu$. Hence, $2 - k + \mu \leq \frac{2\mu}{3}$, which gives $\mu \leq 3k-6$.
\end{proof}

Applying the claim to our auxiliary graph $M$, and using the fact that every edge of $M$ is incident to a face of length at least $3$, we obtain $|X\setminus U| = |E(M)| \le 6k-12$. Therefore,
$|X| \leq |U| + |X\setminus U| \le k + 6k - 12 = 7k-12$.
\end{proof}

\begin{restatable}{observation}{anchorssmallobs}\label{obs:number-small}
The number of anchors incident to small faces only is at most $2k$.
\end{restatable}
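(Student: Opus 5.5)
Let $w$ be an anchor incident to small faces only. By definition $w$ is a covered vertex, so it has exactly two neighbours in $H$, both in $U$; call them $a$ and $b$. Because $w$ has degree $2$ in $H$, it lies on exactly two faces $f_1,f_2$ of $\Gamma_H$. Both faces are small, so each is bounded by a $3$-cycle or a $4$-cycle containing the path $(a,w,b)$. Since $w\notin U$, $a$ and $b$ are the only vertices of $U$ on $f_1$ and $f_2$. We also know that $w$ has some neighbour $z$ in $G'-H$. The plan is to charge each such anchor to a structure counted by \cref{le:degree-3-vc} with $X=U$, so that at most $2k$ anchors survive.

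First I will show that $z\in U$. The edge $(w,z)$ must be covered by $U$, and $w\notin U$, so $z\in U\cap V(G'-H)$. Next I will locate $z$ geometrically. In any extension, $z$ lies in $f_1$ or in $f_2$, and every $H$-neighbour of $z$ must be on the boundary of that face. Since the preprocessing rejected the instance otherwise, we may assume this already holds combinatorially, i.e., some face incident to $w$ contains all $H$-neighbours of $z$ on its boundary. As that face is small, those neighbours form a subset of at most four vertices $\{a,w,b,c\}$, where $c$ is the fourth vertex of a $4$-face (and $c\notin U$).

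The counting step is the main obstacle. I would consider the vertices $z\in U\setminus V(H)$ that are adjacent to at least one anchor of this kind. The natural map sends each such anchor $w$ to its neighbour $z$. A single $z$ can be adjacent to at most two vertices of $V(H)\setminus U$ lying on a small face, namely $w$ and $c$. Hence at most $2$ anchors are charged to each $z$, and the vertices $z$ are distinct elements of $U$ not in $H$, so there are at most $k$ of them. This gives the bound $2k$.

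The delicate point is the claim that one $z$ serves anchors from only one face. If $z$ were adjacent to anchors on two different small faces, those anchors would not share a face, so $z$ could not be placed inside a single face of $\Gamma_H$. That is impossible in a valid extension, and the preprocessing check rejects the instance in that case. I would therefore verify that the preprocessing check per vertex $w$ of $G'-H$ is exactly what enforces this. I would also verify that $w$ cannot have another neighbour in $G'-H$ outside $U$, which holds because $w\notin U$ forces all of its neighbours into $U$. Should a charge of $2$ per $z$ turn out to be too tight, for instance because $c$ could also be an anchor, the fallback is to note that a $4$-face contributes at most two non-$U$ vertices anyway. The total is therefore at most $2$ anchors per $z\in U$, and the bound of $2k$ follows.
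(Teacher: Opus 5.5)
Your proof is correct. The preprocessing check forces all $H$-neighbours of $z$ onto a single face, which contains $w$ and is therefore one of its two small faces. Hence $z$ has at most two $H$-neighbours outside $U$, each $z\in U\setminus V(H)$ is charged at most twice, and the bound $2k$ follows. The paper's own proof of this observation is not in the text provided, but this charging argument fits how the statement is set up: it uses exactly the small-faces-only hypothesis and the face check the paper performs beforehand.

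There is also a one-line alternative via \cref{le:degree-3-vc} with $X=U$ applied to $G'$. Each such anchor is adjacent to the three distinct vertices $a,b,z\in U$, so there are at most $2k$ of them. This route needs neither the preprocessing check nor the small-faces-only hypothesis, and in fact bounds all anchors at once.
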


Finally, for each pair $(u,v)$ of vertices in $V(H)\cap U$ that are incident to a common small face, if there exist small faces incident to $u$ and $v$ and not incident to anchors, we also deem interesting one of such faces and useless the others. Since there are at most $3k$ such pairs $(u,v)$, which can be proved again via \cref{le:degree-3-vc}, the number of small faces deemed interesting because of them is at most $3k$. This brings the total number of interesting faces to $27k$.


We now show that restricting the attention to interesting faces is not a loss of generality.

\begin{restatable}{lemma}{redrawlemma}\label{le:redraw}
If $\langle G', H, \Gamma_H\rangle$ is a positive instance of \PDE, there is a straight-line planar drawing $\Gamma'$ of $G'$ that extends $\Gamma_H$ where all the vertices in $G'-H$ \mbox{lie in interesting faces of $\Gamma_H$.}
\end{restatable}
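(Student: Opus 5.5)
Starting from a straight-line planar drawing $\Gamma$ of $G'$ that extends $\Gamma_H$, we modify it step by step so that every vertex of $G'-H$ lies in an interesting face, without moving any vertex of $H$. Large faces are interesting by definition, so only vertices placed in useless small faces need to be moved. We first classify the components of $G'-H$ drawn inside a useless small face $f$. The face $f$ is bounded by a $3$- or $4$-cycle with exactly two vertices $u,v\in U$. Every other boundary vertex of $f$ is covered and is not an anchor, since $f$ is useless. Such vertices therefore have no neighbours in $G'-H$. Every edge leaving a vertex $w\in V(G')\setminus V(H)$ drawn in $f$ thus goes to another vertex drawn in $f$ or to $u$ or $v$. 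Since $U$ is a vertex cover and $G'$ has no vertices of degree at most $1$ outside $H$ (by biconnectivity considerations, or simply because the degree-$1$ case is trivial), the subgraph drawn in $f$ consists of vertices attached only to $u$, $v$, and each other. We will argue that this subgraph is in fact a set of degree-$2$ vertices in $V\setminus(U\cup V(H))$ with neighbourhood $\{u,v\}$, plus possibly vertices of $U\setminus V(H)$ with their own attachments. After the reduction there is at most one such degree-$2$ vertex per pair, so the vertices to relocate are few and have very restricted attachments.

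The relocation step is as follows. For the pair $(u,v)$, by construction there is a chosen interesting small face $f^*$ incident to $u$ and $v$, either one incident to an anchor or the designated representative. We take the whole sub-drawing $D$ of $G'-H$ inside $f$, together with its edges to $u$ and $v$, and redraw it inside $f^*$. Both faces are small faces with $u,v$ on the boundary, so each contains a small triangular region near the segment side, for instance a thin neighbourhood of a point close to $u$ inside the face. We then apply a projective or affine map that fixes $u$ and $v$ and maps $f$ into a thin wedge of $f^*$ at $u$ and $v$. Concretely, we draw $D$ as a compressed copy near the straight segment $\overline{uv}$, or inside the convex hull of $u$, $v$ and a point very close to $\overline{uv}$. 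If $\overline{uv}$ does not lie in $f^*$, we instead use a thin region hugging the boundary path of $f^*$.

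The main obstacle is guaranteeing that $f^*$ has enough free room, since it may already contain other vertices of $G'-H$, including the degree-$2$ vertex of the pair $(u,v)$ or anchor neighbours. I would handle this by first showing that in any solution the part of the drawing inside a face with boundary $u,a,v(,b)$ that attaches only to $u,v$ can be shrunk arbitrarily close to a path $u$--$z$--$v$ along the boundary of the face. Shrinking must preserve planarity, so I would use a homothety centered at $u$ combined with the fact that the relevant region is star-shaped from $u$. I would then insert it adjacent to the boundary of $f^*$ near that path, in a region not met by existing edges; such a region exists by local finiteness at $u$.

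A second delicate case arises when $D$ contains vertices of $U\setminus V(H)$ attached to further vertices. I expect this to be ruled out, because any neighbour outside $f$ would force an edge crossing the boundary of $f$. Hence everything inside $f$ is confined to neighbourhoods of $\{u,v\}$, and the planarity of the combined drawing follows.
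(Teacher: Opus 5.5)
There is a genuine gap in the relocation step, which is the only non-trivial part of this lemma.

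An affine or projective map that fixes $u$ and $v$ (and is defined on $f$) maps $\ell_{uv}$ onto itself. It therefore cannot turn a drawing that crosses $\overline{uv}$ into one lying on a single side of $\ell_{uv}$. A concrete configuration breaks your construction:
\begin{itemize}
\item Suppose $uv\notin E(H)$ and $f=ua_1va_2$ is a convex useless face containing $\overline{uv}$.
\item Inside $f$ put $x\in U\setminus V(H)$ on one side of $\overline{uv}$ and $y$ on the other, both adjacent to $u$, $v$ and to each other.
\item The face $f^*=ua_2va_3$ on the other side of $u$--$a_2$--$v$ is then a dart with reflex vertex $a_2$. The algorithm may well have designated it.
\item This dart lies entirely on one side of $\ell_{uv}$ and contains no triangle with corners $u$ and $v$, so no such map puts $D$ into it.
\end{itemize}

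Your fallback does not repair this. A homothety centred at $u$ moves $v$, so the edges from the shrunk copy to $v$ are new segments and planarity is not inherited. Also, small faces are not star-shaped from $u$ in general: in the dart $f^*$ above, $u$ does not see the points near $v$. Your local-finiteness argument does give an empty thin sliver along a boundary path $u$--$a$--$v$ of $f^*$, since $a$ has no edges into $f^*$. But that sliver is a non-convex quadrilateral with $u$ and $v$ at its tips. You have no tool for drawing an arbitrary plane graph attached at $u$ and $v$ inside it with straight lines and $u,v$ fixed. F\'ary/Tutte-type results with a prescribed outer polygon need convexity, and since the shape of $f$ is unrelated to that of $f^*$, nothing can be transported rigidly.

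The missing ingredient is a straight-line drawing result for a prescribed non-convex outer boundary. The paper's proof uses, as you do, that the components of $G'-H$ in the useless faces for the pair $(u,v)$ attach only to $u$ and $v$. It then proceeds as follows:
\begin{itemize}
\item It merges all of them, together with whatever already lies in the designated anchor-free face $f^*$, into a single plane graph whose outer cycle is the facial cycle of $f^*$. This is possible because every component attaches only to $u$ and $v$, so they can be nested side by side.
\item It redraws this graph from scratch inside $f^*$ with Hong and Nagamochi's algorithm for drawings whose outer face is a given star-shaped polygon. This applies because every triangle and every simple quadrilateral is star-shaped.
\end{itemize}
Because the old content of $f^*$ is redrawn as well, your ``free room'' problem disappears, and no shrinking or local-finiteness argument is needed.
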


\begin{proof}[Sketch]
In a solution, a useless face might only contain components of $G'-H$ attached to two vertices $u,v$ in $V(H)\cap U$. We redraw such components via Hong and Nagamochi's algorithm~\cite{DBLP:journals/dam/HongN08} in the interesting small face \mbox{with no anchor incident to $u$ and $v$.}
\end{proof}

To test whether $\langle G', H, \Gamma_H\rangle$ is a positive instance of \PDE we can then proceed as follows. We consider all possible $(27k)^{6k} \in k^{O(k)}$ assignments of the at most $6k$ vertices of $G'-H$ to the at most $27k$ interesting faces of $\Gamma_H$. For each such an assignment, we test, for each interesting face $f$, whether the instance of \PDEinF or of \PDEoutF (if $f$ is the outer face of $\Gamma_H$) determined by the assignment is a positive instance by \cref{th:pde-in-and-out}. If any such a test is successful, then we report that $\langle G', H, \Gamma_H\rangle$ and hence $\langle G, H, \Gamma_H\rangle$ is a positive instance of \PDE; otherwise, we reject the instance. This concludes the proof of \cref{th:pde-fpt-vc-variable}.

\section{Conclusions and Open Questions}\label{se:conclusions}

In this paper, we studied the \PDE problem, which asks to decide the extensibility of a partial straight-line planar drawing.
We proved that it is \NP-hard even for instances $\langle G,H,\Gamma_H\rangle$ in which $H$ is biconnected and only a set of length-$2$ paths remain to be drawn. We also tackled \PDE with $G$ having a fixed planar embedding and showed that the instances that we used to prove \NP-hardness are polynomial-time solvable in this setting. However, the computational complexity of \PDE for general instances $\langle G,H,\Gamma_H\rangle$ such that $H$ is biconnected and $G$ has a fixed planar embedding remains open. We find particularly fascinating the question of whether \PDE is polynomial-time solvable if $G$ is an internally-triangulated plane graph and $H$ consists solely of the cycle delimiting its outer face. Even proving membership in \NP~for this problem seems to be an elusive goal. 
Concerning parameterized algorithms, we proved that \PDE is FPT with respect to the vertex cover. It would be interesting to extend such a result to stronger parameterizations. In this direction, we proved that no $2^{o(tw)}$-time algorithm is possible, unless ETH fails, where $tw$ denotes the treewidth of the input graph $G$.

\bibliographystyle{abbrv}
\bibliography{bibliography}
\end{document}

%% file: fancyProblem.tex
\usepackage{framed}
\usepackage{tabularx}

\usepackage{tikz}
\usetikzlibrary{calc}

\newlength{\RoundedBoxWidth}
\newsavebox{\GrayRoundedBox}
\newenvironment{GrayBox}[1]%
   {\setlength{\RoundedBoxWidth}{0.95\columnwidth}
    \def\boxheading{#1}
    \begin{lrbox}{\GrayRoundedBox}
       \begin{minipage}{\RoundedBoxWidth}}%
   {   \end{minipage}
    \end{lrbox}
    \begin{center}
    \begin{tikzpicture}%
       \node(Text)[draw=black!20,fill=white,rounded corners,inner xsep=2ex,inner ysep=2ex,text width=\RoundedBoxWidth]
             {\usebox{\GrayRoundedBox}};
        \coordinate(x) at (current bounding box.north west);
        \node [draw=white,rectangle,inner sep=3pt,anchor=north west,fill=white]
        at ($(x)+(6pt,.75em)$) {\boxheading};
    \end{tikzpicture}
    \end{center}}

\newenvironment{defproblemx}[2]{\noindent\ignorespaces%
                                \FrameSep=6pt%
                                \parindent=6pt
                \vspace{-3mm}            
                \begin{GrayBox}{#1}%
                \begin{tabular*}{0.98\columnwidth}{!{\extracolsep{\fill}}@{} >{\itshape} p{#2} p{0.87\columnwidth} @{\hspace{.5em}}}%
            }{\\[-1.5ex]
                \end{tabular*}%
                \end{GrayBox}%
                \ignorespacesafterend
                \vspace{-4mm}
            }